\newtheorem{theorem}{Theorem}
\newtheorem{lemma}{Lemma}
\newtheorem{corollary}{Corollary}
\theoremstyle{definition}
\newtheorem{definition}{Definition}
\DeclareMathOperator*{\argmax}{arg\,max}
\DeclareMathOperator*{\polylog}{polylog}
\algnewcommand{\IfThenElse}[3]{
  \State \algorithmicif\ #1\ \algorithmicthen\ #2\ \algorithmicelse\ #3}
\newcommand{\Fin}{\ensuremath{\mathcal{F}^{(in)}}}
\newcommand{\Fout}{\ensuremath{\mathcal{F}^{(d)}}}
\renewenvironment{proof}{\begin{IEEEproof}}{%
    \end{IEEEproof}
}
\begin{document}

\title{Local Fast Rerouting with Low Congestion:\\A Randomized Approach}

\author{Gregor~Bankhamer,~Robert~Els\"asser and Stefan~Schmid\thanks{
Gregor Bankhamer and
Robert Els\"asser are affiliated with the Department of Computer Sciences of the University of Salzburg, Austria.
Stefan Schmid is affiliated with the Faculty of Computer Science of the University of Vienna, Austria and with TU Berlin, Germany.
We would like to thank our anonymous reviewers for their helpful remarks that helped us to improve our paper.
This research was supported in part by the 
Vienna Science and Technology Fund (WWTF)
under grant number ICT19-045, 2020-2024 (project WHATIF) and by the European Union’s Horizon 2020 research and innovation programme under Grant Agreement no. 824115 (HiDALGO).}}

\maketitle

\sloppy

\begin{abstract}
	Most modern communication networks 
include fast rerouting mechanisms, implemented entirely in the data plane,
to quickly recover connectivity after link failures.
By relying on \emph{local} failure information only, 
these data plane mechanisms provide very fast 
reaction times, but at the same time introduce
an algorithmic challenge in case of multiple link failures: 
failover routes need to be
robust to additional but locally unknown failures
downstream. 

This paper presents local fast rerouting
algorithms which not only provide a high degree 
of resilience against multiple link failures, 
but also ensure a low congestion on the resulting failover
paths. We consider a randomized approach and
focus on networks which are highly connected before the failures occur. 
Our main contributions are three simple algorithms which come
with provable guarantees and provide interesting resilience-load tradeoffs,
significantly
outperforming \emph{any} deterministic fast rerouting algorithm with high probability.
\end{abstract}

\section{Introduction}

Emerging applications, e.g., in the context of industrial, tactile or 5G networks,
come with stringent latency and 
dependability requirements. To meet such requirements,
Fast Re-Route (FRR) mechanisms have been specified for many 
networks~\cite{ipfrr,mplsfrr,offrr,srfrr}:
\emph{local} failover mechanisms in the data plane
which avoid the time-consuming advertisement and collection of 
failure information and re-computation of routes
in the control plane~\cite{isis,rfc2328}. 
Rather, these mechanisms rely on a \emph{pre-defined} logic, 
 often implemented in terms of conditional
failover rules~\cite{offrr}. 
For example, wide-area networks often use IP Fast Reroute~\cite{ipfrr} or MPLS~\cite{mplsfrr}
Fast Reroute to deal with failures on the data plane, 
the Border Gateway Protocol (BGP) uses on BGP-PIC~\cite{filsfils2011bgp} for quickly
rerouting flows, many data centers use Equal Cost MultiPath (ECMP)~\cite{kabbani2014flowbender} which provides automatic 
failover to another shortest path, and 
Software-Defined Networks (SDNs)
provide FRR functionality in terms of OpenFlow fast-failover
groups~\cite{offrr}, among many others~\cite{franccois2014topology}.

However, while FRR mechanisms are attractive 
and widely used to deal with single failures, they
introduce an \emph{algorithmic challenge} in the presence
of \emph{multiple} link failures, as they are 
common in large networks such as datacenter
and Internet networks~\cite{gill2011understanding,concur2,elhourani2014ip}:
rerouting decisions need to be made based on \emph{incomplete} information
about the failure scenario, and in particular, 
about failures \emph{downstream}.
The problem becomes particularly challenging if 
the rerouted flows should not only preserve connectivity
under failures but also a low \emph{load}, an important
criteria in practice: congested routes threaten
dependability and indeed, congestion is a main concern
of any traffic engineering algorithm.

Recently, a series of negative results have been obtained
on what can be achieved using \emph{deterministic} fast rerouting
algorithms (e.g.,~\cite{podc-ba,foerster2020feasibility}). In particular,
it has been shown that even on networks which are still 
highly-connected after failures, 
the congestion resulting from \emph{any} deterministic local fast failover
algorithm is bound to be high in the worst case, i.e., 
\emph{polynomial}
in the number of link failures~\cite{opodis13shoot,borokhovich2018load}. 

This paper initiates the study of \emph{randomized} algorithms to 
provide high resiliency and low congestion at the same time.
In particular, we  show that using a randomized approach,
the congestion can be reduced from polynomial to polylogarithmic, with high probability, hence breaking deterministic congestion lower bounds.

\subsection{Model in a Nutshell}

In a nutshell, 
we consider the fundamental problem of congestion-minimal fast rerouting
on a complete undirected network $G=(V,E)$, where each pair of the $n$ nodes (e.g., switches, 
routers, or hosts) is directly connected (i.e., the network forms a clique). 
Such complete networks are typically studied in the related work 
and can be seen as an approximation of highly-connected networks as they arise, e.g., in the context of datacenters.

The network links (henceforth called \emph{edges}) of $G$ are subject
to multiple concurrent failures, determined by an  adversary
and the goal is to pre-define local failover rules
for the different nodes $V$ such that traffic is rerouted to the destination
while balancing the network \emph{load}.
A failover rule is essentially a \emph{match-action} 
forwarding rule which not only \emph{matches} certain header fields
of the arriving packet (e.g., the IP destination address),
but which can also be conditioned on the link failures incident to a given node  $v \in V$,
thereby specifying for which packets the rule is triggered;
the \emph{action} part then defines to which link the packet needs to be forwarded
accordingly. 
These rules are static, i.e., the routing table is not allowed to be updated 
during the whole routing procedure.
To asses the performance of our protocols, we revisit the challenging (and practically relevant~\cite{wu2012ictcp,handley2017re}) 
in-cast scenario where
$n-1$ sources inject one indefinite flow each to a single destination $d$ which is known to the adversary~\cite{infocom19casa,opodis13shoot,borokhovich2018load,icnp19}. In our empirical analyses we additionally consider the so called gravity model \cite{gravitymodel}, and evaluate adaptions of our algorithms in the Clos fat-tree topology against state-of-the-art approaches.

\subsection{The Deterministic Case Lower Bound}

The authors of \cite{opodis13shoot} showed that deterministic failover algorithms
are bound to result in a high load even in case of an initially completely connected network
which is still highly connected after the failures~\cite{opodis13shoot}.
The proof has been generalized further by 
Pignolet et al.~in~\cite{borokhovich2018load}.
More specifically, 
the authors showed that: (1) 
when only relying on destination-based failover rules (i.e., rules
which can only match the IP destination of a packet), 
an adversary can always induce a load of $\Omega(\varphi)$ at some edge 
by cleverly failing $\varphi$ edges;
(2) when failover rules can also depend on the 
source address, an edge load of $\Omega(\sqrt{\varphi})$ 
can be achieved, when failing $\varphi$ many edges. 

When considering the node load only, this bound can be extended and accounts for further information that may be used by the routing rules. Particularly, if we require that some packet starting from node $v$ takes the same path under the same set of underlying edge failures (i.e., the packets' paths are oblivious and may not change depending on the other traffic moving around the network), a node load of $\Omega(\sqrt{\varphi})$ can be generated by the adversary.  Note that this extension allows for including the hop counter inside the routing rule without weakening the result of the lower bound.

\subsection{Our Results}

The main contribution of this paper are three randomized
fast rerouting algorithms which not only provide a high
resilience to multiple link failures but also
an exponentially lower load than any possible deterministic algorithm.

We present three failover strategies. 
Assuming up to $\varphi= O(n)$ edge failures, 
the first algorithm ensures that a load of $O(\log n \log \log n)$ is not exceeded at most nodes, while the remaining $O(\polylog n)$\footnote{By $\polylog n$, we denote the family of functions in $n$ which lie in $O(\log^p n)$ for any constant $p> 0$.} nodes reach a load of
at most $O(\polylog n)$. As we consider randomized approaches, we require the above statement to hold \emph{with high probability} \footnote{We use the well established notion of \emph{with high probability}, or w.h.p.~, to denote probability of at least $1-n^{-\Omega(1)}$.}.
The second approach we present reduces the edge failure resilience to $O(n/\log n)$, however it is purely \emph{destination-based} and achieves a congestion of only $O(\log n \log \log n)$ at \emph{any} node w.h.p.
Finally, by assuming that the nodes do have access to $\polylog n$ bits of shared information, which are not known to the adversary, the node load can be reduced even further. That is, a maximum load of only $O(\sqrt{\log n})$ occurs at \emph{any} node w.h.p. 
All three strategies ensure loop-freedom w.h.p. \cite{clad2014disruption} and avoid packet reorderings (i.e., all packets of the same flow are forwarded along the same path).

While our focus lies on complete networks, which constitute a major open problem in the literature today, we show how our first two protocols may be adapted to the widely used Clos datacenter topology  \cite{clos,singh2015jupiter}. More precisely, we consider the Clos topology with 3 layers sometimes also simply referred to as fat-tree topology. Besides datacenters, such fat-tree topologies are also employed in some HPC systems, e.g. in the tier-0 supercomputer SuperMUC-NG\footnote{\url{https://doku.lrz.de/display/PUBLIC/SuperMUC-NG}}. We then report on empirical insights obtained through simulations and compare our approaches to other state-of-the-art failover protocols \cite{infocom19casa,DBLP:journals/ton/ChiesaNMGMSS17}. The extension of our protocols to general topologies remains an open problem. However, this may be possible with the help of network decompositions based on
spanning arborescences: 
it is known that any $k$-connected graph can be spanned by $k$ arc-disjoint arborescences~\cite{edmonds1973edge} (such a decomposition can be computed efficiently~\cite{bhalgat2008fast}),
which enables a loop-free resilient routing~\cite{Chiesa2014}.  
The idea is then to use our approach to balance flows across arborescences. Finally, note that our third protocol is mostly of theoretical interest: extending it to general topologies would require to compute a set of Hamilton cycles for each such topology, which is  NP-complete \cite{GareyJ79}. 

\subsection{Further Related Work}
\label{sec:related}

Link failures are the most common failures
in communication networks~\cite{frr-survey,link-failures-ip-backbone,failures-uninett}
and
it is well-known that 
ensuring connectivity via the control plane
can be slow~\cite{ensure-dconn-nsdi13,podc-ba},
even if it is centralized~\cite{Yang14}
or based on link reversal~\cite{gafni-lr,welch2011link}; 
it may also introduce undesirable transient behavior, such as a high loop ratio \cite{clad2014disruption}. 
Data plane based failover mechanisms which do not require
table reconfigurations can be orders of magnitudes faster~\cite{podc-ba}
but are algorithmically challenging as routing tables need to be precomputed
without knowledge of failures.
For a general overview on failover mechanisms in the data plane, we refer 
to the recent survey by Chiesa et al.~\cite{chiesa2020fast}, referencing over
two hundred papers on the topic, several of them published at IEEE/ACM Transactions
on Networking~\cite{menth2009resilience,ref28,cohen2009maximizing,qiu2010local,r38,ref11,DBLP:journals/ton/ChiesaNMGMSS17,clad2013graceful,kvalbein2008multiple,cho2011independent,ref27,elhourani2016ip,borokhovich2018load}.

Except for one model, we are interested in fast failover mechanisms in the data plane
which do not require the modification of packet headers: while the modification of packet headers
can simplify ensuring connectivity (e.g., by carrying failure information in the header)~\cite{fcp,elhourani2014ip},
packet header rewriting typically comes with overheads and may even be infeasible~\cite{podc-ba,plinko-full,robroute16infocom}. Furthermore, while several interesting heuristics have been proposed 
in the literature~\cite{Yang14}, 
we are concerned with mechanisms which come with formal 
(probabilistic) performance guarantees.   
In particular, we are interested in scenarios in which \emph{multiple} links can fail simultaneously; that is, in addition to ensuring traditional properties such as a perfect protection ratio~\cite{francois2005evaluation} (i.e., ensuring resilience against any single failure), we aim to preserve connectivity and low load even under a large number of failures. 

Prior work already derived several fundamental results on the feasibility of
preserving connectivity using deterministic local fast failover mechanisms, in the presence of multiple failures and on the routing level, in different settings. 
In particular, Feigenbaum et al.~\cite{podc-ba} 
proved that it is not possible to achieve a  \emph{perfect (static) resilience}
in arbitrary networks using deterministic local fast rerouting and without header 
rewriting: 
it is impossible to define failover rules such that connectivity
is preserved on the routing level as long as the network is physically
connected. These results were recently extended 
by Foerster et al., who derive more general negative and positive results, 
also considering planar graphs in more details~\cite{foerster2020feasibility}. 
In~\cite{DBLP:journals/ton/ChiesaNMGMSS17}, 
Chiesa et al.~conjecture~\cite{icalp16,DBLP:journals/ton/ChiesaNMGMSS17} that is at least always 
possible to deterministically achieve what they call \emph{ideal resilience}:
unlike perfect resilience which requires connectivity on the routing level as long
as the underlying  \emph{arbitrary} network is connected, 
ideal resilience focuses on $k$-(edge-)connected networks and requires connectivity
on the routing level as long as there are at most $k-1$ 
link failures. Today, it is still unknown whether this conjecture holds in general, however,
at least it has been proved true for several special graph classes as well as for
scenarios with at most $k/2$ failures~\cite{Chiesa2014}.
Furthermore, Chiesa et al.~\cite{icalp16,DBLP:journals/ton/ChiesaNMGMSS17} showed that
ideal resilience can be achieved using randomized algorithms, 
by routing along precomputed spanning arborescences and by switching to a random
alternative arborescence when encountering a failure \cite{bhalgat2008fast}.

The papers discussed above primarily focus on preserving connectivity, and much less is known
about the design of failover algorithms which also account for load. More specifically, while there exist several results on deterministic algorithms for the incast scenario considered in this paper~\cite{dsn19,infocom19casa,foerster2020feasibility,podc-ba}, we are the first to study randomized algorithms and we show that the resulting load can be significantly better than the deterministic lower bound. Motivated by our empirical results on the Clos fat-tree topology (cf.~Section \ref{sec:simul}) described in this paper, in a follow-up conference submission we theoretically analyzed an adapted version of the \emph{Intervals} protocol from Section \ref{sec:interval}, see \cite{BES21}.

\noindent \emph{Bibliographic note.} A preliminary version (without 
all technical details) was presented at IEEE ICNP 2019~\cite{icnp19}. 

\section{Model}
\label{sec:model}
We model the communication network as a complete undirected graph $G=(V,E)$ 
where the nodes $V$ represent the switches or routers which need to be configured 
with static forwarding rules
(in this paper sometimes also simply called routing rules)
and where the links $E$ can fail. 

In particular, we consider three different models, i.e., types of (match-action) \emph{rulesets}, 
which are of increasing power depending on the information
on which the rules can depend (the match part) and the information which they can
change in the packet header (the action part):
\begin{itemize}
	\item \textbf{Destination address:} Rules can only match the destination address (e.g.,
	the IP destination) of the packets. Packets cannot be modified.
	
	\item \textbf{Hop count:} In addition to the destination address, rules can 
	match the hop count: how far the packets have travelled so far.
	
	\item \textbf{Hop count modification:} One of our algorithms (\emph{Shared-Permutations} -- see \cref{sec:hop}) can additionally modify the hop count to arbitrary $O(\log n)$ bit values. Note that this rule is not needed for the first two algorithms we develop.
\end{itemize}
In addition to header information, the matching part of each above rule may also depend on the 
\emph{local link failures}, the link failures incident to the given node $v$,
but not on other remote failures which are not known at the configuration time.
While the first model is the standard model used by routers and supported generally,
the latter two models require software-defined switches or routers, e.g., based
on OpenFlow \cite{offrr}. 

We consider an adversarial model and assume that the link failures are chosen 
by an  adversary. 
More specifically, we assume that the failover rules are generated
by a randomized algorithm (or rely again on a hash function which matches the hop count),
and that the adversary is \emph{oblivious}: it knows the failover protocol including the 
used probability distributions,  but not the generated random values nor the resulting loads.
(Later in this paper we briefly discuss even stronger adversaries.)

In order to assess the performance of our protocols, we consider the \emph{all-to-one} traffic pattern,
in which each node $v \in V \setminus \{d\}$ sends a single \emph{flow} towards some common destination $d$. For each node $v$,
such a flow is defined as an indefinite sequence of packets with source $v$ and destination $d$. The goal is to minimize the \emph{load} of any link (or node) in the network,
which is defined as the number of flows crossing this link (or node). In case such a flow hits a link multiple times
the load of this edge is increased by $1$ for each such hit. An example of the outcome of all-to-one routing in the $K_5$ graph is given in \cref{fig:all-to-one}.
\begin{figure}
    \centering
    \includegraphics{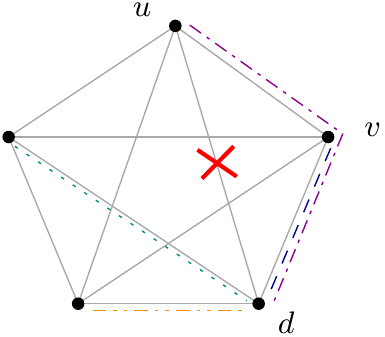}
    \caption{All-to-one routing in the complete graph $K_5$. Each node sends one flow towards destination $d$, each corresponding to one of the colored lines in non-solid style. Because the link $(u,d)$ is failed the flow of $u$ needs to take a detour. This causes the edge $(v,d)$ to accumulate a load of 2.}
    \label{fig:all-to-one}
\end{figure}
While, w.h.p., our protocols avoid packets to visit a node more than once, it may happen that some flows travel in a (temporary) forwarding loops for a small amount of hops. 
We ensure that packets of a flow are always forwarded along the same path, hence avoiding packet reorderings.

\section{Beating Deterministic Approaches with Three Permutations}

\label{sec:destination}	

This section presents our first failover algorithm.
While it is simple as forwarding is only based on the destination and hopcount header fields, it ensures w.h.p.~very low loads
even under a large number of link failures.

From the point of view of some fixed node $v$ the first protocol, we call it \emph{3-Permutations}, works as follows. For destination $d$, the node $v$ stores three permutations $\pi_{v,d}^{(1)}$, $\pi_{v,d}^{(2)}$ and $\pi_{v,d}^{(3)}$ of all nodes $u \in V \setminus \{d\}$. Each node chooses these three permutations uniformly at random. Upon receiving a packet $p$ intended for $d$, the node $v$ 
first tries to forward it directly via the link $(v,d)$. In case this link failed, $v$ inspects the current hop counter of $p$, denoted by $h(p)$. Depending on $h(p)$, the node $v$ then chooses one of the three permutations $\pi_{v,d}^{(i)}$ and forwards the packet to the first \emph{reachable} partner $w$ in this permutation. We call a node $w$  \emph{reachable} from $v$, if the direct link $(v,w)$ is not failed.  The criteria for selecting which permutation to use is simple. In case $h(p) < C_1$ for a value $C_1 = \Theta(\log n)$, permutation $\pi_{v,d}^{(1)}$ is consulted. For $C_1 \leq h(p) < 2 C_1$ the permutation $\pi_{v,d}^{(2)}$ is used and in any remaining case $\pi_{v,d}^{(3)}$ is utilized. In any case the packets hop counter is increased by $1$ before handing it to the next node. A concise description is given in \cref{alg:dest}.
	
Our main contribution is related to the way how these permutations are selected. Instead of opting for a deterministic protocol, we assume that each node $v$ chooses the permutations $\pi_{v,d}^{(i)}$ out of all possible permutations of nodes $u \in V \setminus \{d\}$ uniformly and at random. As the adversary is \emph{oblivious}, these permutations are not known to it and it needs to essentially blindly select edges for manipulation. Note however that this approach comes with a challenge. This random creation of failover routes may introduce temporary cycles into the packets routing paths. However, most of the packets $p$ reach the destination $d$ solely relaying on the failover entries given by the first permutation, $\pi_{v,d}^{(1)}$. And, only in case $p$ ends up trapped in a cycle, further permutations are used to allow it to escape said cycle.  We show that w.h.p., at most $O(\log^2 n \log \log n)$ load is accumulated at any node, even if the adversary is allowed to destroy a linear amount of edges.
	
\begin{figure}
\begin{algorithmic}[1]
\renewcommand{\algorithmicrequire}{\textbf{Input:}}
	\Require A packet with destination $d$ and hop count $h(p)$
	\If {$(v,d)$ is intact} forward $p$ to $d$ and \algorithmicreturn 
	\Else{ set $i$ to $\argmax_{j \in \{1,2,3\}} \{ h(p) \geq (j-1) C_1 \}$ and send $p$ to first directly reachable node in $\pi_{v,d}^{(i)}$}
	\EndIf
	\State $h(p)\mathrel = h(p) + 1$
\end{algorithmic}
\caption{\emph{3-Permutations} protocol. Point-of-view of some node $v$}
\label{alg:dest}
\end{figure}

	\begin{theorem}
	\label{thm:destination}
		Assume that the adversary fails  at most $\alpha \cdot n$ edges where $\alpha <1$ is a non-negative constant\footnote{More specifically, $\alpha$ can be an arbitrary constant with $0 < \alpha < (n-1) / n$. Note that this upper-bound quickly tends towards $1$ for large $n$.}. Then, if all nodes perform all-to-one routing to any destination $d$ and follow the \emph{3-Permutations} protocol, a maximum of 
		\[
			O( \log n \cdot \log \log n)
		\]
		flows passes at all but $O(\log^2 n)$ nodes. Furthermore, all remaining nodes, except for $d$, receive a load of at most $O(\log^2 n \cdot \log \log n)$ and every packet travels $O(\log n)$ hops. These statements hold w.h.p.
	\end{theorem}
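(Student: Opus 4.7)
My plan is to isolate a single packet, analyze its trajectory as a short random walk on the set $B := \{v \in V \setminus \{d\} : (v,d) \text{ is failed}\}$ of nodes that cannot directly reach the destination, and then sum the resulting per-flow hop contributions over the $n-1$ flows using a balls-into-bins concentration argument. The crucial structural observation is that the first reachable entry of a uniformly random permutation is itself uniformly distributed over the reachable non-$d$ neighbors of the current node, so conditioned on the failure pattern each forwarding step in phase $i$ behaves like a fresh uniform random neighbor choice, provided the packet has not already visited the current node.

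First I would bound the per-hop behavior. Since each non-$d$-incident failure contributes to two $f_v$ counters while each $d$-incident failure contributes to one, $f_v \leq \alpha n - |B|$ for every $v$, so the number of reachable non-$d$ neighbors of any $v \in B$ is at least $(1-\alpha)n - 2 + |B|$. This yields a one-step escape probability from $B$ of at least
\[
 1 - \frac{|B|}{(1-\alpha)n - 2 + |B|} \;\geq\; 1 - \alpha - o(1),
\]
a positive constant that I will call $\beta$. Second, I would bound the length of the walk: along a path that visits pairwise distinct nodes the escape events are independent $\mathrm{Bernoulli}(\beta)$ trials, so after $C_1 = \Theta(\log n)$ hops the non-escape probability is $n^{-\Omega(1)}$, while the probability that the walk revisits some previously seen node within $C_1$ steps is only $O(C_1^2/n) = O(\log^2 n / n)$ by a birthday-style union bound. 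Packets that either revisit a node or fail to escape within $C_1$ steps get salvaged by phase 2, whose fresh independent permutations $\pi_{v,d}^{(2)}$ at each newly visited node provide the same geometric escape; phase 3 absorbs the vanishing remainder. A union bound over the $n-1$ flows then certifies a maximum walk length of $3 C_1 = O(\log n)$ and that at most $O(\log^2 n)$ flows ever activate phase 2, both w.h.p.

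Given this, I would aggregate node loads as follows. The total number of forwarding events is $O(n \log n)$ and each event (apart from the final direct hop to $d$) chooses its target almost uniformly from a set of size $\Theta(n)$. Coupling with the standard balls-into-bins heavy-load regime of $\Theta(n \log n)$ balls into $n$ bins, a Chernoff bound combined with a union bound over the $n$ nodes gives a maximum node load of $O(\log n \log \log n)$ w.h.p. The only additional contribution comes from the $O(\log^2 n)$ phase-1 cycling packets: each of them deposits up to $O(\log n)$ extra hops concentrated on the $O(\log^2 n)$ cycle-bearing nodes, contributing at most $O(\log^2 n \cdot \log \log n)$ to those nodes' loads, which is precisely the relaxed bound claimed for the exceptional nodes in the statement.

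The main obstacle will be handling the randomness dependencies cleanly. A single packet's ``random'' choices stop being independent the moment it returns to a visited node (the permutation stored there is fixed), and different packets passing through the same $v \in B$ reuse the permutation $\pi_{v,d}^{(1)}$. I would manage the first through a deferred-decisions coupling that reveals each permutation entry only the first time it is probed by any flow, and the second by showing that the indicator variables contributing to a fixed node's load are either independent across distinct source permutations or negatively associated within a single permutation, so that Chernoff-type concentration still applies. Balancing these dependencies against the desired w.h.p. guarantees, together with accommodating the few packets forced into phases 2 and 3, is what ultimately produces the $\log \log n$ slack in the final bound.
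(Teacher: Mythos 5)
The random-walk framing of a single packet's trajectory is sound and does recover the $O(\log n)$ hop bound via the constant escape probability $\beta$; that part parallels \cref{lem:dest-path-length} and \cref{lem:dest-with-inner-edges}. The gap is in the load analysis. Your reduction to ``$\Theta(n\log n)$ balls into $n$ bins'' treats each forwarding event as a fresh, approximately independent choice of target, but that is exactly what the protocol does \emph{not} do. At a fixed bad node $v$ the permutation $\pi_{v,d}^{(1)}$ is a single random object: every flow that ever visits $v$ in phase~1 is forwarded to the \emph{same} next node. So if $v$ has already collected $k$ flows, those $k$ flows (plus $v$'s own) arrive at $\pi_v^{(1)}(1)$ as a perfectly correlated bundle. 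There are at most $|V_B|\leq\alpha n$ independent forwarding choices in the whole of phase~1, not $\Theta(n\log n)$, and the load at a node $w$ is the size of the in-component (subtree) of $w$ in the random out-functional graph $G'$, a quantity that experiences multiplicative amplification as flows concentrate. The standard heavy-load balls-into-bins bound does not apply here, and no negative-association patch fixes it: the correlation that threatens your Chernoff step is \emph{positive} (perfect, in fact, among flows sharing an intermediate node), not negative. Your own diagnosis of the ``main obstacle'' circles this issue but proposes the wrong remedy.

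The paper's proof exists almost entirely to handle this amplification. \cref{lem:dest-level-binomial}--\cref{lem:tree-bound} define the layer sequence $\{X_i\}$ of a subtree rooted at a good node, show each layer is binomial with mean shrinking by a constant factor $\beta<1$, and then control $\sum_i X_i$ by a decomposition into geometric size-intervals $[C\log n\,\hat\beta^j,\,C\log n\,\hat\beta^{j-1})$ with a per-interval cost of $O(\log n)$, summed over $O(\log\log n)$ intervals. That is where the $O(\log n\log\log n)$ bound on a subtree (hence on a non-cycle node's load) actually comes from. Your balls-into-bins step silently assumes this sum concentrates like a sum of independent indicators, which it does not. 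To repair the argument you would need to replace the balls-into-bins aggregation with a direct bound on in-component sizes of the random mapping induced by the $\pi_v^{(1)}(1)$ choices, then repeat that analysis for $G^{''(2)},G^{''(3)}$ after accounting for how cycle-trapped flows re-enter the next system (the content of \cref{lem:dest-cycle-load} and \cref{lem:dest-cycle-load-assumption}). Your accounting for the exceptional nodes is also loose: it is not ``$O(\log^2 n)$ cycling packets each depositing $O(\log n)$ hops over $O(\log^2 n)$ nodes'' but rather a single cycle component being entered by $O(\log n\log\log n)$ flows, each circulating $O(\log n)$ times before its hop counter hits $C_1$, giving $O(\log^2 n\log\log n)$ load on the $O(\log^2 n)$ cycle nodes.
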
	
	
In order for the nodes to follow this protocol they require the exact value of $C_1$. This value upper-bounds the number of hops needed by any packet to reach the destination $d$, unless it is trapped in a cycle due to the permutation $\pi_{v,d}^{(1)}$. We show in \cref{lem:dest-with-inner-edges} that $C_1$ can be bounded from  above by $16 \log_{(1/\alpha)} n$. If $\alpha$ is not known to the nodes, then $C_1$ can be set to some value in $\omega(\log n)$. This slightly changes the result of \cref{thm:destination} where up to $O(\log^2 n)$ many nodes receive a load of $O(C_1 \cdot \log n \log \log n)$.

The reason why we employ exactly three permutations per destination is as follows. Either a packet ends up in a forwarding loop when forwarded via $\pi_{v,d}^{(1)}$ or it will reach the destination within $C_1$ hops. The same is true for packets being forwarded via $\pi_{v,d}^{(2)}$ and $\pi_{v,d}^{(3)}$. For each such permutation, the probability that the packet ends in a forwarding loop is $\polylog n / n$. Hence, only if the packet ends up in a forwarding loop for all three permutations, it will not reach the destination. The probability for this is $\polylog n / n^3$ as the permutations are generated randomly. Because packets following our protocol take a different path depending on their destination and source node (roughly $n^2$ possibilities), we need to multiply this probability of a bad event by $n^2$. Hence, the probability that \emph{any} packet does not reach its desired destination is $\polylog / n$, which is a low probability event.
When it comes to memory complexity, each node may store $3$ permutations of $n$ nodes for every destination $d$. A naive approach would therefore require
routing tables of size $O(n^2 \log n)$ bits to be prepared for routing to any arbitrary destination $d$.
This can be overcome as follows. First, each node $v$ only computes 3 random permutations $\pi_v^{(i)}$, $i=1,2,3$ on \emph{all} nodes. The permutation $\pi_{v,d}^{(i)}$ for each $d \in V$ is simply $\pi_{v}^{(i)}$, and thus we apply $\pi_{v}^{(i)}$ to obtain our failover strategy regardless of $d$. Note that if the edge $(v,d)$ is not failed, then any packet with target $d$ that reaches $v$ is sent directly from $v$ to $d$. If, however, $(v,d)$ is failed, then such a packet is sent to the first node $w$ in $\pi_{v}^{(i)}$ for which $(v,w)$ is not failed. \label{par:memory} Secondly, 
note that the nodes only consult their permutations up until the first reachable node (see Line 2 of \cref{alg:dest}).
Even if all $\alpha n$ failed edges are incident to the same node $v$, then at least one of the first $3 \log_{(1/\alpha)} n$ nodes in each of the permutations $\pi_{v,d}^{(i)}$ is  directly reachable from $v$ w.h.p. This follows from the fact that the adversary does not know the random bits generated at some node.
Therefore, nodes may truncate the permutations, storing only the first $3\log_{(1/\alpha)} n$ entries of each permutation. In the low probability event that none of the first $3 \log_{(1/\alpha)} n$ is directly reachable from $v$ (due to the failed edges), another $3 \log_{(1/\alpha)} n$ nodes are selected uniformly at random -- without replacement.
	Employing these improvements yields an improved total memory complexity of $O(\log^2 n / \log (1/\alpha)) = O(\log^2 n )$ per node.	
	
In the following, we consider some arbitrary but fixed node $d$ as destination. As we  establish probabilistic guarantees of at least $1 - n^{-(1 + \Omega(1))}$ for the statements in \cref{thm:destination} w.r.t. this fixed node, applying the union bound then implies that the results indeed hold for arbitrary destinations $d$ w.h.p.	

Regarding the tightness of our result, assume all $\alpha \cdot n$ failed edges are so called \emph{destination edges}, i.e., edges incident to the destination. Then, the flow starting at the other end $v$ of such a (failed) edge is first sent to a node selected uniformly at random from the set $V \setminus \{d\}$. The resulting distribution of the load can be seen as the outcome of throwing $\alpha n$ balls into $n-1$ bins \cite{RS98}, and the maximum load immediately reaches $\Omega(\log n / \log \log n)$ at some node w.h.p.

\subsection{Notation and Conventions}
\label{sec:destination-notation}

	 As we consider a fixed destination $d$ we omit it from the indices of our previously defined notation. Additionally, for $i \in \{1,2,3\}$ and some integer $1 \leq j \leq n-1$, we denote by  $\pi_v^{(i)}(j)$ the  $j$-th node in $v$'s  $i$-th permutation.

	\begin{definition}[Inner/Destination Edges and Good/Bad Nodes]
		We call each edge $(v,d)$  with $v \in V$ a \emph{destination edge} as it is incident to the destination $d$. All remaining edges are called \emph{inner edges}. Furthermore, we call a node $v \in V \setminus \{d\}$ \emph{good} if the destination edge $(v,d)$ is not failed. Otherwise, we call $v$ a \emph{bad} node. By $V_G$ and $V_B$ we denote the sets of good and bad nodes, respectively.
	\end{definition}
	
	The intuition behind calling such a node \emph{good} is that it may directly forward packets to $d$ when following the protocol in \cref{alg:dest}.
Additionally, we define $\mathcal{F}$ to be the set of failed edges and further partition this set into $\Fin$ and $\Fout$. The former contains all failed inner edges, the latter the failed destination edges. We let $\varepsilon$ and $\gamma$ be constants such that  $|\Fout| \leq \varepsilon \cdot n$ and $ |\Fin| \leq \gamma \cdot n$, where $\varepsilon + \gamma \leq \alpha < 1$. 

	If we state that we apply Chernoff bounds for a random variable $X$, we mean the multiplicative variant $P[X \geq (1+\delta)\mu ] \leq \exp{(\min\{\delta,\delta^2\} \mu / 3)}$, where $\mu=E[X]$ and $\delta > 0$. For lower tails we use $P[X \leq (1 -\delta) \mu ] \leq \exp{( \delta^2 \mu /2)}$ for $0 < \delta < 1$ (see e.g. \cite{Drr2020}). Similar, if we say we apply \emph{union bounds}, we mean Boole's inequality  \cite{Drr2020}. For a set of probabilistic events, this bound states that the probability of at least one event in this set occurring is no greater than the sum of the probabilities of the individual events.
Besides w.h.p., we introduce the following abbreviations: Instead of \emph{with probability}, \emph{uniformly at random} and \emph{random variable}, we use w.p., u.a.r.~and r.v., respectively. We denote the binomial distribution with $m$ trials and success probability $p$ by $B(m,p)$. Finally, by $\log n$ we denote $\log_2 n$.	
	
\subsection{Analysis}
\label{sec:destination-analysis}

 We first establish some structural properties that describe the paths that the packets take according to our failover strategy. For $i \in \{1,2,3\}$, we define the directed subgraphs $G^{'(i)} = (V \setminus \{d\}, E^{'(i)})$ with edge sets $E^{'(i)} = \{ (v, \pi_v^{(i)}(1)) ~|~ v \in V_B \}$. Under assumption that $\Fin = \emptyset$, this graph depicts the possible paths a packet traverses to either the good nodes $V_G$ or to some possibly existing cycle. It is easy to see that any graph $G^{'(i)}$ hosts multiple trees, each rooted in some $w \in V_G$ as the nodes in $V_G$ are the only ones with out-degree $0$. 
The other components in $G^{'(i)}$ do not contain a node of $V_G$. Instead, they contain a cycle, in which each node on the cycle\footnote{Such a cycle may consists of  a single node $v \in V_B$ if $E^{'(i)}$ contains the edge $(v,v)$. This happens in case of $\pi_v^{(i)}(1) = v$.} is the root of a subtree (see component on the right of \cref{fig:forest}).
The first important result of our analysis is that the size of these structures is at most $O(\log n \cdot \log \log n)$ w.h.p.~

In the next step, we account for the failures in $\Fin$. Here we use the fact the permutations are chosen completely at random. We show that only $O(\log n)$ of all edges in the graphs $G^{'(i)}$ are failed, which reinforces the intuition that failing inner edges has little effect compared to the failure set $\Fout$. This approach allows us to construct the graphs $G^{''(i)}$, which now account for inner edge failures and correctly depict the paths that the packets traverse.
	
	Finally, we put everything together and use the  graphs $G^{''(i)}$ to show the result of \cref{thm:destination}. At this point we also argue that $3$ permutations per node do indeed suffice for \emph{any} packet to be routed to $d$ w.h.p.~
	
\paragraph{Measuring Forests}	
As mentioned we start by analyzing the graphs $G^{'(i)}$. We first consider some fixed $i \in \{1,2,3\}$ and omit the superscript $(i)$. That is, we consider the graph $G'$ together with the edge set $E' = \{ (v, \pi_v(1)) | v \in V_B\}$ and $V' = V \setminus \{d\}$.
 As already discussed, only the existence of some cycles between nodes in $V_B$ prevents $G'$ from being a forest. A rough perspective on $G'$ is given in \cref{fig:forest}.

\begin{figure}
\centering
\includegraphics[scale=0.85]{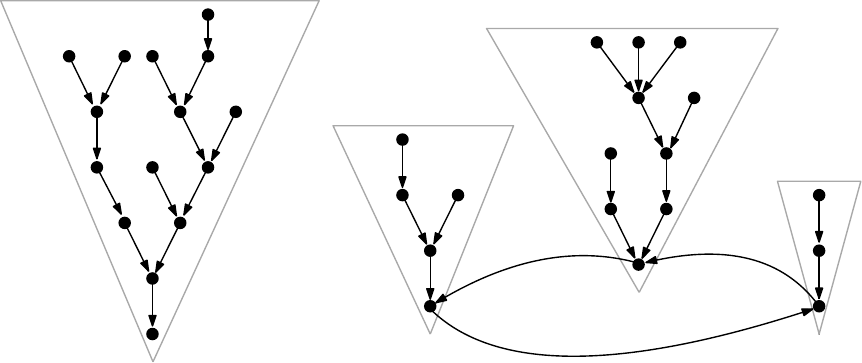}
\caption{The structures contained in the subgraph $G'$. On the left, a tree rooted in some $v \in V_G$ is presented. On the right, we have a cycle and each node of the cycle is again the root root of a tree.}
\label{fig:forest}
\end{figure}

We start by establishing some structural properties of the graph $G'$.
	
	\begin{lemma}
	\label{lem:dest-path-length}
		The graph $G'$ does not contain paths or cycles of length larger than $4 \log_{1/\varepsilon} n$ w.p.\ $1-n^{-3}$. Additionally, the number of  cycles in $G'$ is $O(\log n)$ w.p.~$1-n^{-3}$.
	\end{lemma}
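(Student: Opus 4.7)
My plan is a direct counting and union-bound argument that exploits the fact that each $\pi_v(1)$ for $v\in V_B$ is independent and uniform on $V\setminus\{d\}$. For a fixed integer $k$, I enumerate ordered tuples $(v_0,\ldots,v_k)$ of distinct nodes in $V\setminus\{d\}$ with $v_0,\ldots,v_{k-1}\in V_B$; such a tuple forms a directed simple path of length $k$ in $G'$ precisely when $\pi_{v_i}(1)=v_{i+1}$ for all $i<k$, an event of probability $(n-1)^{-k}$ by independence. Since there are at most $|V_B|^k(n-1)\leq(\varepsilon n)^k n$ such tuples, the expected number of length-$k$ simple paths is $O(\varepsilon^k n)$, and the expected number of length-$k$ cycles is $O(\varepsilon^k/k)$ after dividing by the $k$ cyclic rotations. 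Setting $k=4\log_{1/\varepsilon}n$ makes both expectations $O(n^{-3})$; since any longer path or cycle contains a sub-path/sub-cycle of length $k+1$, Markov's inequality delivers the first claim.

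For the cycle-count claim, let $X$ denote the total number of cycles in $G'$. The per-length bound above gives $E[X]\leq \sum_{k\geq 1}O(\varepsilon^k/k)=O(-\ln(1-\varepsilon))=O(1)$. The key observation is that distinct cycles of $G'$ are vertex-disjoint (each vertex has out-degree at most one in $G'$), and the events ``$C$ is a cycle of $G'$'' for vertex-disjoint candidate cycles are independent since they depend on disjoint sets of random choices. I therefore compute the $t$-th factorial moment,
\[
E[X(X-1)\cdots(X-t+1)] \;=\; \sum_{\substack{(C_1,\ldots,C_t) \\ \text{pairwise disjoint}}} \prod_{i=1}^{t} \Pr[C_i \text{ is a cycle of } G'] \;\leq\; E[X]^t,
\]
the inequality because $E[X]^t=\sum_{\text{all tuples}}\prod_i \Pr[\cdot]$ sums over every (also vertex-overlapping) tuple of candidate cycles and therefore dominates. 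Markov's inequality combined with Stirling's $t!\geq(t/e)^t$ then yields
\[
\Pr[X\geq t] \;\leq\; \frac{E[X]^t}{t!} \;\leq\; \Bigl(\frac{e\cdot E[X]}{t}\Bigr)^{t},
\]
and choosing $t=c\log n$ for a sufficiently large constant $c$ pushes the right-hand side below $n^{-3}$, giving the claimed $O(\log n)$ bound.

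The main obstacle is the second part: because $E[X]$ is only $O(1)$, neither Markov nor a variance-based Chebyshev argument delivers an $O(\log n)$ bound at confidence $1-n^{-3}$ (Chebyshev gives $\text{Var}(X)=O(1)$ and hence only polynomial tails). The factorial-moment route bypasses this by leveraging the vertex-disjointness of cycles in a functional graph: higher factorial moments factorise cleanly, producing a super-polynomially decaying tail $(eE[X]/t)^t$ that becomes $n^{-3}$ already for $t=\Theta(\log n)$.
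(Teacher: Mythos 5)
Your proof is correct and takes a genuinely different route from the paper for both claims, especially the second one. For the path/cycle-length claim, the paper traces the unique out-path from each fixed $v\in V_B$, bounds the probability that it survives $i$ steps by the product $\prod_{j\le i}\frac{\varepsilon n-j}{n-1}<\varepsilon^i$, and union-bounds over starting vertices; your version is a first-moment count over all candidate tuples, which is the same calculation packaged differently and gives the same $\varepsilon^k\cdot n$ bound. The interesting divergence is in the cycle-count claim. The paper sequentializes an edge-revealing process: each $v\in V_B$ closes a new cycle with probability at most $O(\log n/n)$ regardless of the past (a stochastic-dominance argument), the number of cycles is then dominated by $B(|V_B|,O(\log n/n))$ with mean $O(\log n)$, and Chernoff gives concentration around this mean. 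You instead compute $E[X]=\sum_k O(\varepsilon^k/k)=O(1)$, which is actually a tighter expectation bound than the paper implicitly works with, and then obtain tail concentration via factorial moments: the key observation that cycles in a functional graph are vertex-disjoint (hence the indicator events factor over disjoint vertex sets) gives $E[X^{\underline{t}}]\le E[X]^t$, and $\Pr[X\ge t]\le (eE[X]/t)^t$ then drops below $n^{-3}$ at $t=\Theta(\log n)$. This is a clean Poisson-style tail argument, arguably tighter in spirit than the paper's Chernoff-on-dominating-binomial route, though of course both land on the same $O(\log n)$ bound. One cosmetic nit: a cycle of length $>k$ contains a sub-\emph{path} of length $k$, not a sub-cycle, so you should phrase that reduction accordingly, and you should use ``length $k$'' rather than ``$k+1$'' since $k=4\log_{1/\varepsilon}n$ is the length at which your expectation is already $O(n^{-3})$.
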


	\begin{proof}
		Let $v \in V_B$ be an arbitrary node with $(v,d) \in \Fout$ and consider the edges $(u, \pi_u(1))$ of all $(1- \varepsilon)n$ nodes $u \in V_B$. 
Starting at $v$ we uncover the outgoing edges one after the other and follow the resulting path until either a node $w \in V_G$ is reached or a cycle is created.		
This way, a path of at least $i+1$ distinct nodes is traversed w.p.~at most 
		\[
			\frac{\varepsilon n - 1}{n -1} \cdot \frac{\varepsilon n -2}{n-1} \cdot 
			... \cdot \frac{\varepsilon n - i}{n-1} < \varepsilon^i \left( 1 + o(1) \right) < \varepsilon^{i+1}.
		\] 
		Here each quotient corresponds to the probability of continuing the path for one further step without hitting a node $u \in V_B$ that we already visited.  Therefore, our fixed node $v$ is at the beginning of a path of length at least $4 \log_{1/ \varepsilon} n$ with probability at most $n^{-4}$.\\
        To bound the probability that $\emph{none}$ of the nodes in $V_B$ are the origin of such a long path we apply union bounds. As $|V_B|$ such nodes exist, and each of these nodes has probability $n^{-4}$ to form such a path, we conclude that this probability is at most $|V_B| \cdot n^{-4} <n \cdot n^{-4} = n^{-3}$. Note that this also upper bounds the length of the cycles, as one may see a cycle as a path that is concluded with a previously visited node. 
		
		To determine the number of cycles contained in $G'$, consider again some node $v \in V_B$ and the path that is created by the process described above.  Assuming that we are at the last node before the path terminates, there are two possible outcomes: Selecting one of the w.h.p.~at most  $4 \log_{1/\varepsilon} n$ already visited nodes and creating a cycle, or one of the  $(1-\varepsilon) n$ nodes $v \in V_G$. Therefore, a cycle is created by $v$ w.p.~at most
		\[
			\frac{4 \log_{1/ \varepsilon} n}{ (1- \varepsilon) n + 4 \log_{1 / \varepsilon } n}.
		\]

	Now we sequentialize this process for all $v \in V_B$ one after the other. Clearly there exist dependencies as it is possible that $v \in V_B$ is already contained in a path that was already uncovered before. Similarly, the path starting in $v$ might hit the path of another node, which was uncovered earlier. Note however that this only decreases the probability for $v$ to create a new cycle. Therefore w.p.~$p = O(\log n/ n)$ some fixed node $v \in V_B$ creates a new cycle regardless of the other nodes.
	Chernoff bounds immediately yield the desired result.
\end{proof}

Consider again the graph $G' = (V',E')$ with $V' = V \setminus \{ d \}$ and some fixed node $w \in V_G$. Remember that such a node is the root of a tree in $G'$, induced by the edges $(v,\pi_v (1))$ for $v \in V_B$.
Let now $L_i$ denote the set of nodes at level $i$ of $w$'s tree, where $L_0 = \{ w \}$. We now construct $L_1,L_2, ...$ step by step as follows. In the $i$-th step construct $L_i = \{v ~|~ v \in (V_B \setminus \bigcup_{j=0}^{i-1} L_j) \wedge \pi_v(1) \in L_{i-1}\}$. One can see this as constructing the tree $w \in V_G$ layer-by-layer, by adding nodes with outgoing edges connected to nodes in the set $L_i$ in the $i$-th step.
We define the r.v.~$X_i = |L_i|$ and when fixing $w \in V_G$ we  call $\{X_i\}$ the \emph{layer sequence}, or in short \emph{sequence}, corresponding to $v$. The step-wise construction described above directly yields the following lemma.

\begin{lemma}
\label{lem:dest-level-binomial}
	Fix some root node $w \in V_G$ in $G'$ together with its layer sequence $\{ X_i\}$. Then, it holds for level $i$ that $X_i \sim B(m,p)$ with
		\begin{align*}
			m = |V_B| - \sum_{j=1}^{i-1} X_j \text{ and } p = \frac{X_i}{|V'| - \sum_{j=0}^{i-2} X_j}
		\end{align*}
\end{lemma}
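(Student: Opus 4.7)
The plan is to use the principle of deferred decisions, exposing the randomness in the first entries $\pi_v(1)$ for $v \in V_B$ layer by layer. Only these entries enter the definition of $E'$, and by the protocol they are mutually independent across $v$ with each $\pi_v(1)$ uniform on $V' = V \setminus \{d\}$.

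I would proceed by induction on $i$. Assume $L_0, L_1, \ldots, L_{i-1}$ have already been revealed and let $A_i := V_B \setminus \bigcup_{j \leq i-1} L_j$ be the set of active bad nodes still unassigned. The key observation is a clean characterization of what this exposure reveals about $\pi_v(1)$ for $v \in A_i$: the event $v \in A_i$ is equivalent to $\pi_v(1) \notin L_0 \cup \cdots \cup L_{i-2}$, and nothing more is implied. Indeed, $v \in A_i$ means $v$ was not drawn into any $L_j$ with $j \leq i-1$, which by the construction rule $L_j = \{u \in A_j : \pi_u(1) \in L_{j-1}\}$ forces $\pi_v(1) \notin L_{j-1}$ for each such $j$; on the other hand, membership of $\pi_v(1)$ in $L_{i-1}$ has not been queried yet. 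Hence, conditional on the history, $\pi_v(1)$ is uniformly distributed on $V' \setminus \bigcup_{j \leq i-2} L_j$, a set of size $|V'| - \sum_{j=0}^{i-2} X_j$. The event $v \in L_i$ is exactly $\pi_v(1) \in L_{i-1}$ and therefore has conditional probability
\[
p \;=\; \frac{X_{i-1}}{|V'| - \sum_{j=0}^{i-2} X_j}.
\]
Across distinct $v \in A_i$ these indicators remain independent, since the underlying $\pi_v(1)$'s are independent across $v$ and the history only consumed membership queries for nodes outside $A_i$. The number of trials is $m = |A_i| = |V_B| - \sum_{j=1}^{i-1} X_j$, where the sum starts at $j=1$ because $L_0 = \{w\} \subseteq V_G$ contributes no bad nodes. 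Summing $m$ independent Bernoulli$(p)$ variables gives $X_i \sim B(m,p)$.

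The main obstacle is the conditioning step, namely verifying that revealing the full history of $L_0, \ldots, L_{i-1}$ leaks no information about $\pi_v(1)$ for $v \in A_i$ beyond confinement to $V' \setminus \bigcup_{j \leq i-2} L_j$. This is the delicate part, and it is handled by appealing to the across-node independence of the permutations together with the fact that $A_i$ is disjoint from the previously constructed layers, so every membership query answered so far concerned a $\pi_{v'}(1)$ for some $v' \notin A_i$, independent of the $\pi_v(1)$'s governing the current layer.
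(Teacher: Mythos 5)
Your proof is correct and follows essentially the same edge-uncovering/deferred-decisions argument as the paper: reveal $\pi_v(1)$ layer by layer, count the $m = |V_B| - \sum_{j=1}^{i-1} X_j$ bad nodes not yet assigned, and observe that each conditionally picks a uniform target in $V' \setminus (L_0 \cup \cdots \cup L_{i-2})$, giving independent Bernoulli trials. Your version is in fact more careful than the paper's about what the conditioning reveals, and like the paper's own proof it correctly yields $p = X_{i-1}/(|V'| - \sum_{j=0}^{i-2} X_j)$, which fixes what is evidently a typo ($X_i$ instead of $X_{i-1}$) in the displayed lemma statement.
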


\begin{proof} 
	Let $w \in V_G$ and consider the process of uncovering edges just as described in the paragraph before this lemma. Clearly it holds that $X_1 \sim B(|V_B|, 1/V')$ as the partner of every $v \in V_B$ is chosen u.a.r.~due to the nature of the permutations $\pi_v$. 
	
	Assume now we are in the $(i-1)$-th step and already uncovered all edges between the sets $L_0,L_1, ... , L_{i-2}$. To construct $L_{i-1}$ we need to consider edges $(v, \pi_v(1))$, where $\pi_v(1) \in L_{i-2}$ and $v \in V_B \setminus \bigcup_{j=0}^{i-2} L_j$. At this point $m:=|V_B| - X_1 - X_2 - ... - X_{i-1}$ nodes still have uncovered outgoing edges. We know that these edges do not connect to nodes in the sets $L_0, ... , L_{i-2}$. Now, as the edge partners are chosen u.a.r the probability that such an edge connects to nodes in $L_{i-1}$ is exactly $p := X_{i-1} / (|V'| - X_0 - X_1 - ... - X_{i-2})$. Note that there exist no dependencies between the selections of these $m$ nodes. Hence, $X_i$ follows   $B(m,p)$.
\end{proof}

This means that we can describe the tree rooted in $w \in V_G$ by a sequence of binomial distributions, whose expected value depends on the previous layers.
The following statement gives us a bound on $m \cdot p$ w.h.p., showing that the set of nodes at level $i$ indeed decreases exponentially fast. The proof is given in \cref{appendix} and  mostly relies on the statement of \cref{lem:dest-level-binomial} in conjunction with Chernoff bound applications. 
\begin{lemma}
\label{lem:level-shrinking}
	Let $w \in V_G$ be a root node in $G'$ together with its corresponding layer sequence $\{X_i\}$. Then, there exists a constant $0 < \beta < 1$ such that for $i < \log^2 n$ it holds that
$
		E[X_{i+1}] \leq X_i \cdot \varepsilon (1 + o(1)) \leq X_i \cdot \beta 
$.
	Additionally, there exists a constant $C > 0$ such that, w.p.~$1-n^{-3}$ it holds for all $i \leq \log^2 n$ that $X_i < C \log n$. 
\end{lemma}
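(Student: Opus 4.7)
The plan is to establish the expectation bound first and then bootstrap it into the high-probability uniform bound via induction on $i$. From Lemma \ref{lem:dest-level-binomial}, conditionally on $(X_0, \ldots, X_i)$, we have $X_{i+1} \sim B(m,p)$ with $m \leq |V_B| \leq \varepsilon n$ and $p = X_i / (|V'| - \sum_{j \leq i-1} X_j)$. Writing $S_i := \sum_{j=0}^{i-1} X_j$, this gives
\[ E[X_{i+1} \mid X_0, \ldots, X_i] \,\leq\, \frac{\varepsilon n}{n - 1 - S_i} \cdot X_i. \]
So long as $S_i = o(n)$, the prefactor collapses to $\varepsilon(1+o(1))$, and any constant $\beta$ chosen strictly between $\varepsilon$ and $1$ dominates it for $n$ large enough. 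This yields the first inequality, conditional on sub-linearity of $S_i$.

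Next I prove by induction on $i \leq \log^2 n$ that $X_i < C \log n$ jointly with probability at least $1 - n^{-3}$, where $C = C(\beta)$ is a sufficiently large constant to be fixed. The base case $X_0 = 1$ is immediate. For the inductive step, condition on the history $\mathcal{H}_i$ in which $X_j < C \log n$ for all $j \leq i$, so that $S_{i+1} \leq C \log^3 n = o(n)$ and the previous paragraph applies: $\mu := E[X_{i+1} \mid \mathcal{H}_i] \leq \beta X_i < \beta C \log n$. Setting the threshold $t := C \log n$ and $\delta := t/\mu - 1 \geq (1-\beta)/\beta > 0$, a multiplicative Chernoff bound gives
\[ \Pr[X_{i+1} \geq C \log n \mid \mathcal{H}_i] \,\leq\, \exp\!\bigl(-\min\{\delta,\delta^2\}\,\mu/3\bigr). \]
Since $\delta\mu = t - \mu \geq (1-\beta) C \log n$ irrespective of the realized value of $X_i$, the exponent is at least $(1-\beta) C \log n / 3$ when $\delta \geq 1$ and at least $(\delta\mu)^2/\mu \geq (1-\beta)^2 C \log n / \beta$ when $\delta < 1$. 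Choosing $C$ large enough as a function of $\beta$ drives either bound below $n^{-5}$, and a union bound over the $\log^2 n$ levels yields a joint failure probability of at most $\log^2 n \cdot n^{-5} \leq n^{-3}$, completing the induction.

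The main obstacle is the circular dependency between the two statements: the expectation decay is valid only while $S_i$ remains sub-linear, and sub-linearity of $S_i$ is supplied precisely by the inductive hypothesis. Consequently one has to treat the claims jointly and order the constants in the right sequence — first pick $\beta \in (\varepsilon, 1)$, then $C = C(\beta)$ large enough to push Chernoff below $n^{-5}$, and only afterwards let $n$ grow so that the $o(1)$ correction from $S_i = O(\log^3 n)$ and the $\log^2 n$ union-bound factor are absorbed. A secondary technicality is that when $X_i$ is small (possibly $O(1)$), $\mu$ degenerates and multiplicative Chernoff moves toward the additive regime; what rescues the argument is that the Chernoff exponent is governed by $\delta\mu = t - \mu$, which stays $\Omega(C \log n)$ uniformly in $X_i$.
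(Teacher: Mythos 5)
Your proof is correct and follows essentially the same route as the paper's: induction over levels via Lemma~\ref{lem:dest-level-binomial}, a conditional Chernoff bound at each level, and a union bound over the $\log^2 n$ levels. The one place you are more careful than the paper is the Chernoff calibration when $X_i$ is small — the paper fixes $\delta \approx \varepsilon^{-2}-1$ (which does not give a useful exponent when $\mu = E[X_{i+1}\mid\cdot]$ is $O(1)$), whereas you correctly observe that the exponent is governed by $\delta\mu = t - \mu \geq (1-\beta)C\log n$ uniformly in $X_i$; this is a welcome tightening of a glossed-over detail, not a different approach.
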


According to \cref{lem:dest-path-length} packets travel at most $O(\log n)$ hops until reaching the destination. This implies the following.

\begin{corollary}
\label{lem:tree-height}
Consider \emph{any} $w \in V_G$ with its corresponding layer sequence $\{X_i\}$ in $G'$. Then, for $i>C' \log n$ it holds that $X_i = 0$ w.p.~at least $1-n^{-3}$.
\end{corollary}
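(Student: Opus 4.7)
The plan is to read \cref{lem:tree-height} off directly from \cref{lem:dest-path-length}, with essentially no additional probabilistic work. \cref{lem:dest-path-length} already guarantees that, with probability at least $1-n^{-3}$, the directed graph $G'$ contains no path and no cycle of length exceeding $4\log_{1/\varepsilon} n$. I would simply condition on this event once and for all.

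The structural observation to make is that a node $u$ lies in layer $L_i$ of the tree rooted at $w \in V_G$ exactly when the iterated-successor sequence $u, \pi_u(1), \pi_{\pi_u(1)}(1), \ldots$ first reaches $w$ after $i$ hops, following edges of $G'$. Hence $X_i > 0$ implies the existence of a directed path of length $i$ in $G'$ ending at $w$. Choosing any constant $C' \geq 4/\log(1/\varepsilon)$ (which is positive since $\varepsilon < 1$), every $i > C' \log n$ satisfies $i > 4\log_{1/\varepsilon} n$, so no such path can exist on the conditioning event. It follows that $X_i = 0$ with probability at least $1-n^{-3}$. Because the path-length bound is a global property of $G'$, the same event simultaneously certifies $X_i = 0$ for every root $w \in V_G$, which is in fact stronger than the pointwise quantification in the corollary.

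The only point of care is keeping the path direction straight: edges of $G'$ go from each bad node $v$ to $\pi_v(1)$, while the layer construction walks \emph{against} this direction, starting at the root. Since \cref{lem:dest-path-length} bounds the length of directed paths in $G'$ regardless of their endpoints, and path length is invariant under reversing the order of traversal, this causes no issue. I do not anticipate a harder step; the corollary is essentially a restatement of the path-length bound in tree-depth language, which is exactly the role the preceding sentence in the text ("According to \cref{lem:dest-path-length} packets travel at most $O(\log n)$ hops...") already prepares.
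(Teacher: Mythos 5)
Your proposal is correct and is exactly the argument the paper intends: the paper justifies the corollary with the single remark that \cref{lem:dest-path-length} bounds the hop count, and your proof simply spells out that $X_i>0$ would force a directed path of length $i$ in $G'$ ending at $w$, which contradicts the $4\log_{1/\varepsilon} n$ path-length bound once $C'$ is set to $4/\log(1/\varepsilon)$. Your observation that the bounding event is global over all roots $w$ is also the right reading; no gap.
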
 
	
Clearly for some fixed node $w \in V_G$ with sequence $\{X_i\}$, our main interest lies in the value $X = \sum_{i \geq 0} X_i$. In the following we say that $X_{i-1} < a$ \emph{increases} into the interval $[a,b)$, iff $X_i \in [a,b)$. Analogously we say $X_{i-1} \geq b$ \emph{decreases} into the same interval iff $X_{i} \in [a,b)$.
	
	\begin{lemma}
	\label{lem:tree-interval}
		Consider again a root $w \in V_G$ and the corresponding  layer sequence $\{X_i\}$. Then, for $j< \log ( \log n / \log \log n)$ and a constant $\hat{\beta}$ with $\beta < \hat{\beta} < 1$ the following holds: at most $O(\hat{\beta}^{-j})$ members of $\{X_i\}$ increase into the interval
		\[
			\left[ C \log n \cdot \hat{\beta}^{j}, C \log n \cdot \hat{\beta}^{j-1} \right) 
		\]		
		w.p.~at least $1 - n^{-3}$. Note that $\beta$ and $C$ are the constants defined in \cref{lem:level-shrinking}.
	\end{lemma}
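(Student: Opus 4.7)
The plan is to bound the number of step-indices $i$ at which the sequence crosses up into $[a, b) := [C \log n \cdot \hat{\beta}^{j},\, C \log n \cdot \hat{\beta}^{j-1})$ by a Chernoff-type concentration on the indicators $Y_i := \mathbf{1}[X_{i-1} < a,\ X_i \ge a]$. The per-step probability is controlled as follows: conditioning on $X_{i-1} = s < a$, Lemma~\ref{lem:dest-level-binomial} makes $X_i$ binomial with mean $\mu \le \beta s < \beta a$, so reaching $a$ requires a multiplicative deviation of at least $1/\beta$ above $\mu$. Since $\beta < \hat{\beta} < 1$ is a constant bounded away from $1$, the large-deviation form of Chernoff's inequality yields $P[Y_i = 1 \mid \mathcal{F}_{i-1}] \le p := \exp(-c_0 a)$ for some $c_0 > 0$ depending only on $\beta$, uniformly in the history.

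With a uniform conditional bound in place, $\sum_{i \le T} Y_i$ is stochastically dominated by $B(T, p)$, where $T = O(\log n)$ is the maximum depth guaranteed by Corollary~\ref{lem:tree-height}. Applying the standard Chernoff tail $P[B(T,p) \ge k] \le (eTp/k)^{k}$ with $k := K \hat{\beta}^{-j}$, the key invariant $a \cdot \hat{\beta}^{-j} = C\log n$ gives $p^{k} = n^{-c_0 CK / \ln 2}$. The remaining factor $(eT/k)^{k}$ contributes at most $n^{o(1)}$ throughout the allowed range of $j$, so choosing $K$ sufficiently large (so that $c_0 CK/\ln 2 \ge 3 + o(1)$, noting that $C$ is the constant from Lemma~\ref{lem:level-shrinking}) forces the whole bound below $n^{-3}$ and proves the lemma.

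The main obstacle is verifying the bound uniformly in $j$. For small $j$ the per-step probability $p$ is already polynomially small and the argument is routine, but at $j$ close to $\log(\log n/\log\log n)$ we have $a = \Theta(\log \log n)$ and $p$ is only polylog-small; the argument survives because $k = K \hat{\beta}^{-j}$ grows correspondingly, and the aggregated exponent $c_0\, a\, k = c_0 C K \log n$ always cancels the polylog gap between the single-step tail and the target polylog budget. A minor technical point is the dependence among $Y_1, \ldots, Y_T$: the standard coupling of conditionally Bernoulli indicators to i.i.d.\ Bernoulli trials justifies the Chernoff step, avoiding any martingale machinery. Finally, the analysis implicitly uses the high-probability event $X_i < C\log n$ for all $i$ from Lemma~\ref{lem:level-shrinking}, which is appended by a union bound to obtain the stated $1-n^{-3}$ guarantee.
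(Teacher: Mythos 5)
Your proposal is correct and follows essentially the same route as the paper: bound the per-step probability of an upward crossing into the interval by $\exp(-\Omega(t))$ with $t=C\log n\cdot\hat{\beta}^{j}$ via a Chernoff bound on the conditional binomial from \cref{lem:dest-level-binomial}, stochastically dominate the count of such crossings by $B(O(\log n),\exp(-\Omega(t)))$ using \cref{lem:tree-height}, and then apply the binomial tail with $\binom{s}{t}\le(es/t)^{t}$ together with the cancellation $t\cdot\hat{\beta}^{-j}=C\log n$ to land below $n^{-3}$. Your explicit handling of the dependency via coupling and the remark about conditioning on the event $X_i<C\log n$ make precise points that the paper's proof leaves implicit, but they do not change the argument.
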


	\begin{proof}
		In the following we consider the elements of the sequence $\{X_i\}$ one after the other, starting with $X_0$. By \cref{lem:dest-level-binomial} and \cref{lem:level-shrinking} we know that the $i$-th value $X_i$ follows a binomial distribution with mean less than $X_{i-1} \cdot \beta$. Using Chernoff bounds together with the fact that $\beta < 1$ is a constant, we obtain for any $t \geq 0$ 
\begin{equation}
\label{eq:dest-tree-interval}
	Pr[X_i \geq t~|~X_{i-1} \leq t] \leq \exp(-\Omega(t)).
\end{equation}
 Note that for $t= C\log n \cdot \hat{\beta}^{j}$ ,  (\ref{eq:dest-tree-interval}) bounds the probability that  $X_{i-1}$ increases into $[C\log n \cdot \hat{\beta}^{j}, C\log n \cdot \hat{\beta}^{j-1} )$.  Now, from \cref{lem:tree-height} it follows that at most $C' \log n$ elements may increase into the interval mentioned before. Therefore we can majorize the total number of increases into the interval by $B(C' \log n, \exp ( -c \cdot t))$, where $c$ is the constant hidden in $\Omega(t)$ in (\ref{eq:dest-tree-interval}). Now, using the well-known upper bound $\binom{s}{t} \leq \left(\frac{e \cdot s}{t}\right)^t$ on the binomial coefficient (see e.g. Proposition B.2 of \cite{motwani-book}) we get
\begin{align}
\label{eq-al:dest-tree-interval}
	Pr \left[ B \left( C' \log n, \exp \left( -c C \cdot \log n  \cdot \hat{\beta}^j \right) \right) = \frac{5}{cC} \cdot \hat{\beta}^{-j} \right] < \\ \left( O(1) \log n \cdot \hat{\beta}^j \right)^{O \left(\hat{\beta}^{-j} \right)} \left( \frac{1}{e} \right)^{5 \log n} 
	< \frac{1}{n^{3.4}} \nonumber 
\end{align}
for $n$ large enough and $j < \log ( \log n / \log \log n)$.
\end{proof}	

We are finally ready to state that no tree contained in $G'$ consists of more than $O(\log n \cdot \log \log n)$ nodes.

\begin{lemma}
\label{lem:tree-bound}
	Let $w \in V_G$ be a root and $\{ X_i\}$ the corresponding layer sequence. Then it holds w.p.~at least $1 - \polylog n / n^{3}$ that $\sum_{i} X_i < O(\log n \cdot \log \log n)$
\end{lemma}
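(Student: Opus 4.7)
The plan is to partition the layer indices by the magnitude of $X_i$ into logarithmically many ``bands'' of geometrically shrinking width, bound the contribution of each band via Lemma \ref{lem:tree-interval}, and sum everything up. I would set $J = \lceil \log(\log n / \log \log n) \rceil$ and define for $j = 0, 1, \ldots, J$ the band
\[
    B_j = \left\{ i \ :\ X_i \in \bigl[ C \log n \cdot \hat{\beta}^{j},\ C \log n \cdot \hat{\beta}^{j-1} \bigr) \right\},
\]
together with a ``small'' regime $B_\infty = \{i : 0 < X_i < C \log n \cdot \hat{\beta}^{J}\}$ in which $X_i$ is of order $\log \log n$.

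For the small regime, Corollary \ref{lem:tree-height} bounds the total number of nonzero layers by $C'\log n$ w.h.p., so the contribution from $B_\infty$ to $\sum_i X_i$ is at most $C'\log n \cdot C \log n \cdot \hat{\beta}^{J} = O(\log n \cdot \log \log n)$.

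For each band $B_j$ with $j \leq J$, Lemma \ref{lem:tree-interval} already gives me that the number of \emph{entries} into $B_j$ from below is at most $O(\hat{\beta}^{-j})$ w.p.~$1 - n^{-3}$. I still need to bound the number of layers that remain inside (or re-enter) $B_j$ without being a fresh entry from below. Here I use the drift $E[X_{i+1} \mid X_i] \leq \beta \cdot X_i$ together with $\beta < \hat{\beta}$: conditional on $X_i \in B_j$, Markov's inequality yields
\[
    \Pr\!\left[ X_{i+1} \geq C\log n \cdot \hat{\beta}^{j} \,\middle|\, X_i \right] \leq \frac{\beta X_i}{C\log n\cdot \hat{\beta}^{j}} \leq \frac{\beta}{\hat{\beta}} < 1,
\]
a constant strictly below one. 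Hence each sojourn of the process inside the region $\{X_i \geq C\log n\cdot\hat{\beta}^{j}\}$ is stochastically dominated by a geometric random variable with constant mean, independently of other sojourns (given the filtration at the entry time). Summing over the $O(\hat{\beta}^{-j})$ entries and using a Chernoff-type concentration bound for the sum of independent geometric variables gives $|B_j| = O(\hat{\beta}^{-j})$ w.p.~$1 - n^{-\Omega(1)}$. Multiplying by the band's maximum value $C\log n \cdot \hat{\beta}^{j-1}$ shows that band $B_j$ contributes at most $O(\log n)$ to $\sum_i X_i$. Summing over the $J+1 = O(\log \log n)$ bands and adding the $O(\log n \log \log n)$ contribution from $B_\infty$ yields $\sum_i X_i = O(\log n \cdot \log \log n)$, and a union bound over the $O(\log \log n)$ bands keeps the failure probability at $\polylog n / n^3$.

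The main obstacle is step three: turning the $O(1)$ expected sojourn length (which comes cleanly from the Markov bound above) into a high-probability bound on $|B_j|$. The subtlety is that for the larger values of $j$ the number of entries is only $O(\hat{\beta}^{-j}) = O(1)$, so a naive Chernoff bound on the sum of sojourn lengths is not strong enough individually. I would handle this by bounding each \emph{single} sojourn by $O(\log n)$ w.h.p.~(since the geometric tail with constant success probability is $n^{-\Omega(1)}$ at length $\Theta(\log n)$), which suffices whenever the number of entries is $O(1)$, and use Chernoff on the sum only for bands where $\hat{\beta}^{-j}$ is large enough for concentration. Both subregimes together fit within the claimed $O(\log n \log \log n)$ bound and the stated failure probability.
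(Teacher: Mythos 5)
Your overall decomposition into bands and the use of Lemma~\ref{lem:tree-interval} to count entries from below is exactly the skeleton of the paper's proof, but the central step---bounding the time spent inside a band---has a genuine gap, and the fallback you propose to patch it is not strong enough.

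The Markov bound $\Pr[X_{i+1}\geq C\log n\cdot\hat{\beta}^j\mid X_i]\leq \beta X_i/(C\log n\cdot\hat{\beta}^j)\leq\beta/\hat{\beta}$ is only valid when $X_i$ itself lies in $B_j$; once the process climbs to a higher band $B_{j'}$ with $j'<j$, the same computation gives $\beta X_i/(C\log n\cdot\hat{\beta}^j) < \beta\hat{\beta}^{j'-1-j}$, which is not below $1$. So the sojourn in $\{X_i\geq C\log n\cdot\hat{\beta}^j\}$ is \emph{not} stochastically dominated by a constant-mean geometric: the Markov argument gives no control while the process is above $B_j$. The paper avoids this by exploiting the distributional structure directly. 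Since, by Lemma~\ref{lem:dest-level-binomial}, $X_{i+1}$ is a binomial given $X_i$ with mean at most $\beta X_i$, a Chernoff bound with $\delta=\hat{\beta}/\beta-1=\Omega(1)$ yields $\Pr[X_{i+1}\geq \hat{\beta} X_i]\leq\exp(-\Omega(X_i))\leq\exp(-\Omega(\log n\cdot\hat{\beta}^j))$. This is a dramatically stronger ``exit'' bound than a constant, and it feeds directly into a binomial-tail argument (as in Equation~(\ref{eq-al:dest-tree-interval})) to get at most $O(\hat{\beta}^{-j})$ stays w.h.p.

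Even if one charitably grants the geometric-sojourn claim, the patch for small $j$ does not close the gap. For $j=1$ the number of entries is $O(\hat{\beta}^{-1})=O(1)$ and the band values are $\Theta(\log n)$; bounding a single sojourn by $O(\log n)$ w.h.p.\ gives $|B_1|=O(\log n)$ and hence a contribution of $O(\log n)\cdot\Theta(\log n)=\Theta(\log^2 n)$ to $\sum_i X_i$, which already exceeds the target $O(\log n\log\log n)$. The paper's Chernoff-based estimate instead makes any stay or re-entry into the top bands a $1/\mathrm{poly}(n)$-probability event, so those bands contribute only $O(\log n)$ each.

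Finally, the paper handles re-entries from above (decreases into $B_j$) as a separate case, observing that each such decrease must be preceded by an increase into a strictly higher band, and then sums the counts from Lemma~\ref{lem:tree-interval} over all $j'<j$. You fold these into the sojourn, but once the sojourn argument fails that case is left uncovered. I would recommend replacing the Markov bound by the Chernoff estimate and then following the paper's three-case decomposition (increases, decreases, stays).
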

\begin{proof}
	We start by fixing some interval $ J:= [C\log n \cdot \hat{\beta}^j, C\log n \cdot \hat{\beta}^{j-1})$ for $j < \log ( \log n/ \log \log n)$ just as in \cref{lem:tree-interval}. In the following we consider the so-called \emph{cost} caused by this interval, i.e, $\sum_{\{i:~X_i \in J\}} X_i$. Three possible events may cause some $X_i$ to contribute to this sum:
	\begin{enumerate}
		\item $X_{i-1}$ \emph{increased} into the interval
		\item $X_{i-1}$ \emph{decreased} into the interval
		\item $X_{i-1} \in J$ and $X_i \in J$
	\end{enumerate}
Now, the first point is addressed by \cref{lem:tree-interval}. We know that w.h.p. at most $O(\hat{\beta}^{-j})$ such values exist in total. As for the second point,
assume that this event occurred for some $X_{i-1}$, i.e., assume  $X_{i-1}$ decreased into $J$. Clearly, for $X_{i'-1}$ with $i' > i$ to  decrease into $J$ again, there must be a $k \in {i, \dots ,  i'-2}$ such that $X_k$ increases into an interval with index $j' > j$.
Using again \cref{lem:tree-interval} and applying the union bound over all intervals with $j < \log ( \log n / \log \log n)$ we know that at most $O \left( \hat{\beta}^{-(j-1)} + \hat{\beta}^{-(j-2)} + ... + \hat{\beta}^{-1} \right) = O \left( \hat{\beta}^{-j} \right)$ such members exist.

The third point is the most interesting. Assume $X_{i-1} \in [C\log n \cdot  \hat{\beta}^j, C\log n \cdot \hat{\beta}^{j-1})$, and observe that if $X_{i} \leq \hat{\beta} X_{i-1}$ it follows that $X_i \not \in J$.
Now, according to \cref{lem:level-shrinking} we know that $E[X_i] \leq X_{i-1} \beta$. Noting that $\hat{\beta} > \beta$ we bound $Pr[X_i \geq X_{i-1}\cdot \hat{\beta}]$. Applying Chernoff bounds with $\delta =  \hat{\beta} / \beta - 1 = \Omega(1)$, we get
\[
	Pr[X_i \geq X_{i-1} \hat{\beta}] < \exp (-\Omega(X_{i-1})) < \exp \left(-\Omega(\log n \cdot  \hat{\beta}^{j})\right).
\]
As the whole sequence $\{X_i\}$ has length at most $C' \log n$ according to \cref{lem:tree-height}, any of the above events can happen at most $C'\log n$ times. A similar approach as in (\ref{eq-al:dest-tree-interval}) therefore immediately yields that at most $O(\hat{\beta}^{-j})$ many times $X_i$ stays in the interval $J$ w.p.~at least $1 -n^{-3}$.

Summarizing, the cost caused by elements taking values in $J$ is at most $ O(\hat{\beta}^{-j} \cdot C \log n \cdot \hat{\beta}^{j-1} ) = O( \log n)$ w.p.~$1 - n^{-3}$. When applying the union bound we obtain that the total cost generated by all intervals with 
$j \leq \log ( \log n / \log \log n)$ is $O(\log n \cdot \log \log n)$. Note that for $j > \log ( \log n/ \log \log n)$ it holds that $C \log  n \cdot \hat{\beta}^{j-1} = O( \log \log n)$. Even if the whole sequence $\{X_i\}$ remains in these remaining intervals for all $C' \log n$ steps, a cost of at most $O(\log n \cdot \log \log n)$ can be accumulated.
\end{proof}

When applying the union bound, this gives us that \emph{no} tree with root $w \in V_G$ of $G'$ exceeds size $O(\log n \cdot \log \log n)$ w.h.p.~However, remember that another type of component exists in $G'$, namely cycles that may have additional nodes attached to them. Fix one of these components and let $C = \{v_1, ... ,v_{|C|}\}$ be the set of nodes of the cycle. Furthermore define $A_i := \{v ~|~ v \in (V_B \setminus C) \wedge  \text{ a path from $v$ to $v_i$ exists in } G'\} \cup \{v_i\}$. Then, this set induces a tree in $G'$ unless the loop $(v_i,v_i)$ is contained in $E'$ which implies $|C| = 1$. In any case, one may see the component as being induced by the set $\bigcup_i A_i$ as a forest of trees, whose roots lie on the cycle $C$. To determine the total size of the set $\bigcup_{i=1}^{|C|} A_i$, we look at the growth of them layer-by-layer. However, this time we let all $|C|$ of these trees grow at the same time, again uncovering edges step-by-step. That is, $L_0 = C$ and construct any set $L_i$ with $X_i =|L_i|$ just as we did when considering the roots $w \in V_G$, i.e., 
$L_i=\{ v ~|~ v \in (V_B \setminus \bigcup_{j=0}^{i-1} L_j) \wedge \pi_v(1) \in L_{i-1}\}$.
Observe that for this fixed cycle we can describe the sequence $\{X_i\}$ by the number of nodes in $V_B \setminus C$ being at distance $i$ from the cycle. When replacing the set $V_B$ by $V_B \setminus C$, the result of \cref{lem:dest-level-binomial} also holds for this layer sequence. As \cref{lem:dest-path-length} guarantees that $X_0 < O(\log n)$, the statement of \cref{lem:level-shrinking} follows accordingly and allows us to repeat the whole approach.

\begin{corollary}
\label{lem:tree-bound-cycle}
	The results of \cref{lem:level-shrinking}, \cref{lem:tree-height}, \cref{lem:tree-interval} and finally \cref{lem:tree-bound} also hold for the sequence $\{X_i\}$ of nodes in distance $i$ to some fixed cycle in $G'$.
\end{corollary}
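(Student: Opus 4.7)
The plan is to observe that the argument proving \cref{lem:tree-bound} for trees rooted at a good node $w \in V_G$ transfers almost verbatim to the cycle-component setting, because the layer-by-layer uncovering procedure behaves the same way once we treat the whole cycle $C$ as the ``root layer'' $L_0$.

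First, I would verify that \cref{lem:dest-level-binomial} goes through when the universe of candidate vertices is $V_B \setminus C$ instead of $V_B$. Since by construction every node on the cycle already has its single outgoing edge accounted for (it points inside $C$), no node of $C$ can be picked again during the expansion of $L_1, L_2, \ldots$. Thus at step $i$ the $m = |V_B \setminus C| - \sum_{j=1}^{i-1} X_j$ remaining nodes each independently attach to a vertex of $L_{i-1}$ with probability $X_{i-1}/(|V'|-|C|-\sum_{j=0}^{i-2}X_j - X_0')$ (where the denominator adjustment reflects the removal of the cycle from the pool of potential parents). The resulting distribution is again $B(m,p)$, and because $|C| = O(\log n)$ by \cref{lem:dest-path-length}, this only changes $m$ and $p$ by lower-order terms; in particular the expectation bound $E[X_{i+1}] \leq X_i \cdot \varepsilon(1+o(1)) \leq X_i \cdot \beta$ from \cref{lem:level-shrinking} still holds.

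Next, I would note the only substantive difference: here $X_0 = |C|$ rather than $X_0 = 1$. By \cref{lem:dest-path-length}, with probability $1-n^{-3}$ we have $|C| \leq 4\log_{1/\varepsilon} n$, so $X_0 = O(\log n)$. This is precisely the regime already covered by \cref{lem:level-shrinking}: the Chernoff step that establishes $X_i < C \log n$ for all $i \leq \log^2 n$ takes the starting value into account only through the fact that once $X_i$ exceeds $C\log n$ the probability of remaining above that threshold is negligible. Hence \cref{lem:tree-height} and \cref{lem:tree-interval} carry over, because their proofs only rely on (i) the conditional binomial structure of $X_i$ given $X_{i-1}$, (ii) the contraction factor $\beta < 1$, and (iii) the total depth bound $O(\log n)$ from \cref{lem:dest-path-length} (which applies to any path in $G'$, not only paths ending at a good node).

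Finally, for \cref{lem:tree-bound}, the three-case decomposition of contributions into an interval $J$ (increasing into $J$, decreasing into $J$, staying in $J$) is driven entirely by the one-step contraction of $E[X_{i+1}]$ and by the length bound on the sequence; both remain valid. The only change is that the ``initial'' contribution at $X_0 = |C|$ must be added to the total cost, but since $|C| = O(\log n)$ w.h.p., this extra term is absorbed in the $O(\log n \cdot \log \log n)$ bound. The main (minor) obstacle is bookkeeping the $X_0$ adjustment consistently across the interval argument, so that we do not double-count the cycle nodes themselves in the total size of the component. Applying the union bound over all cycles in $G'$ (there are $O(\log n)$ of them by \cref{lem:dest-path-length}) preserves the w.h.p.~guarantee, yielding the corollary.
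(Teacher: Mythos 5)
Your proposal is correct and matches the paper's own argument: like the paper, you set $L_0 = C$, replace $V_B$ by $V_B \setminus C$ in \cref{lem:dest-level-binomial}, invoke \cref{lem:dest-path-length} to get $X_0 = |C| = O(\log n)$ (putting the initial layer already in the regime handled by \cref{lem:level-shrinking}), and then rerun \cref{lem:tree-height}, \cref{lem:tree-interval}, and \cref{lem:tree-bound}, with a union bound over the $O(\log n)$ cycles. The only minor blemish is the denominator in your modified binomial formula (the extra $-|C|-X_0'$ term double-counts; with $X_0 = |C|$ the original expression $|V'| - \sum_{j=0}^{i-2} X_j$ already removes the cycle), but this does not affect the conclusion.
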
 
	
	\paragraph{Accounting for Inner Edge Failures} 
	So far we established that none of the components in $G'$ is of size more than $O(\log n \cdot \log \log n)$ w.h.p.~Remember however that we still need to account for the failures in $\Fin$. We start by showing that only $O(\log n)$ of them lie on any path in $G'$, even if the adversary fails $\Theta(n)$ inner edges.

	\begin{lemma}
	\label{lem:dest-log-inner-failures}
		The number of nodes $v \in V$ that have their first failover edge $(v, \pi_v(1))$ destroyed by the adversary is $O( \log n)$ w.p.~$1-n^{-3}$
	\end{lemma}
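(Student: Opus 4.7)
The plan is to view the event ``node $v$ has its first failover edge failed'' as an independent Bernoulli trial across $v\in V\setminus\{d\}$, compute its tiny expectation, and then apply a Chernoff bound.

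First, fix $v\in V\setminus\{d\}$ and let $f_v$ denote the number of inner edges in $\mathcal{F}^{(in)}$ incident to $v$. Since $v$ chose its permutation $\pi_v$ uniformly at random over $V\setminus\{d\}$, the entry $\pi_v(1)$ is a uniformly random element of this $(n-1)$--element set, independently of every other node's permutation. The edge $(v,\pi_v(1))$ is therefore a failed inner edge with probability exactly $f_v/(n-1)$ (with the convention that $\pi_v(1)=v$ contributes nothing, since no such edge exists). Letting $Y_v$ be the indicator of this bad event, I would then observe
\[
\mathbb{E}\Bigl[\sum_{v}Y_v\Bigr]\;=\;\sum_v\frac{f_v}{n-1}\;=\;\frac{2|\mathcal{F}^{(in)}|}{n-1}\;\leq\;\frac{2\gamma n}{n-1}\;=\;O(1),
\]
using the handshake identity $\sum_v f_v=2|\mathcal{F}^{(in)}|$ and the bound $|\mathcal{F}^{(in)}|\leq\gamma n$ from the notation section.

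Second, because the permutations $\{\pi_v\}_{v\in V\setminus\{d\}}$ are generated independently (and since $Y_v$ depends only on $\pi_v(1)$ together with the adversary's fixed failure set, which by obliviousness is chosen without knowledge of the random bits), the indicators $Y_v$ are mutually independent. I can thus invoke the multiplicative Chernoff bound stated in the notation section to $X=\sum_v Y_v$ with mean $\mu=O(1)$. Choosing $t=c\log n$ for a sufficiently large constant $c$ gives a deviation factor $\delta=t/\mu-1=\Omega(\log n)$, whence
\[
\Pr\Bigl[\,\textstyle\sum_v Y_v\;\geq\;c\log n\,\Bigr]\;\leq\;\exp\!\bigl(-\Omega(\delta\mu)\bigr)\;=\;\exp(-\Omega(\log n))\;\leq\;n^{-3},
\]
after fixing $c$ large enough. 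This yields the claimed bound of $O(\log n)$ failed first-failover edges with probability $1-n^{-3}$.

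I do not expect any serious obstacle; the only subtlety is justifying independence. The key observation, already used implicitly throughout the preceding lemmas, is that the adversary is oblivious and the per-node permutations are drawn independently, so the $Y_v$ are genuinely independent Bernoulli trials rather than merely negatively correlated. Once independence is in hand, the linear-in-$n$ failure budget produces only a constant expected count of hits on the ``first slot'' of the permutations, and Chernoff concentration does the rest.
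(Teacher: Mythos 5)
Your proof is correct and follows essentially the same approach as the paper: define the indicator $Y_v$ that $(v,\pi_v(1))\in\mathcal{F}^{(in)}$, compute the $O(1)$ expectation via the handshake identity $\sum_v f_v = 2|\mathcal{F}^{(in)}|\le 2\gamma n$, note independence of the $Y_v$ (the adversary is oblivious and permutations are drawn independently per node), and conclude with a Chernoff bound with deviation $\Theta(\log n)$. The paper's proof is the same argument with concrete constants ($\delta = 9\ln n / E[X]$, yielding $\Pr[X > 2 + 9\ln n]\le n^{-3}$) where you leave $c$ implicit.
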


	\begin{proof}
		Consider some fixed node $v$ and assume that the adversary destroyed $f_v < n-1$ inner edges connected to~$v$. Remember, the adversary cannot predict the permutation. Therefore a failed edge lies at the beginning of $\pi_v$ w.p.~at most $f_v / (n-1)$. Define r.v. $X_v$ s.t. $X_v=1$ iff $(v,\pi_v(1)) \in \Fin$, and $0$ otherwise. Clearly the random variable $X = \sum_{v \in V} X_v$ is what we are looking for where
		\[
			E[X] = \sum_{v \in V \setminus \{d\}} X_v = \frac{1}{n-1} \cdot \sum_{v \in V \setminus \{d\}} f_v \leq \frac{2\gamma  n}{n-1} < 2.
		\]
		In the third step, we use that $\sum_{v \in V \setminus \{d\}} f_v = 2 \cdot |\Fin| < 2\gamma n$. The factor $2$ is introduced because the sum represents the number of incident failed edges, summed up over all nodes. This way, each of the $|\Fin| < \gamma n$ failed inner edges is counted twice.
		As all $X_v \in \{0,1\}$ are independent from each other, we may apply Chernoff bounds to $X$. For $\delta = 9 \ln n/ E[X] < \delta^2$ this yields that $Pr[X > E[X] (1 + 9\ln / E[X])] \leq \exp(-3 \ln n) = n^{-3}$. Hence $\Pr[X > 2 + 9\ln] \leq n^{-3}$.
	\end{proof}

In the following we show how to transfer $G^{'(i)}$ into $G^{''(i)}$ for any $i =1,2,3$. The basic idea is to remove edges $(v, \pi^{(i)}_v(1))$ that lie in $\Fin$ from $G^{'(i)}$ and replace them with edges $(v, \pi^{(i)}_v(j))$, where $\pi^{(i)}_v(j)$ is the first reachable neighbor in the permutation of $v$. This way, the graph $G^{''(i)}$ represents the correct path of the packets with hop counter $(i-1)C_1 \leq h(p) < i C_1$ when following our protocol. The edge replacements throughout the construction of  $G^{''(i)}$ causes subtrees of size $O(\log n \cdot \log \log n)$ to relocate (see \cref{fig:tree_cutout}). This may cause some components in $G^{''(i)}$ to be extended by these relocated subtrees, and also a new type of component may be created. That is, the roots of relocated subtrees may connect with nodes in other subtrees and cause the formation of a cycle.

Formally, we can show the following claim.
As in the previous sections, we focus on a fixed $i$ and will omit the superscript $(i)$.  The idea behind the proof is that, according \cref{lem:dest-log-inner-failures}, only $O(\log n)$ such subtrees are relocated. It is unlikely that more that $O(1)$ of these subtrees connect to the same structure. Full proof is given in \cref{appendix}.

	\begin{lemma}
\label{lem:dest-with-inner-edges}
	Consider the graph $G''$. Then, none of the components contained in $G''$ has more than $O(\log n \cdot \log \log n)$ nodes. Furthermore, the number of contained cycles remains in $O(\log n)$ with each not exceeding length $O(\log n)$. Additionally, any packet that is not trapped in some cycle takes at most $C_1 = 16 \log_{1 / \varepsilon } n$ steps to reach $d$. All above statements hold w.p.~$1-O(n^{-2})$
\end{lemma}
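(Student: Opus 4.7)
The plan is to view $G''$ as obtained from $G'$ by $N \leq O(\log n)$ edge surgeries (\cref{lem:dest-log-inner-failures}), one for each node $v_k \in V_B$ whose first failover link $(v_k,\pi_{v_k}(1))$ lies in $\Fin$. Surgery $k$ deletes that edge and adds $(v_k,w_k)$, where $w_k := \pi_{v_k}(j_k)$ for the smallest index $j_k$ with $(v_k,\pi_{v_k}(j_k)) \notin \Fin$. Since the adversary is oblivious to $\pi_{v_k}$ and only $O(n)$ edges are failed, the node $w_k$ is distributed essentially uniformly over the at least $(1-\alpha)n$ unfailed non-$d$ neighbors of $v_k$, and independent of the rest of $G'$ once $\pi_{v_k}(1)$ is fixed.

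The central object is a meta-graph $H$ whose vertices are the components of $G'$ and whose $N$ (multi-)edges encode, for each $k$, the original component of $v_k$ together with the target component of $w_k$. Every component of $G''$ is contained in the union of $G'$-components corresponding to a single connected subset of $H$. I would then bound the maximum $H$-component by a standard sparse random-graph estimate: the expected number of trees of size $t$ in $H$ is at most $\binom{n}{t}\, t^{t-2}\, (N/\binom{n}{2})^{t-1}$, which for $t=5$ evaluates to $O(\log^4 n / n^3)$, well inside $O(n^{-2})$. Hence, w.p.\ $1-O(n^{-2})$ every component of $H$ has at most $4$ vertices. Combining this with the bound $M = O(\log n \log \log n)$ on $G'$-components from \cref{lem:tree-bound} and \cref{lem:tree-bound-cycle} gives the required $O(\log n \log \log n)$ size bound for every $G''$-component.

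For the cycle count, any cycle of $G''$ either survives from $G'$ (of which there are $O(\log n)$ by \cref{lem:dest-path-length}) or must contain at least one rerouted edge; since $G''$ has out-degree at most $1$, each rerouted edge sits on at most one cycle, so the total is $O(\log n)$. A surviving $G'$-cycle has length at most $4\log_{1/\varepsilon}n$ by \cref{lem:dest-path-length}. A new cycle is the rerouted edge $(v_k, w_k)$ closed by a $G''$-path from $w_k$ back to $v_k$; this path stays inside a single $H$-component of size $\leq 4$, so it decomposes into at most $4$ $G'$-segments of length $\leq 4\log_{1/\varepsilon}n$ each, giving total length $O(\log n)$. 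The same ``at most four $G'$-segments'' argument yields the $C_1$ bound: a packet not trapped in a cycle follows a simple $G''$-path of at most $4 \cdot 4\log_{1/\varepsilon}n = 16\log_{1/\varepsilon}n$ hops.

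The delicate step will be making the meta-graph argument formally rigorous: the graph $G'$, the set of rerouted nodes $\{v_k\}$, and the surgery targets $w_k$ are all derived from the same random permutations and are therefore correlated. I would handle this by exposing the random bits in two stages -- first revealing $\{\pi_u(1)\}_{u \in V \setminus \{d\}}$, which determines $G'$ and which nodes need to be rerouted, and only then revealing the later positions of each $\pi_{v_k}$. Conditional on the first stage, $w_k$ is a fresh uniform sample from $V \setminus \{d, v_k\}$ minus the failed neighbors of $v_k$, independent of $G'$, which is precisely what the random-graph computation on $H$ requires.
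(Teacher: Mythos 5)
Your argument follows the same underlying idea as the paper's: by \cref{lem:dest-log-inner-failures} only $O(\log n)$ subtrees are relocated, each relocation target is (conditionally on the first-stage randomness) a fresh near-uniform node, so collisions between relocated pieces happen with probability $O(\polylog n / n)$ each, and hence w.h.p.\ only $O(1)$ of them chain together. Your meta-graph $H$ is a cleaner formalization of what the paper expresses informally as ``relocated subtrees hit each other or a fixed component with probability $O(\polylog n/n)$, so each structure is hit by $O(1)$ subtrees and chains have length at most $3$,'' and your cycle-counting and path-length arguments (out-degree $\le 1$ forces each rerouted edge onto at most one cycle; a simple $G''$-path decomposes into at most four $G'$-segments of length $\le 4\log_{1/\varepsilon}n$ each) line up with the paper's reasoning and its constant $C_1 = 16\log_{1/\varepsilon}n$.

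The one step that needs repair is the random-graph estimate. The graph $H$ is not distributed like $G(m,p)$ with $p = N/\binom{n}{2}$: each of the $N$ surgery edges has one endpoint \emph{fixed} by the first-stage randomness (the $G'$-component of $v_k$), and its other endpoint is random, chosen with probability proportional to the size of the target component rather than uniformly over pairs. In particular, if several $v_k$'s were to lie in a single $G'$-component, that vertex of $H$ would have high degree and the $H$-component could have $\Theta(\log n)$ vertices --- a scenario that the $G(m,p)$ formula simply does not see. To make the step rigorous you should first observe that w.h.p.\ the $O(\log n)$ sources occupy distinct $G'$-components (each pair coincides with probability $O(\log n\log\log n/n)$, so a union bound over $O(\log^2 n)$ pairs gives $o(1)$), and then count directly: for an $H$-component on $5$ vertices there must be some $4$ of the $N$ surgery edges ($O(\log^4 n)$ choices) whose endpoints realize $3$ independent coincidences, each with probability $O(\log n\log\log n/n)$; this yields $O(\polylog n/n^3)$, which is what you wanted but obtained through an argument that actually matches the model. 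This is essentially what the paper does, phrased as ``each subtree is hit by at most $O(1)$ other relocated subtrees and at most a chain of length $3$ exists.''
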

	
\begin{figure}
\centering
\includegraphics[scale=0.8]{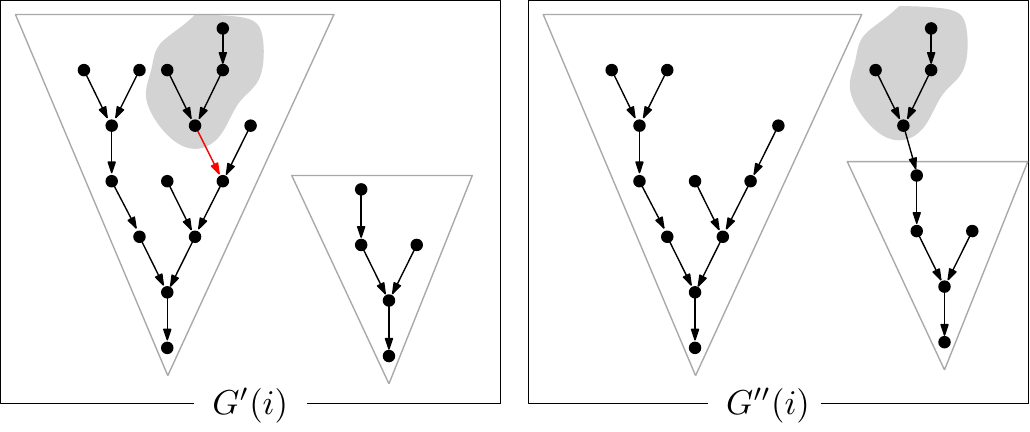}
\caption{For some node $v$ the edge $(v, \pi_v^{(i)} (1))$ is failed (marked in red). In $G''(i)$ this edge is replaced by $(v, \pi_v^{(j)})$, causing the subtree rooted in $v$ to relocate. }
\label{fig:tree_cutout}
\end{figure}

\paragraph{From Forests to Load}
To determine the total load some node $v$ receives we look at the graphs $G^{''(i)}$ one after another. When starting the \emph{all-to-one} routing, each node initiates a flow and all of them follow  paths according to the single outgoing edges in $G^{''(1)}$. Now, most of the flows reach the destination $d$ after at most $C_1$ many hops. However, some might be trapped inside a cycle. In both cases consider the cost $T_1(v)$ that occurred at any node $v$ until this point. Upon reaching a hop value of $C_1$ the flows currently trapped in a cycle start to traverse according to the permutation $\pi_v^{(2)}$. If we know where these cycle nodes are located in $G^{''(2)}$, we can again track the flows' paths and determine the loads caused by the next $C_1$ many hops, denoted $T_2(v)$. Finally, we repeat this approach one more time in the same manner and obtain the values $T_3(v)$. In the following we say that flows \emph{exit} the system $G^{''(i)}$ if their respective hop counter reaches $i C_1$. Similarly we say flows \emph{enter} the system $G^{''(i)}$, if they reach hop value $(i-1)C_1$ and $G^{''(i)}$ becomes relevant for their failover paths for the first time. Finally, we argue that $T_i(v) = 0$ w.h.p.~for any $v$ and $i \geq 4$, which is why we originally only required three permutations, and deduce that the total load $v$ receives is $T(v) = T_1(v) + T_2(v) + T_3(v)$.

We start with the following statement. The proof, given in \cref{appendix}, exploits the structure of $G^{''(i)}$, as each node has out-degree either one or zero.
\begin{lemma}
\label{lem:dest-cycle-load}
	Let $i \geq 1$ and assume that every component in $G^{''(i)}$ is entered by $O(\log n \cdot \log \log n)$ total flows.
	Then, every node $v$ that is not contained in a cycle receives a load of at most $T_i(v) = O(\log n \cdot \log \log n)$. Nodes contained in a cycle in $G^{''(i)}$ receive $T_i(v) = O(\log^2 n  \cdot \log \log n)$ load until the flows exit $G^{''(i)}$ w.p.~$1-O(n^{-2})$
\end{lemma}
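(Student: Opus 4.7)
The plan is to prove both bounds deterministically after conditioning on the structural event established in \cref{lem:dest-with-inner-edges} --- namely, that every component of $G^{''(i)}$ has at most $O(\log n \log \log n)$ nodes, every cycle has length at most $O(\log n)$, and every flow that is not trapped in a cycle reaches the destination within $C_1 = O(\log n)$ hops. This event occurs w.p.~$1-O(n^{-2})$, which supplies the probability bound stated in the lemma; everything that follows is a deterministic consequence of this conditioning together with the hypothesis that every component is entered by $O(\log n \log \log n)$ flows.

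First I would handle nodes not on any cycle. Every node of $G^{''(i)}$ has out-degree at most one, so from a non-cycle node $v$ there is a unique directed forwarding path that either terminates at some $w \in V_G$ (whence the flow exits to $d$) or enters the unique cycle of $v$'s component. In either case, once a flow traverses $v$ it is routed out via the single outgoing edge and can never return to $v$. Hence $T_i(v)$ is upper bounded by the number of distinct flows that ever enter $v$'s component, which is $O(\log n \log \log n)$ by hypothesis.

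For cycle nodes, the argument exploits the hop counter. Any flow spends at most $C_1$ hops inside $G^{''(i)}$: either it entered with hop counter $(i-1)C_1$ (or $0$ in the $i=1$ case) and leaves $G^{''(i)}$ once its counter reaches $iC_1$, or it reaches $d$ sooner. During these at most $C_1$ hops a single flow can traverse any given cycle node at most $C_1$ times, so summing over the $O(\log n \log \log n)$ flows entering the component gives
\[
T_i(v) \;\leq\; C_1 \cdot O(\log n \log \log n) \;=\; O(\log^2 n \log \log n),
\]
as claimed. One could sharpen the per-flow count to $\lceil C_1 / |\text{cycle}| \rceil$ using the cycle-length bound, but this crude estimate already suffices.

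The main conceptual obstacle is accounting for what ``entering the component'' means across the three phases. Flows trapped in a cycle of $G^{''(1)}$ relocate to whichever component of $G^{''(2)}$ contains their current node, and similarly from $G^{''(2)}$ to $G^{''(3)}$; these relocated flows must be included when the hypothesis is invoked. The statement abstracts this inflow away by simply assuming the $O(\log n \log \log n)$ bound holds for every component of $G^{''(i)}$, so granted that hypothesis, the two load bounds reduce to the out-degree-one tree/cycle structure and the hop-counter argument above.
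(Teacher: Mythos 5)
Your proposal is correct and follows essentially the same argument as the paper: the out-degree-at-most-one structure of $G^{''(i)}$ shows each flow visits any non-cycle node at most once, and the hop-counter bound $C_1 = O(\log n)$ limits how many times a flow can revisit a cycle node, with the probability coming from conditioning on the structural event of \cref{lem:dest-with-inner-edges}. Your treatment is a touch more explicit about the conditioning and about what ``entering a component'' means across phases, but the key steps are identical to the paper's.
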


Clearly, for $i=1$ and $G^{''(1)}$ the assumption of above lemma is satisfied. This is  because \cref{lem:dest-with-inner-edges} guarantees that all structures are of size $O(\log  n \cdot  \log \log n)$ and each node starts sending one flow of packets. The next lemma deals with the inductive step and $i>1$. The proof relies on the fact that  the permutations $\pi_v^{(i+1)}$, $v \in V'$, were chosen independently from any previous permutation $\pi^{(j)}_v$, $j < (i+1)$.  A detailed proof is given in \cref{appendix}.
\begin{lemma}
	\label{lem:dest-cycle-load-assumption}
	Assume that the assumption of \cref{lem:dest-cycle-load} holds for $G^{''(i)}$. Then, also in $G^{''(i+1)}$ no component is entered by more than $O(\log n \log \log n)$ flows w.p.~$1-O(n^{-2})$
\end{lemma}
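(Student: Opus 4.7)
The plan is to bound the total number of flows that transition from $G^{''(i)}$ into $G^{''(i+1)}$, and then to exploit the independence of $\pi^{(i+1)}$ from $\pi^{(i)}$ to argue that these flows are spread thinly across the components of $G^{''(i+1)}$.

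First, I would count the total flow mass handed over to $G^{''(i+1)}$. A flow reappears in $G^{''(i+1)}$ precisely when, at hop count $iC_1$, it is trapped in a cycle of $G^{''(i)}$, so only the cycle-containing components of $G^{''(i)}$ contribute. By \cref{lem:dest-with-inner-edges} at most $O(\log n)$ such components exist, and by the inductive hypothesis each is entered by at most $O(\log n \cdot \log \log n)$ flows. Hence the total number $N$ of flows transitioning into $G^{''(i+1)}$ satisfies $N = O(\log^2 n \cdot \log \log n)$. Moreover, combining \cref{lem:dest-with-inner-edges} with the path-length bound, every trapped flow has reached the cycle of its component by hop $iC_1$, so it enters $G^{''(i+1)}$ at a node of the set $H$ of cycle nodes of $G^{''(i)}$, with $|H| = O(\log^2 n)$. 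Since out-degrees in $G^{''(i)}$ are at most one, the cycles $X_1, \dots, X_k$ are vertex-disjoint with $k = O(\log n)$, and the flows inherited from any single $X_j$ amount to at most $O(\log n \cdot \log \log n)$.

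The central probabilistic step is the sub-claim: with probability at least $1 - O(n^{-2})$, no two nodes from distinct cycles $X_j$ and $X_{j'}$ lie in the same component of $G^{''(i+1)}$. Because the permutations $\pi_v^{(i+1)}$ are sampled independently of the $\pi_v^{(i)}$, one may condition on $\pi^{(i)}$ (fixing all $X_j$ and the multiplicities $Y_v$ with $\sum_{v\in H} Y_v = N$) and then work entirely with the fresh randomness of $G^{''(i+1)}$. For a fixed pair $u \in X_j$, $v \in X_{j'}$, the component of $u$ in $G^{''(i+1)}$ has size at most $O(\log n \cdot \log \log n)$ by \cref{lem:dest-with-inner-edges}, while the forward path from $v$ in $G^{''(i+1)}$ has length at most $C_1 = O(\log n)$, each step picking the first directly reachable neighbor in an independently drawn random permutation, which behaves as a nearly-uniform choice over $V \setminus \{d\}$ (arguing as in \cref{lem:dest-path-length} and using \cref{lem:dest-log-inner-failures} to absorb the rare event that the first permutation entry is failed). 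Thus the probability that $v$'s path meets $u$'s component is $O(\log^2 n \cdot \log \log n / n)$; a union bound over the $O(|H|^2) = O(\log^4 n)$ cross-cycle pairs, sharpened by the telescoping scheme of \cref{lem:tree-bound}, pushes the overall failure probability below $O(n^{-2})$.

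Granted the sub-claim, every component $K$ of $G^{''(i+1)}$ meets at most one $X_j$, so all flows entering $K$ originate from the single corresponding cycle-component of $G^{''(i)}$ and therefore total at most $O(\log n \cdot \log \log n)$. A final union bound over the at most $n$ components of $G^{''(i+1)}$ delivers the statement. The main obstacle I anticipate is the rigorous verification that a step of $v$'s forward path is indeed near-uniform: the truncated-permutation rule of \cref{alg:dest} lets the next-hop distribution depend both on the local failed edges and on the previously revealed path, so the cleanest route is to reveal $\pi_v^{(i+1)}$ lazily entry by entry in the same spirit as \cref{lem:dest-path-length}, combining the resulting per-step bound with \cref{lem:dest-log-inner-failures} so that each exposed outgoing edge can be treated as uniform over a set of size $\Theta(n)$.
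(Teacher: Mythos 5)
Your high-level strategy matches the paper's: identify that flows cross over to $G^{''(i+1)}$ only at the set $S$ of cycle nodes of $G^{''(i)}$, note $|S| = O(\log^2 n)$, and exploit the independence of the permutations $\pi^{(i+1)}$ to argue these nodes scatter (approximately uniformly) over the structures of $G^{''(i+1)}$, so that no single component inherits too many. However, the specific sub-claim you hang the argument on is too strong and cannot be established at the required probability.

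Your sub-claim asserts that with probability $1-O(n^{-2})$ no two nodes from \emph{distinct} cycles of $G^{''(i)}$ land in the same component of $G^{''(i+1)}$. This is essentially a collision bound: for a fixed cross-cycle pair $(u,v)$ the collision probability is $\Theta(\polylog n / n)$ (as you correctly compute, roughly $O(\log^2 n \log\log n / n)$), and there are $\Theta(\log^4 n)$ pairs. The union bound gives failure probability $\Theta(\polylog n / n)$, which is $n^{-1+o(1)}$ and not $O(n^{-2})$; and since the \emph{expected} number of such collisions is $\Theta(\polylog n / n) \gg n^{-2}$, no sharpening of the per-pair estimate can rescue a bound of the form $O(n^{-2})$. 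The appeal to "the telescoping scheme of \cref{lem:tree-bound}" is therefore a gap, not a technicality: \cref{lem:tree-bound}'s telescoping bounds a sum of sizes across intervals, it does not improve a first-moment collision estimate by a factor of $n$.

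The fix is to weaken the structural claim to the one the paper actually proves: with probability $1-O(n^{-2})$, each component of $G^{''(i+1)}$ contains at most $O(1)$ nodes of $S$ (regardless of whether they come from the same or different cycles of $G^{''(i)}$). This follows because, for a fixed target structure, the number of nodes of $S$ that fall into it is dominated by $B(O(\log^2 n), \polylog n / n)$, so the probability of $\geq c$ of them landing there decays like $(\polylog n / n)^c$; choosing $c$ a large constant and union-bounding over the $O(n)$ structures yields a failure probability well below $n^{-2}$. This weaker claim suffices, because each node $v\in S$ serves as entry point for at most $O(\log n \log\log n)$ flows (all of them entered the $G^{''(i)}$ component of $v$, which by the inductive hypothesis received at most that many), and summing over $O(1)$ such nodes per $G^{''(i+1)}$ component still gives $O(\log n \log\log n)$. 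You should also handle the inner-edge relocations explicitly when passing from $G^{'(i+1)}$ to $G^{''(i+1)}$, as \cref{lem:dest-with-inner-edges} does, to make sure the $O(1)$-per-component bound survives the $O(\log n)$ subtree relocations; your proposal touches on this only through the informal uniformity argument and \cref{lem:dest-log-inner-failures}.
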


Above lemma concludes the inductive step, showing that \cref{lem:dest-cycle-load} is applicable to all $i \geq 1$.  We now state that it is indeed enough to only look at the graphs $G^{''(i)}$ where $1 \leq i \leq 3$ to determine the maximum load. The proof of the statement is very  similar to the proof of \cref{lem:dest-path-length} and given in \cref{appendix}. We track a fixed packet that starts at some node and argue that it is unlikely to be stuck in a cycle in  $G^{''(1)}, G^{''(2)}$ and $G^{''(3)}$.

\begin{lemma}
\label{lem:g3-enough}
	Fix an arbitrary packet $p$ sent by a node $v \in V_B$. Then, $p$ ends up in a cycle in $G^{''(1)}, G^{''(2)}$ and $G^{''(3)}$ w.p.~at most $\polylog n / n^{3}$. Additionally, it holds that $T_i(v) = 0$ for all nodes $v  \in V \setminus \{d\}$ and $i \geq 4$ w.p.~$1- \polylog / n^{2}$.
\end{lemma}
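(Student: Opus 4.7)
The plan is to track a fixed packet through the sequence of three graphs $G^{''(1)}, G^{''(2)}, G^{''(3)}$, bound at each stage the probability of the packet ending in a cycle, and exploit the mutual independence of the three permutation collections $\pi_v^{(1)}, \pi_v^{(2)}, \pi_v^{(3)}$ to multiply these probabilities.

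First I would bound the single-graph cycle probability, mirroring the uncovering argument in the proof of Lemma \ref{lem:dest-path-length}. Fix $v \in V_B$ (a packet starting at a good node is delivered in one hop). In $G^{''(i)}$, reveal the trajectory $u_0 = v, u_1, u_2, \ldots$ incrementally, where $u_{k+1}$ is the first directly reachable neighbor of $u_k$ in $\pi_{u_k}^{(i)}$. Because the adversary is oblivious, conditional on the history so far, $u_{k+1}$ is distributed as the first element of a uniformly random permutation of the set $V \setminus (\{d\} \cup \{w : (u_k,w) \in \Fin\})$, which has size at least $(1-\alpha)n - 1$. The probability that after $k$ steps the path still consists of distinct nodes all lying in $V_B$ is at most $\varepsilon'^k$ for some constant $\varepsilon' < 1$; conditioned on this, the probability that step $k{+}1$ closes a cycle (rather than reaching a node of $V_G$) is at most $k/((1-\alpha)n - 1)$. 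Summing yields $\Pr[\text{packet ends in a cycle in } G^{''(i)}] = O(1/n)$.

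Second, I would invoke independence across permutation families. The packet's endpoint in $G^{''(i)}$ is a function of $\pi^{(1)}, \ldots, \pi^{(i)}$ only, while the trajectory inside $G^{''(i+1)}$ from any given starting node is a function of $\pi^{(i+1)}$ and the (fixed) failure set. Hence the single-graph cycle bound applies conditionally, independent of the starting vertex, so
\[
\Pr\bigl[\text{packet stuck in } G^{''(1)}, G^{''(2)}, G^{''(3)}\bigr] \le O(1/n)^3 \le \polylog n / n^3,
\]
which is the first statement. For the second statement, I would union bound over the at most $n$ possible source nodes $v \in V_B$: the probability that \emph{some} packet is stuck in all three graphs is at most $n \cdot O(1/n^3) = \polylog n / n^2$. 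When this event does not occur, every packet exits $G^{''(3)}$ by hop $3C_1$, so $T_i(v) = 0$ for all $v \in V \setminus \{d\}$ and $i \ge 4$.

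The main technical care goes into the first step: the inner failure set $\Fin$ perturbs the support of the ``next node'' distribution, so I must verify that the first unfailed neighbor remains close to uniform over a linear-sized subset of $V$. Here Lemma \ref{lem:dest-log-inner-failures} helps, since only $O(\log n)$ nodes have their first listed neighbor failed, and even at such a node the next unfailed neighbor is uniform over at least $(1-\alpha)n - 1$ candidates. This keeps the constants $\varepsilon'$ and $(1-\alpha)$ in the geometric bound essentially unchanged and is the only place where the passage from $G'$ to $G''$ must be handled delicately.
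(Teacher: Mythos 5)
Your proposal is correct and follows essentially the same route as the paper: uncover the packet's trajectory in each $G^{''(i)}$ to get a per-graph cycle probability of $\polylog n / n$, multiply across $i = 1,2,3$ using the mutual independence of the three permutation families, and union bound over the $n$ sources for the second statement. The only minor slip is in the constant: the pool of unfailed neighbors at $u_k$ has size at least $(1-\gamma)n - 1$ rather than $(1-\alpha)n - 1$, and the geometric ratio should be $\varepsilon'/(1-\gamma) < 1$ (which is guaranteed by $\varepsilon + \gamma \le \alpha < 1$), but this does not affect the argument.
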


Note that this lemma also implies that $O(\log n)$ hops suffice for any packet to reach the destination. We established in \cref{lem:dest-cycle-load-assumption} that the assumption in \cref{lem:dest-cycle-load} is indeed fulfilled for $G^{''(1)}, G^{''(2)}$ and $G^{''(3)}$. Hence, the total load any node receives that is not contained in a cycle in either $G^{''(2)}$ or $G^{''(1)}$ is $O(\log n  \cdot \log \log n)$. And those that lie on a cycle have a load of $O(\log^2 n  \cdot \log \log n)$, where according to \cref{lem:dest-with-inner-edges} at most $O(\log^2 n)$ such nodes exist.
\cref{thm:destination} follows accordingly.
\section{Circumventing Cycles by Partitioning}
\label{sec:interval}

	One of the major challenges of the \emph{3-Permutations} protocol is to cope with temporary cycles, which may be introduced when employing randomization into the failover strategy. For packets with large hop counts, which indicate the existence of a cycle, we effectively provided different failover routes.
In the following we present the \emph{Intervals} routing protocol that \begin{enumerate*} \item does not introduce any cycles in the packets routing paths w.h.p.~and \item is purely destination based \end{enumerate*}. This comes however at the cost of smaller maximum resilience against failures.
	
	The \emph{Intervals} protocol works as follows. We assume that every node is given a unique ID or address, which is known to the other nodes. Therefore we can enumerate the nodes by $v_1, v_2, ... ,v_n$. Let now $\alpha$ be some small constant $0 < \alpha < 1$. 
	We partition the nodes of the graph into consecutive sets of size $ I = n/4 \log_{1/\alpha} n$. That means, the $i$-th set $R_i$ contains nodes with addresses in the range 
	\[ \Big [(i-1) \cdot \frac{n}{4 \log_{1/\alpha} n} + 1  ~,~ i \cdot \frac{n}{4 \log_{1/\alpha} n} \Big] .
	\]
	 Assume that the value $\alpha$ is chosen such that both, the interval bounds and the number of intervals, are integers. The next step is similar to \cref{sec:destination}.  Every node tries to directly forward a packet to the desired destination $d$, if the direct link is available. Otherwise, again a permutation $\pi_{v,d}$ of nodes is consulted and the packet is sent to the first reachable partner in $\pi_{v,d}$. The crucial difference to the \emph{3-Permutations} protocol is the following: for some node $v$ that lies in the interval $R_i$, the permutation $\pi_{v,d}$ is a permutation of the nodes in $R_{(i+1)} \setminus \{d\}$. Hence, only edges ranging from nodes in the set $R_i$ to nodes in $R_{i+1}$ are considered as possible failover edges. To allow for a proper protocol, we assume that nodes of the rightmost interval choose failover edges into $R_0$.	We show in the upcoming analysis that this protocol does not create any cycles in the routing paths w.h.p.~The following statement allows the adversary to fail up to $O(n / \log n)$ many edges. 
Note that this protocol is purely destination based and therefore any deterministic scheme operating under this constraint would allow  $\Omega(n /\log n)$ load to be created by the adversary \cite{opodis13shoot}. 

\begin{figure}
\begin{algorithmic}[1]
\renewcommand{\algorithmicrequire}{\textbf{Input:}}
	\Require A packet with destination $d$
	\If {$(v,d)$ is intact} forward $p$ to $d$ and \algorithmicreturn \EndIf
	\State Forward $p$ to first directly reachable node in $\pi_{v,d}$
\end{algorithmic}
\caption{\emph{Intervals} protocol. Point-of-view of some node $v$}
\end{figure}

\begin{theorem}
\label{thm:interval}	 
	Assume the adversary is allowed to fail up to $\alpha \cdot I$ many edges, for some arbitrary constant $0 <\alpha < 1$ where $I = n / (4 \log_{1/\alpha} n)$. Then, when considering all-to-one routing to any destination $d$, the \emph{Intervals} protocol guarantees a maximum of
	\[	
		O(\log n  \cdot \log \log n)
	\]
load at any node except $d$ and edge w.h.p.~Additionally, no packet performs more than $O(\log n)$ hops w.h.p.~
\end{theorem}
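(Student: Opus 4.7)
The plan is to parallel the three-step structure used for \emph{3-Permutations} in \cref{sec:destination} (hop-wise uniformity, a path-length and cycle bound, and a layered load bound), exploiting the interval partition to eliminate the machinery needed there to handle temporary cycles. First, because each permutation $\pi_{v,d}$ is drawn uniformly at random and is hidden from the oblivious adversary, its first reachable entry is distributed uniformly over the reachable neighbors of $v$ in $R_{i+1}\setminus\{d\}$. Since the adversary's total budget is at most $\alpha I$ edges, at most $\alpha I$ of $v$'s $I-1$ potential failover neighbors in $R_{i+1}$ can be failed, so this uniform distribution ranges over a set of at least $(1-\alpha)I-1$ candidates, reducing each forwarding step at a bad node to an essentially uniform random draw from the next interval.

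Next I would bound the hop count and rule out cycles. Setting $q_j := |V_B \cap R_j|/I$, the observation $|V_B| \leq |\mathcal{F}| \leq \alpha I$ yields $\sum_j q_j \leq \alpha < 1$. For a fixed packet starting at some $v \in V_B \cap R_i$, the probability that it survives $k := 4\log_{1/\alpha} n$ hops without reaching a good node is upper bounded by $\prod_{t=1}^{k} q'_{i+t}$, where $q'_j \leq q_j/(1-\alpha) + o(1)$ accounts for the bias toward reachable nodes. Since the $k$ intervals $R_{i+1},\dots,R_{i+k}$ cover the full ring, $\sum_t q'_{i+t} = O(1)$, and AM--GM gives $\prod_t q'_{i+t} \leq (O(1)/k)^k = n^{-\omega(1)}$. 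A union bound over all $n$ packets then implies that w.h.p.\ every packet terminates within $k = O(\log n)$ hops; since a forwarding cycle would require some packet to survive a full lap through the $k$ intervals, the same event simultaneously certifies cycle-freedom.

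Third I would bound the load at a fixed node $w \in R_j$ by reverse-traversing the forwarding graph, in direct analogy with the tree analysis of \cref{sec:destination}. Level $0$ is $\{w\}$, and level $t$ consists of those $v \in V_B \cap R_{j-t}$ whose next-hop lands in the level-$(t-1)$ set. Since distinct nodes have independent permutations, conditional on the size $Y_{t-1}$ of the previous level, the size $Y_t$ is stochastically dominated by a binomial r.v.\ with parameters $|V_B \cap R_{j-t}|$ and $Y_{t-1}/((1-\alpha)I-1)$; in expectation $\sum_t Y_t = O(1)$ via the same AM--GM argument as in step two. Concentration then follows by the layered scheme of \cref{lem:level-shrinking,lem:tree-interval,lem:tree-bound}: partition the level sizes into intervals of the form $[C\log n\cdot\hat\beta^j,\, C\log n\cdot\hat\beta^{j-1})$, bound the cost accumulated per interval via Chernoff, and union-bound over the $O(\log \log n)$ such intervals to obtain $\sum_t Y_t = O(\log n \cdot \log\log n)$ w.p.~$1-n^{-2-\Omega(1)}$. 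Adding $w$'s own flow and union-bounding over $w \in V \setminus \{d\}$ yields the claimed node-load bound, and the same bound transfers to edges since the load on $(u,w)$ is at most the load on its tail $u$.

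I expect the main obstacle to lie in the third step. Unlike in \cref{sec:destination}, where \cref{lem:level-shrinking} provides a uniformly bounded per-layer shrinkage factor, here the factor $q_{j-t}/(1-\alpha)$ can exceed one for an individual interval (e.g.\ if the adversary concentrates failures in a single interval), so only the cumulative product over several layers is controlled by $\alpha$. The remedy is to replace the per-layer bound of \cref{lem:level-shrinking} by a cumulative-mean version derived from AM--GM before the interval decomposition of \cref{lem:tree-interval,lem:tree-bound} can be invoked. The hop bound from the second step truncates the backward tree at depth $O(\log n)$, which rules out any wrap-around through the interval ring.
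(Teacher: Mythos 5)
Your proposal has the right skeleton and matches the paper's: exploit uniformity of the hidden permutation, bound hop count and rule out cycles, then reverse-traverse the forwarding graph to bound the load by a layered tree argument that reuses \cref{lem:level-shrinking,lem:tree-interval,lem:tree-bound}. However, the obstacle you flag at the end — that the per-layer shrinkage factor $q_{j-t}/(1-\alpha)$ can exceed one when the adversary concentrates failures — is a gap in your proposal, and it is self-inflicted rather than inherent. The denominator $(1-\alpha)I-1$ is the culprit: it accounts for the worst case of inner-edge failures and destination-edge failures \emph{separately}, ignoring that both draw from the same budget of $\alpha I$ edges. The paper instead observes that $|\Fin_i|$ (inner edges failed from $R_i$ into $R_{i+1}$) and $|\Fout_{i+1}|$ (destination edges failed at $R_{i+1}$) are \emph{disjoint} subsets of the failure set, so $|\Fin_i| + |\Fout_{i+1}| \leq \alpha I$, which immediately gives the sharp per-step probability
\[
\frac{|\Fout_{i+1}|}{I - |\Fin_i|} \;\leq\; \alpha \;<\; 1
\]
for every $i$ (the algebraic identity is that $a+b\le \alpha I$ with $a,b\ge 0$ implies $a/(I-b)\le\alpha$). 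The same trade-off applied at level $j$ of the reverse tree gives $E[X_{j+1}]\le \alpha X_j$ for every layer (\cref{lem:interval-level-shrinking}), so the geometric shrinkage is uniform and \cref{lem:level-shrinking,lem:tree-interval,lem:tree-bound} apply with no modification whatsoever.

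This matters for two reasons. First, your AM--GM step for the hop bound is correct but heavier machinery than needed: with the budget trade-off, the probability of surviving $4\log_{1/\alpha} n$ hops without hitting a good node is simply $\prod_t (\text{per-step prob}) \le \alpha^{4\log_{1/\alpha} n} = n^{-4}$, which is exactly \cref{lem:interval-paths}. Second, and more seriously, the ``cumulative-mean version of \cref{lem:level-shrinking}'' that you propose as a remedy is not a small edit: the interval decomposition of \cref{lem:tree-interval} and \cref{lem:tree-bound} is built around the invariant that $E[X_{i+1}] \le \beta X_i$ with a fixed constant $\beta < 1$ at \emph{every} step, and the Chernoff arguments that bound the number of ``increases'' and ``stays'' in each $[\hat\beta^{j},\hat\beta^{j-1})$-scale band would have to be re-derived from scratch if some layers can have expected growth. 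Once you replace your pessimistic denominator by the budget trade-off, the per-layer factor is $\le \alpha$ unconditionally and the entire layered machinery from \cref{sec:destination-analysis} carries over verbatim, which is how the paper closes the argument.
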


For $\alpha = 1/e$ above statement provides the maximum resilience of $n/(4e \log_{e} n$).
Furthermore, assume the adversary fails $\Omega(n/\log n)$ destination edges $(v,d)$, with all such nodes $v$ being in the same interval $R_i$. Then, similar as in the case of the \emph{3-Permutation} protocol, a  balls-into-bins argument \cite{RS98} shows that after all nodes in $R_i$ send their flows to  randomly chosen nodes in $R_{i+1}$, at least one node has load $\Omega(\log n / \log \log n)$ w.h.p. 

Concerning the notation we carry over everything defined in \cref{sec:destination-notation}. Additionally we extend the notation for the sets of failed edges as $\Fin_i$ and $\Fout_i$. These sets only contain failed edges started in the $i$-th interval and, in case of $\Fin_{i}$, have partners in the interval $i+1$. Just as in the protocols description we denote the set of nodes inside the $i$-th interval as $R_i$ where $I = |R_i|$. 

We remark that the result of \cref{thm:interval} also holds, if we require that $|\Fin_i| < \varepsilon \cdot I$ and $|\Fout_i| < \gamma \cdot I$ \emph{for any} interval $R_i$, as long as $\varepsilon + \gamma \leq 1 - \Delta$ for some constant $\Delta > 0$. As the sets $\Fin_i$ and $\Fout_i$ are specified for some fixed destination $d$, we require this property for all possible destinations $d$.

Regarding the memory complexity, for a fixed destination each node needs a permutation of $O(n / \log n)$ nodes, hence $O(n)$ bits in total. 
As in case of the \textit{3-Permutations} protocol, the set of permutations $\{\pi_{v,d} ~|~ d \in V\}$ some node $v$ requires can be derived from a single permutation $\pi_{v}$, and only the first $3 \log_{1/\alpha} n$ entries of each permutation need to be stored (see corresponding description on page \pageref{par:memory}). This allows for a reduction of the \emph{total} memory required per node to $O(\log^2 n / \log (1/\alpha) )$.

	\subsection{Analysis}
	
A main motivation behind \cref{sec:interval} is to eliminate the need to perform any kind of cycle resolution. To that end we start by showing that our protocol does not introduce any cycles into the packets routing paths and at the same time derive that the hop count of \emph{any} packet remains in $O(\log n)$.
		
	To show that the maximum load occurring lies in $O(\log n \cdot \log \log n)$, we take a similar approach as in \cref{sec:destination-analysis}. That is, we fix some node $v \in V_G \cap R_i$ in some interval. Then, all the sources of packets that are forwarded over the edge $(v,d)$ form a tree rooted in $v$. We again argue that, in expectation, the number of nodes per level of this tree decreases exponentially fast and reuse parts of \cref{sec:destination-analysis}.
	
	\paragraph{Cycles} Some packet located in $R_{i}$ has only two possibilities for its next hop. Either it is directly forwarded to the destination, or it is forwarded to a node of the set $R_{i+1}$. To end up in a cycle it needs to traverse a sequence of $4 \log_{1/\alpha} n$ intervals, hitting a node $v \in V_B$ with every hop. This is unlikely and formalized as follows.
	
	\begin{lemma}
	\label{lem:interval-paths}
		Let $p$ be an arbitrary packet to be routed to destination $d$. Then, its routing path does not contain any cycles and it reaches the destination within $4 \cdot \log_{1/\alpha} n +1$ hops w.p.~at least $1 - n^{-4}$.
	\end{lemma}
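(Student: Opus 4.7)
My plan is to trace a single packet step by step along its trajectory $v_0, v_1, v_2, \ldots$ and bound the probability that it ever visits enough consecutive bad nodes to wrap around the cyclic interval structure. By the protocol, if some $v_i \in V_G$ then the packet reaches $d$ in the next hop; otherwise $v_{i+1}$ is chosen as the first reachable node in the uniformly random permutation $\pi_{v_i,d}$ over $R_{i+1} \setminus \{d\}$ (indices modulo $m := 4\log_{1/\alpha} n$). So the only way a node can be revisited is if the path makes at least $m$ hops through bad nodes and returns to the starting interval.

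The key probabilistic step is to bound, for each $i \ge 0$, the conditional probability that $v_{i+1}$ is again bad. Since the adversary is oblivious and the permutations at distinct nodes are mutually independent, conditioning on the previously uncovered path $v_0,\ldots,v_i$ leaves $\pi_{v_i,d}$ uniformly random over all permutations of its interval. It follows that the first reachable node is uniformly distributed over $R_{i+1} \setminus (\{d\} \cup F_{v_i})$, where $F_{v_i}$ denotes the neighbors of $v_i$ connected by a failed inner edge. Combining the trivial bound $|V_B| \le |\Fout|$ with the global budget $|\Fout| + |\Fin| \le \alpha I$, a short monotonicity argument in the split of the budget between destination and inner failures yields
\[
 P[\,v_{i+1} \in V_B \mid v_0,\ldots,v_i\,] \;\le\; \frac{|V_B|}{I - |\Fin|} \;\le\; \alpha .
\]

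Because the permutations at the successive nodes $v_0, v_1, \ldots$ are independent, these per-step bounds multiply cleanly:
\[
 P[\,v_1,\ldots,v_m \text{ all bad}\mid v_0 \in V_B\,]\;\le\;\alpha^m \;=\; \alpha^{4\log_{1/\alpha} n} \;=\; n^{-4}.
\]
In the complementary event, some $v_k$ with $k \le m$ lies in $V_G$, so the packet reaches $d$ in at most $k+1 \le m+1$ hops. Moreover, no cycle can occur in such a path: two visited nodes $v_i, v_j$ with $i < j \le m$ can share an interval only when $m \mid (j-i)$, which forces $i=0, j=m$; but then $v_0 \in V_B$ while $v_m \in V_G$, so $v_0 \ne v_m$. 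If $v_0 \in V_G$ already, the packet reaches $d$ in a single hop and the claim is trivial.

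The main obstacle is the per-step probability bound. The first reachable node is uniform over the non-failed neighbors, not over the whole interval, so both numerator and denominator shrink when failures are added; the clean worst-case bound of $\alpha$ emerges only after jointly optimizing how the failure budget is split between $\Fout$ (which controls $|V_B|$) and $\Fin$ (which controls $|F_{v_i}|$). This is precisely what is needed to make $\alpha^m$ collapse to $n^{-4}$ and close both halves of the claim simultaneously.
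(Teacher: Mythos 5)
Your proof is correct and follows essentially the same approach as the paper: trace the packet hop-by-hop, bound the per-step probability of landing on a bad node by $\alpha$ via the budget constraint on $\Fin$ and $\Fout$, and multiply using the independence of the per-node permutations to get $\alpha^{4\log_{1/\alpha} n} = n^{-4}$. Your write-up is slightly more explicit than the paper's on two points the paper leaves implicit — that conditioning on the uncovered prefix of the path leaves $\pi_{v_i,d}$ fresh (since each step lands in a new interval, so a new node's permutation is queried), and that the acyclicity claim reduces to noting $v_0 \in V_B$ while the first good node $v_k$ (if $k=m$) lies in $V_G$ — but these are exactly the observations the paper is tacitly using when it says ``the result immediately follows.''
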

	\begin{proof}
		Consider some packet $p$ originating from some node $v \in V_B$ in an arbitrary interval $R_i$. In the next hop, $p$ travels to some node $v'$ in $R_{i+1}$. If $v' \in V_G$ the packet is then forwarded directly to $d$. 
	Now, $|\Fout_{i+1}|$ bad nodes exist in $R_{i+1}$. In the worst-case the adversary fails  $|\Fin_i|$ edges $(v,w)$ with $w \in R_{i+1} \cap V_G$. Hence, the probability that $v'$ is a bad node is at most
	\[ 
		\frac{|\Fout_{i+1}|}{ I - |\Fin_{i}| } < \alpha.
	\]
Here we used that $|\Fin_i| + |\Fout_{i+1}| \leq \alpha I$. Therefore, the probability that $p$ hits bad nodes in $4 \cdot \log_{1/\alpha} n$ consecutive hops is at most $1/n^4$. The result immediately follows.
	\end{proof}
	
	\paragraph{Carrying over Previous Results} 
	In the following we show that the result of \cref{lem:tree-bound} also holds when nodes follow the \emph{Intervals} protocol. That is, for some fixed node $w \in V_G$ we construct a tree as follows.
Assume  $w \in R_i$ and let $L_0 = \{w\}$ denote the root of this tree. The $j$-th level of the tree associated with $v$ is defined by $L_j = \{v ~|~ (v \in R_{i-j}\cap V_B )\wedge (\pi_v(k)\in L_{i-1}) \wedge (\forall \ell < k: (v,\pi_v(\ell))  \in \Fin) \wedge ((v, \pi_v(k)) \not \in \Fin )\}$. That is, $L_j$ is the set of nodes whose packets reach $v$ in exactly $j$ hops (for easier readability we neglect the fact that wrap-around might occur). 
Formally we show the following.

	\begin{lemma}
	\label{lem:interval-level-shrinking}
		Let $w \in V_G$ be arbitrary and assume that $w \in R_i$. Furthermore let $X_j = |L_j|$, where $L_j$ is the $j$-th level of the tree associated with $v$. Then, it holds that
			$E[X_{j+1}] \leq X_j \cdot \alpha$.
	\end{lemma}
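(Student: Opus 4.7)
The plan is to express $E[X_{j+1}\mid L_j]$ as a sum over the candidate nodes $v\in V_B\cap R_{i-j-1}$ (these are the only nodes that can possibly sit at level $j+1$, since deeper levels consist of bad nodes one interval further back) and bound each contribution via the uniformity of $\pi_v$ combined with the adversary's failure budget for that interval.

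First I would fix $L_j$ and the adversarial failure pattern, and observe that the only randomness governing whether a given $v\in V_B\cap R_{i-j-1}$ lies in $L_{j+1}$ is the permutation $\pi_v$. Crucially, the permutations of nodes in $R_{i-j-1}$ are independent of those used to construct $L_0,\dots,L_j$, so we may condition on $L_j$ without affecting the distribution of $\pi_v$. Define $S_v = R_{i-j}\setminus\{u:(v,u)\in\Fin\}$, the set of partners in $R_{i-j}$ that $v$ can actually reach. Since $\pi_v$ is uniform over (a permutation containing) $R_{i-j}$, the induced order on $S_v$ is uniformly random, so the first reachable partner of $v$ is uniformly distributed on $S_v$. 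Therefore
\[
\Pr[v\in L_{j+1}\mid L_j,\mathcal{F}] \;=\; \frac{|L_j\cap S_v|}{|S_v|} \;\le\; \frac{X_j}{|S_v|}.
\]

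Next, summing over the at most $|V_B\cap R_{i-j-1}| = |\Fout_{i-j-1}|$ candidates and using the uniform lower bound $|S_v|\ge I-|\Fin_{i-j-1}|$ (since at most $|\Fin_{i-j-1}|$ inner edges leave $v$ into $R_{i-j}$), I would conclude
\[
E[X_{j+1}\mid L_j] \;\le\; X_j\cdot\frac{|\Fout_{i-j-1}|}{I-|\Fin_{i-j-1}|}.
\]
Finally I would invoke the adversarial budget $|\Fin_{i-j-1}|+|\Fout_{i-j-1}|\le \alpha I$ to get $|\Fout_{i-j-1}|\le \alpha I - |\Fin_{i-j-1}|\le \alpha(I-|\Fin_{i-j-1}|)$, the last inequality holding because $\alpha<1$. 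Substituting back gives $E[X_{j+1}\mid L_j]\le \alpha X_j$, and the claim follows after taking outer expectation.

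The main obstacle I foresee is really just a bookkeeping one: making explicit that $L_j$ is a random set whose distribution is determined by permutations disjoint from $\{\pi_v : v\in R_{i-j-1}\}$, so the conditioning is legitimate and the uniform-ordering argument applies. Two minor technicalities also need a line each, namely that $\pi_v$ may be a permutation of $R_{i-j}\setminus\{d\}$ rather than $R_{i-j}$ (which only shrinks $|S_v|$ by at most $1$ and is absorbed into the slack from $\alpha<1$), and that with the refined per-interval bounds $|\Fin_i|\le\varepsilon I$, $|\Fout_i|\le\gamma I$ discussed after the theorem, the same computation yields the stronger $E[X_{j+1}\mid L_j]\le(\varepsilon+\gamma)X_j$, which is what is actually needed to feed into the tree-size arguments carried over from Section~\ref{sec:destination-analysis}.
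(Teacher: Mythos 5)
Your proposal is correct and follows essentially the same route as the paper's proof: both sum over the candidates $v \in V_B \cap R_{i-j-1}$, bound $\Pr[v \in L_{j+1} \mid L_j]$ by $X_j/(I - |\Fin_{i-j-1}|)$ via the uniform randomness of $\pi_v$ over the reachable partners, multiply by the $|\Fout_{i-j-1}|$ candidates, and invoke the budget $|\Fin_{i-j-1}| + |\Fout_{i-j-1}| \le \alpha I$ together with $\alpha < 1$ to conclude $E[X_{j+1}] \le \alpha X_j$. Your write-up spells out more explicitly why conditioning on $L_j$ leaves the permutations in $R_{i-j-1}$ undisturbed (they live in a disjoint interval) and flags the $\setminus\{d\}$ slack, but the substance is the same as the paper's.
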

	
	\begin{proof}
		Let $w \in V_B \cap R_i$.
		Assume that due to the failed edges by the adversary, $X_j$ many nodes of the interval $R_{i-j}$ route packets over $w$ to the destination.
		Consider now some node $v \in R_{i-(j+1)} \cap V_B$. According to the assumption of \cref{thm:interval} at most $\varepsilon \cdot I$ such nodes exist. Hence $v$ hits one of the $X_j$ nodes w.p.~smaller than $X_j / (I - |\Fin_{i-(j+1)}|)$. Now, since we have $|\Fout_{i-(j+1)}| = | R_{i-(j+1)} \cap V_B|$, we obtain that
		\[ 
			E[X_{j+1}] \leq |F_{i-(j+1)}^{(out)}|  \cdot  \frac{X_j}{I - |\Fin_{i-(j+1)}|} \leq \alpha \cdot X_j	\qedhere\] 
	\end{proof}
	
	With this we established a statement similar to the first part of \cref{lem:level-shrinking}. It is easy to see that the size of each level $X_j$ can be modelled with a sum of independent Poisson trials,  when constructing the tree level-by-level. Furthermore, \cref{lem:interval-paths} establishes the property of \cref{lem:tree-height} and since \cref{lem:interval-level-shrinking} guarantees that the levels shrink exponentially fast in expectation, the statement $X_i < C \log n$, $C$ large enough, follows by applying Chernoff bounds.
 That said, we established all necessary requirements and a simple repetition of the corresponding analysis allows us to reuse \cref{lem:tree-interval} and \cref{lem:tree-bound}.
	Summarizing, we get the following and conclude the proof of \cref{thm:interval}.
	\begin{corollary}
		Let $w \in V_G$ be a good node and let $\{X_i\}$  be defined as in \cref{lem:interval-level-shrinking}. Then it holds that $\sum_i X_i = O( \log n \cdot \log \log n)$ w.p.~$1-\polylog n / n^3$.
	\end{corollary}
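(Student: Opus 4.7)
The plan is to repeat the argument of Lemma~\ref{lem:tree-bound} essentially verbatim, using the ingredients already lined up in the paragraph preceding the corollary. Concretely, fix $w\in V_G$ and let $\{X_i\}_{i\ge 0}$ be its layer sequence. Three facts are needed, all of which are either proved or immediately inherited: (i) by Lemma~\ref{lem:interval-level-shrinking}, $E[X_{i+1}\mid X_i]\le \alpha X_i$ with $\alpha<1$ a constant; (ii) conditioned on $X_0,\ldots,X_i$, the variable $X_{i+1}$ is the sum of at most $\varepsilon I$ independent $\{0,1\}$ Poisson trials (since each bad node in $R_{i-(j+1)}$ picks its outgoing failover partner independently via its own random permutation), so Chernoff bounds apply; (iii) by Lemma~\ref{lem:interval-paths} the tree has depth at most $C'\log n$ w.h.p., and a Chernoff argument based on (i)+(ii) gives $X_i\le C\log n$ for all relevant $i$ w.p.\ $1-n^{-3}$, exactly as in Lemma~\ref{lem:level-shrinking}.

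With these three facts in hand, I would proceed as in Lemma~\ref{lem:tree-interval}: pick $\hat\alpha\in(\alpha,1)$ and partition the range $[1,C\log n]$ into geometrically shrinking buckets $J_j=[C\log n\cdot \hat\alpha^{j},C\log n\cdot \hat\alpha^{j-1})$ for $j<\log(\log n/\log\log n)$. Using the Chernoff-type bound $\Pr[X_i\ge t\mid X_{i-1}\le t]\le \exp(-\Omega(t))$ together with the depth bound $C'\log n$, dominate the number of steps at which $X_{i-1}$ \emph{increases} into $J_j$ by $B(C'\log n,\exp(-cC\log n\cdot \hat\alpha^{j}))$ and show this is $O(\hat\alpha^{-j})$ w.p.\ $1-n^{-3.4}$ via the standard $\binom{s}{t}\le (es/t)^t$ estimate.

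Next, mirroring Lemma~\ref{lem:tree-bound}, I would split the contributions to $\sum_i X_i$ from bucket $J_j$ into three kinds of events: $X_{i-1}$ increasing into $J_j$, $X_{i-1}$ decreasing into $J_j$, and $X_{i-1},X_i$ both sitting in $J_j$. The first is bounded above by the previous paragraph; the second by noting that between two successive decreases into $J_j$ there must be an earlier increase into some $J_{j'}$ with $j'>j$, so the count is $O(\hat\alpha^{-(j-1)}+\hat\alpha^{-(j-2)}+\cdots)=O(\hat\alpha^{-j})$; the third by observing that $X_i\le \hat\alpha X_{i-1}$ forces $X_i\notin J_j$, and Chernoff with $\delta=\hat\alpha/\alpha-1=\Omega(1)$ gives $\Pr[X_i\ge \hat\alpha X_{i-1}]\le \exp(-\Omega(\log n\cdot \hat\alpha^{j}))$, which again yields $O(\hat\alpha^{-j})$ stays in $J_j$ w.h.p. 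Each such step costs at most $C\log n\cdot \hat\alpha^{j-1}$, so the total cost from $J_j$ is $O(\log n)$, and summing over $j<\log(\log n/\log\log n)$ gives $O(\log n\cdot \log\log n)$. The residual buckets with $j\ge \log(\log n/\log\log n)$ have $X_i=O(\log\log n)$ and contribute at most $C'\log n\cdot O(\log\log n)=O(\log n\cdot \log\log n)$ even in the worst case. A union bound over the $O(\log n)$ buckets keeps the failure probability at $\polylog n/n^3$.

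I do not expect a genuine obstacle: the only non-mechanical step is verifying that $X_{i+1}$ really is a sum of independent Bernoullis conditional on the history, which requires observing that the edges examined when growing level $j+1$ involve disjoint random permutations (one per candidate bad node in $R_{i-(j+1)}$), so no adversarial correlation is possible. Given that, the argument is a pure recycling of Section~\ref{sec:destination-analysis}, with $\varepsilon$ replaced by $\alpha$ and the set $V_B$ replaced by $R_{i-(j+1)}\cap V_B$ at each level.
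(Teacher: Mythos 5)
Your proposal is correct and matches the paper's own argument: the paper likewise establishes the three preconditions (exponential decay from \cref{lem:interval-level-shrinking}, independent Poisson trials per level so Chernoff applies, and the $O(\log n)$ depth bound from \cref{lem:interval-paths} playing the role of \cref{lem:tree-height}) and then declares that \cref{lem:tree-interval} and \cref{lem:tree-bound} can be reused verbatim. You simply spell out that recycling in full detail, which the paper leaves implicit.
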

	
\section{Further Reducing the Congestion}
\label{sec:hop}

In this section we present a third protocol, called \emph{Shared-Permutations}, that improves the bound of the maximum load observed in \cref{thm:destination} and \cref{thm:interval} under the assumption that the nodes share a common but randomized permutation over $V$. This could for example be achieved by computing parts of the routing tables starting from the same seed for the random generator, which is unknown to the adversary. While this assumption is indeed a weakness inherent to this protocol, it can be offset in case the adversary does not compromise one of the nodes directly: if all nodes manage to agree on a new permutation from time to time, this may invalidate previously obtained information by the adversary about the traffic flow.  We also assume that the packet headers are equipped with a hop field of size $O(\log \log n)$ bits, which is initially set to $0$ and may be accessed by the nodes of the network.

The \emph{Shared-Permutations} protocol works as follows. Again we consider an arbitrary but fixed destination $d$. Every node $v \in V$ is equipped with permutations $\pi_{0,d}, \pi_{1,d}, ... , \pi_{C_1,d}$ of all nodes $V \setminus \{d\}$, where $C_1$ is a value $O(\log n)$ to be specified later. Now, contrary to the \emph{3-Permutation} protocol, these permutations are assumed to be \emph{globally} agreed upon without being known to the adversary. Furthermore, each permutation is chosen u.a.r out of the set of all possible permutations. Additionally we assume that $v$ stores $C_2$ additional permutations $\pi_{v,j,d}$ on $V \setminus \{d\}$,  only known to $v$ itself and chosen u.a.r. Here $j  \in \{E_2,E_2 + 1, \dots , E_2 + C_2 +1\}$ for $E_2 = C_1 +1$ and $C_2$ is another value in $O(\log n)$.

Assume now that a packet $p$  with destination $d$ arrives at node $v \in V$ and denote its current hop counter by $h(p)$.
First of all, if the link $(v,d)$ is not failed the packet is directly forwarded to the destination.
Otherwise if $h(p) < E_2$, the node $v$ forwards it via a link $(v, v')$ where $v'$ denotes the node following $v$ in the global permutation $\pi_{h(p),d}$. In case this link is failed, $v$ raises the hop counter of $p$ to $E_2$ instead and forwards it to the \emph{first non-failed} edge according to $\pi_{v, E_2,d}$.
The case we did not consider yet is $h(p) \geq E_2$. In this case $p$ is routed over the first reachable partner in $\pi_{v,h(p),d}$. Finally, in every case, $h(p)$ is increased by one. A pseudo-code describing this algorithm is given in \cref{alg:hop}. The common global permutations  allow the flow to be distributed more evenly among the network, reducing the congestion to $O(\sqrt{\log n})$, even if $\Omega(n)$ edges are failed by the adversary.

\begin{figure}
\begin{algorithmic}[1]
\renewcommand{\algorithmicrequire}{\textbf{Input:}}
	\Require A packet with destination $d$ and hop count $h(p)$
	\If {$(v,d)$ is not failed} forward $p$ to $d$, set $h(p) \gets h(p) + 1$ and \algorithmicreturn \EndIf
	\If {$h(p) < E_2$} $v' \gets$ successor of $v$ in $\pi_{h(p),d}$
		\If {$(v,v')$ is not failed} forward $p$ to $v'$
		\Else \, $h(p) \gets E_2$.
		\EndIf
	\EndIf
	\If {$h(p) \geq E_2$} forward $p$ to first directly reachable node according to $\pi_{v,h(p),d}$
	\EndIf	
	\State $h(p) \gets h(p) + 1$
\end{algorithmic}
\caption{\emph{Shared-Permutations} protocol. Point-of-view of some node $v$}
\label{alg:hop}
\end{figure}

\begin{theorem}
\label{thm:hop}
Assume that the adversary is allowed to fail $\alpha \cdot n$ edges total, where $\alpha < 1$ is a constant\footnote{Just as in our first algorithm, $\alpha$ can be any constant  that lies in the range $0 < \alpha < (n-1) / n \approx 1$.}. When performing all-to-one routing to any destination $d$, the \emph{Shared-Permutations} protocol guarantees a maximum flow of
\[ 
	O( \sqrt{\log n})
\]
on any node (except $d$) and edge w.h.p.~
Additionally, no packet traverses more than $O(\log n)$ hops w.h.p.~
\end{theorem}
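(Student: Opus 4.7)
My plan is to split the analysis according to the two phases of \cref{alg:hop} and bound the load contribution of each.  Let $V_B$ denote the set of bad nodes and, for $h\in\{0,\dots,C_1\}$, let $B_h$ denote the positions of flows still in phase~1 at the beginning of hop $h$, so $B_0=V_B$ with $|V_B|\le\alpha n$.

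For phase~1, the key property is that each global permutation $\pi_{h,d}$ is an independent uniformly random permutation unknown to the adversary.  Conditioning on $B_h$, the image $\pi_{h,d}(B_h)$ is a uniformly random subset of $V\setminus\{d\}$ of size $|B_h|$, so a hypergeometric Chernoff bound gives $|B_{h+1}|\le(\alpha+o(1))|B_h|$ w.h.p.\ and the $|B_h|$ shrink geometrically.  Choosing $C_1=\Theta(\log n)$ then ensures that no flow remains active until hop $E_2$, ruling out case~(a) of entering phase~2.  For case~(b), a flow at $v\in B_h$ is kicked out only when the edge $(v,\pi_{h,d}(v))$ is failed, which conditionally on $B_h$ happens with probability at most $f_v/(n-2)$ where $f_v$ is the number of failed inner edges at $v$; summing $\sum_v f_v=2|\Fin|=O(n)$ against the geometrically decaying $|B_h|$ yields an $O(1)$ expectation, and Chernoff bounds the set $\mathcal A$ of flows entering phase~2 by $|\mathcal A|=O(\log n)$ w.h.p.

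The heart of the proof is the phase-1 load bound at a fixed $u\neq d$.  At most one flow arrives at $u$ per hop because $\pi_{h,d}$ is a bijection; a flow arrives at $u$ at hop $h+1$ exactly when $\pi_{h,d}^{-1}(u)\in B_h$, which conditionally on $B_h$ is Bernoulli with mean $p_h:=|B_h|/(n-2)\le\alpha^{h+1}(1+o(1))$.  To upper-bound $P[L_u^{(1)}\ge k]$ I would enumerate the $k$ hops at which $u$ is hit and bound each joint probability by $\prod_i p_{h_i}$.  The sum over the $\binom{C_1}{k}$ $k$-subsets is the elementary symmetric polynomial $e_k(p_0,\dots,p_{C_1-1})$, which is bounded by $\binom{C_1}{k}$ times its largest term $p_0p_1\cdots p_{k-1}\le\alpha^{k(k+1)/2}$, giving $P[L_u^{(1)}\ge k]\le\binom{C_1}{k}\alpha^{k(k+1)/2}$.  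For $k=c\sqrt{\log n}$ with $c$ large enough, the quadratic exponent $k(k+1)\log(1/\alpha)/2=\Omega(\log n)$ dominates $k\log(eC_1/k)=O(\sqrt{\log n}\log\log n)$, so the bound is $o(n^{-2})$, and a union bound over $u$ (and later over destinations) yields $L_u^{(1)}=O(\sqrt{\log n})$ everywhere w.h.p.  The same enumeration handles the per-edge loads.

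For phase~2 only $|\mathcal A|=O(\log n)$ flows survive, and each subsequently follows an independent local random permutation as in \cref{alg:dest}.  Running the layer-sequence machinery of \cref{sec:destination}, in particular \cref{lem:level-shrinking}, \cref{lem:tree-bound} and \cref{lem:dest-cycle-load}, on this much smaller input yields that each node is visited by $O(1)$ phase-2 flows w.h.p., because the probability that a single phase-2 flow passes through $u$ is $O(\log n/n)$ and $\binom{|\mathcal A|}{k}(\log n/n)^k=o(n^{-2})$ already for $k=O(1)$.  Combining the two contributions, union-bounding over destinations, and using the $O(\log n)$ hop bound from the geometric decay finishes the proof.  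The main obstacle will be making the square-in-the-exponent $\alpha^{k(k+1)/2}$ rigorous in the presence of the dependencies between $B_h$ and the earlier global permutations; this requires a martingale argument in the filtration $\sigma(\pi_{0,d},\dots,\pi_{h-1,d})$ together with a uniform w.h.p.\ bound $|B_h|\le\alpha^{h+1}n\cdot(1+o(1))$ valid for all $h\le C_1$, carefully interleaved with the per-$u$ tail estimate.
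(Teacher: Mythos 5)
Your phase-1 analysis mirrors the paper's closely: the geometric shrinking of $|B_h|$ via a hypergeometric/negative-association Chernoff argument corresponds to \cref{lem:hop-shrinking-h}--\cref{lem:hop-no-inner-max-hops}, and the per-node tail bound $P[L_u^{(1)}\ge k]\le\binom{C_1}{k}\alpha^{k(k+1)/2}$ via the largest elementary-symmetric term is exactly the calculation in \cref{lem:hop-no-inner}. You are also right to flag the dependence between $B_h$ and the earlier permutations as the delicate point; the paper handles it only informally (``independently of any earlier hops'').

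The genuine gap is in your phase-2 bookkeeping. You assert that the number of flows $\mathcal A$ forced into phase 2 is $O(\log n)$ w.h.p., arguing that the hop-wise expected number of inner-edge hits decays with $|B_h|$ and sums to $O(1)$. That step does not go through as stated: the adversary may concentrate all of $\Fin$ on nodes of $V_B$, so the per-hop quantity $\sum_{v\in B_h} f_v/(n-2)$ is only bounded by $2|\Fin|/(n-2)=\Theta(1)$ \emph{per hop}, not by $\Theta(\alpha^h)$, unless you additionally argue that $B_h$ is (close to) a uniformly random subset of $V_B$ --- and that requires the very conditioning-on-$\pi_{0,d},\dots,\pi_{h-1,d}$ machinery you postpone. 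Even granting the expectation, the Chernoff step needs an independence/negative-association justification that you do not supply. The paper instead proves the much weaker but clean bound $O(\log n\sqrt n)$ in \cref{lem:hop-S-packets}: for a single global permutation $\pi$ it bounds, \emph{deterministically} over realizations, the number of $i$ with $(\pi(i),\pi(i+1))\in\Fin$ by $O(\sqrt n)$ via the bucketed estimate $\sum_{r(i)\in I_k}p_i\le\min\{\log^{k+1}n,\,n/\log^k n\}$ and a geometric-series calculation, then applies Chernoff and a union bound over the $O(\log n)$ permutations. Your $O(\log n)$ may well be the truth, but it is an unsubstantiated claim, not a proof. Because your phase-2 load argument (a direct union bound over $k$-subsets of $\mathcal A$, using $\binom{|\mathcal A|}{k}(\log n/n)^k$, which is simpler than the paper's bundle-merging argument in \cref{lem:hop-S-maxhops}) relies crucially on $|\mathcal A|=O(\log n)$, the whole phase-2 part currently rests on an unproved step. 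Either prove the $O(\log n)$ bound carefully --- controlling the distribution of $B_h$ inside $V_B$ and the correlations across hops --- or fall back to the paper's $O(\log n\sqrt n)$ bound, in which case your $\binom{|\mathcal A|}{k}(\log n/n)^k$ estimate no longer gives $O(1)$ load and you would need the paper's bundle argument (or an equivalent) after all.
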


Assuming it is possible for the nodes to agree on common permutations that are not known to the adversary, the maximum load can be decreased by more than a factor $\sqrt{\log n}$  compared to the protocols in  \cref{sec:destination,sec:interval}. Note that this result breaks the $\Omega( \log n / \log \log n)$ lower bound of the \emph{3-Permutations} and \emph{Intervals} protocols.

Regarding space complexity, our nodes are required to store $O(\log n)$ permutations of $n$ nodes per destination. 
Therefore in the most simple case we require $O(n^2 \cdot \log^2 n)$ bits at most.
However, the same improvements as described in  \cref{sec:destination,sec:interval} can be made to store the permutations more efficiently and achieve a  memory complexity of $O(\log^3 n / \log(1/\alpha))$ bits per node.
Note that the protocol requires knowledge of the values $C_1$ and $C_2$, which can both be set to $5  \log_{1/\alpha} n$. 
These values are given in \cref{lem:hop-no-inner-max-hops} and \cref{lem:hop-S-maxhops}, together bounding the maximum number of hops any packet performs until it reaches the destination $d$ w.h.p.~
If $\alpha$ is not known to the nodes, then a slow growing function in  $\omega(\log n)$ can be used for $C_1$ and $C_2$, which comes at the cost of slightly increased memory complexity. 

\subsection{Analysis}
\label{sec:hop-analysis}
Throughout the analysis we consider a fixed destination $d$ and omit the corresponding index from all permutations. We use the notation defined in \cref{sec:destination-notation} and start by neglecting any failed edges in the set $\Fin$.
Next, we assume that each node sends $1$ packet with destination $d$ from each node $v \in V$. We consider the number of packets that have $i$ hops while still not having reached the destination $d$ and see that this set decreases exponentially fast. Additionally no packet traverses more than $C_1 < E_2$ many hops w.p.~at least $1-n^{-2}$. Therefore, without any inner edge failures, the only relevant permutations for our failover strategy are $\pi_{0}, ... ,\pi_{E_2 - 1}$.

Finally we account for the failures in $\Fin$ and make use of the permutations $\pi_{v,j}$. We consider the maximum load caused by the flows after reaching hop value $E_2$ separately and deduce that this value is $O(\sqrt{\log n})$ w.h.p.

\paragraph{Staying in Line}
We start by neglecting the failures in $\Fin$, i.e we assume first that $|\Fin| = 0$ and $|\Fout| < \varepsilon \cdot n$.
Furthermore, assume that every node $v \in V \setminus \{d\}$ starts sending a single packet to  destination $d$. Then, the number of packets that pass through some node $v$ is equivalent to the number of flows passing through $v$. Notice, that due to the global permutations, no node is visited by more than $1$ packet with the same hop value.
Consider the set of packets hop-for-hop and denote $H_i$ as the set of packets that reached hop $i$ at some point without reaching the destination.
Clearly $|H_0| = (n-1)$, $|H_1|=|V_B|$ and we can show the following.

\begin{lemma}
\label{lem:hop-shrinking-h}
	Assume $|\Fin| = 0$. Let $H_i$ denote the set of packets have not reached $d$ after $i$ hops. Then, for $|H_i| = \Omega(\log n)$ it holds w.p.~at least $1-n^{-4}$ that
		$|H_{i+1}| \leq |H_i| \cdot \sqrt{\varepsilon}$.
\end{lemma}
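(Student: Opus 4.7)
The plan is to unroll one step of the hop-by-hop process, exploit the bijectivity of the shared permutations, and then apply a Chernoff-type concentration bound with respect to a single previously unrevealed permutation.

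First I would observe that under the assumption $|\Fin| = 0$, the packets belonging to $H_i$ sit at pairwise distinct nodes. Denoting the set of their current hosts by $S_i \subseteq V \setminus \{d\}$, this gives $|S_i| = |H_i|$. This is shown by induction on $i$: at hop $0$ every node in $V \setminus \{d\}$ hosts its own packet; at each subsequent hop, packets at good nodes leave the process by reaching $d$, while packets at bad nodes are routed via the bijective successor map of the relevant global permutation and therefore arrive at pairwise distinct positions. A direct inspection of \cref{alg:hop} then gives
\[ |H_{i+1}| \;=\; |S_i \cap V_B|, \]
since the packets that survive one more hop without reaching $d$ are precisely those that sit at bad nodes at hop $i$.

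Next I would unfold one layer of the recursion. Writing $A := S_{i-1} \cap V_B$, so that $|A| = |H_i|$ by the previous step, and using bijectivity of $\pi_{i-1}$,
\[ |S_i \cap V_B| \;=\; \left|\{\, v \in A \,:\, \pi_{i-1}(v) \in V_B \,\}\right|, \]
where $\pi_{i-1}(v)$ denotes the successor of $v$ in the global permutation $\pi_{i-1}$. I would then condition on the earlier permutations $\pi_0,\ldots,\pi_{i-2}$ together with the adversary's (oblivious) choice of $V_B$, which fixes $A$. Since the global permutations are drawn independently and uniformly and the adversary is oblivious, $\pi_{i-1}$ remains uniform and independent of $A$. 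For each $v \in A$ the successor $\pi_{i-1}(v)$ is uniform on $V \setminus \{v,d\}$, so the indicator $Y_v := \mathbf{1}[\pi_{i-1}(v) \in V_B]$ satisfies $E[Y_v] \leq \varepsilon(1+o(1))$, and summing gives
\[ \mu \;:=\; E\!\left[\sum_{v \in A} Y_v\right] \;\leq\; |H_i|\cdot \varepsilon (1+o(1)). \]

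Finally, to pass from expectation to a high-probability bound, I would apply a multiplicative Chernoff bound as cited in \cref{sec:destination-notation}. The $Y_v$ are not independent, but being indicators induced by a uniformly random permutation they are negatively associated, so the Chernoff bound still applies in its standard form. Choosing $\delta = \Theta(1)$ so that $(1+\delta)\mu \leq |H_i|\sqrt{\varepsilon}$, which is possible since $1/\sqrt{\varepsilon} > 1$ is a constant, I would obtain
\[ P\!\big[\,|H_{i+1}| > |H_i|\sqrt{\varepsilon}\,\big] \;\leq\; \exp\!\big(-\Omega(\mu)\big) \;=\; \exp\!\big(-\Omega(|H_i|)\big), \]
which falls below $n^{-4}$ once the hidden constant in $|H_i| = \Omega(\log n)$ is chosen large enough. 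The degenerate case $i=0$ is handled deterministically: $S_0 = V\setminus\{d\}$, so $|H_1| = |V_B| \leq \varepsilon n \leq \sqrt{\varepsilon}(n-1)$ for $n$ large enough. I expect the main obstacle to be the concentration step itself, since the $Y_v$ are dependent; the cleanest route is via negative association of indicators on a uniform permutation, but a self-contained alternative is to expose the successors $\pi_{i-1}(v)$ for $v \in A$ one at a time and apply Azuma--Hoeffding to the associated Doob martingale, losing only constants in the exponent.
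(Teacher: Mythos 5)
Your proof follows essentially the same route as the paper's: reduce the hop-by-hop step to a count of how many packet-hosting nodes land in $V_B$, bound its expectation by $\varepsilon |H_i|(1+o(1))$, and then concentrate using negative association plus a Chernoff bound, with the $|H_i| = \Omega(\log n)$ assumption supplying the $n^{-4}$ tail. The only difference is one of framing: the paper asserts that $S_i$ is a uniformly distributed $|H_i|$-subset and models $|S_i\cap V_B|$ directly as a hypergeometric variable (citing negative association for the hypergeometric), whereas you condition on $\pi_0,\dots,\pi_{i-2}$ and $V_B$ to freeze $A$ and expose the fresh permutation $\pi_{i-1}$, bounding $\sum_{v\in A}\mathbf{1}[\pi_{i-1}(v)\in V_B]$ via negative association of permutation indicators (or a Doob martingale). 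Your version is slightly more explicit about why the packets occupy distinct nodes and why the next permutation is still uniform after conditioning, but the underlying idea and the point where the dependency issue is resolved are identical.
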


\begin{proof}
	Fix some $i < E_2$ and consider the set of packets $H_i$. Let $S_i$ be the set of nodes hosting the packets $H_i$. Clearly the nodes $v \in S_i$ hosting a packet (remember each node either hosts one or $0$ packets) are distributed uniformly across the network. This results from the fact that  $|\Fin| = 0$ and the permutations are chosen u.a.r.~from each other.
	
	The question is now what is the size of $S_i \cap V_B$, i.e. what is the number of packets in $H_i$ that can \emph{not} directly leave the network over a direct link to $d$ with hop $i+1$. As the nodes in $S_i$ are uniformly distributed over the network, consider the following process to determine the number of bad nodes in~$S_i$.
	
	 Enumerate the packets in $H_i$ as $\{p_1, ... p_{|H_i|}\}$ and assign to each $p_j$ some node $v_j \in V \setminus \{d\}$, chosen u.a.r.~\emph{without replacement}. We are only interested in counting the number of nodes $v_j$ that then lie in $V_B$. This can be modeled as an urn process, where we draw $|H_i|$ out of $n-1$ total balls, with $|V_B|$ of these balls being black. Answering the question of how many drawn balls are black yields $|S_i \cap V_B| = |H_{i+1}|$. We define the r.v.~$Y_j$ to model the $j$-th draw, where $Y_j = 1$ iff a black ball was drawn, and $Y_j = 0$ otherwise. Hence, we are interested in $|H_{i+1}| = Y := \sum_{j=1}^{|H_i|} Y_j$. Clearly the values $Y_j$ are not independent. However, $Y$ follows a hypergeometric distribution, which according to \cite{JP83}  is subject to the \emph{negative association} property. As stated in Theorem 3.1 of \cite{DP09} we may apply Chernoff bounds and since  $E[Y]  < |H_i| \cdot \varepsilon (1 + o(1))$, the result follows as long as $|H_i| > c \log n$ for a large enough constant $c$.
For the next hop a completely independent permutation is used. Therefore, the set $S_{i+1}$ is again uniformly distributed allowing this approach to be repeated.\end{proof}

From the exponential shrinking in \cref{lem:hop-shrinking-h}, it follows that no packet takes more than $O(\log n)$ hops to reach $d$. In \cref{appendix2} we show the following statement. 

\begin{lemma}
\label{lem:hop-no-inner-max-hops}
	Fix some packet $p$ with destination $d$. Then, assuming $\Fin=0$, it requires  at most $C_1$ steps to reach the destination w.p.~at least $1 - n^{-3}$. Here $C_1$ is a value bounded above by $5 \log_{1/\varepsilon} n$.
\end{lemma}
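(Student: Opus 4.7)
The plan is to track the trajectory of a single fixed packet $p$ hop-by-hop and exploit the independence and uniformity of the globally shared permutations $\pi_0, \pi_1, \ldots$ across hop indices. Let $v_0$ denote the source of $p$, and inductively let $v_i$ denote the node at which $p$ sits after being forwarded $i$ times. Under the assumption $\Fin = \emptyset$, at every step either $v_i \in V_G$ (in which case $p$ reaches $d$ in the next hop) or $v_i \in V_B$ and the protocol forwards $p$ to $v_{i+1}$, defined as the node immediately following $v_i$ in the global permutation $\pi_i$. The key observation is that, because $\pi_i$ is drawn uniformly at random from the permutations of $V \setminus \{d\}$ and independently of $\pi_0, \ldots, \pi_{i-1}$, and because the oblivious adversary fixes $\Fout$ without knowledge of these random choices, the conditional distribution of $v_{i+1}$ given the prefix $v_0, \ldots, v_i$ is uniform over $V \setminus \{d, v_i\}$.

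From this it follows that, conditional on $p$ being alive at step $i$,
\[
\Pr[\, v_{i+1} \in V_B \mid v_0, \ldots, v_i \,] \;\leq\; \frac{|V_B|}{n-2} \;\leq\; \frac{\varepsilon\, n}{n-2} \;=\; \varepsilon\,(1+o(1)).
\]
Chaining these bounds over the first $k$ hops yields
\[
\Pr[\, p \text{ has not reached } d \text{ within } k \text{ hops} \,] \;\leq\; \bigl(\varepsilon\,(1+o(1))\bigr)^k .
\]
Setting $C_1 = 5\log_{1/\varepsilon} n$ and $k = C_1$, and noting that $(1+o(1))^{O(\log n)} = 1 + o(1)$, the right-hand side is bounded by $n^{-5}(1+o(1))$, which is well below $n^{-3}$ for large enough $n$. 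This gives the stated hop bound for the fixed packet $p$.

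The main obstacle is pinning down the conditional-independence step cleanly: one has to argue explicitly that the prefix $(v_0,\ldots,v_i)$ is measurable with respect to $\pi_0,\ldots,\pi_{i-1}$ together with $\Fout$, and that $\pi_i$ is drawn independently of this sigma-algebra, so that ``the successor of $v_i$'' is uniform on $V\setminus\{d,v_i\}$ no matter what the history happened to be. A secondary minor wrinkle is the convention for the ``following'' node when $v_i$ is the last element of $\pi_i$; treating $\pi_i$ as a cyclic ordering (or otherwise absorbing the $O(1/n)$ correction) makes this harmless and is already covered by the $(1+o(1))$ slack in the computation above. As an aside, one could alternatively iterate Lemma~\ref{lem:hop-shrinking-h} to show $|H_{C_1}|=0$, but that lemma requires $|H_i|=\Omega(\log n)$ and would still need a per-packet argument to mop up the final few packets, so the direct single-packet computation above is the cleaner route.
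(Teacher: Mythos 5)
Your proof is correct, and it takes a cleaner route than the paper's own argument. The paper proves this lemma in two phases: first it invokes \cref{lem:hop-shrinking-h} repeatedly to conclude that after $2\log_{1/\varepsilon} n$ hops only $O(\log n)$ packets remain alive, and only then switches to a per-packet hop-by-hop argument for a further $3\log_{1/\varepsilon} n$ hops (which is essentially the argument you give, restricted to the surviving packets). You instead observe that a single packet's trajectory depends only on $\Fout$ and the permutations $\pi_0,\pi_1,\dots$, so the per-packet chaining bound $\Pr[v_{i+1}\in V_B\mid v_0,\dots,v_i]\le\varepsilon(1+o(1))$ already holds from hop $0$, with no need to first thin the population via \cref{lem:hop-shrinking-h}. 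This is more self-contained, avoids the somewhat informal ``independently'' claim in the paper's phase~2, and recovers the same constant $C_1\le 5\log_{1/\varepsilon} n$ (the paper's $2+3$ collapses to your direct $5$). The only reason the paper's two-phase structure is not wasteful in context is that \cref{lem:hop-shrinking-h} is needed anyway for \cref{lem:hop-no-inner}, so it is ``free'' there.

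One small caution on your ``secondary wrinkle'': the cyclic-ordering convention for ``the successor of $v_i$ in $\pi_i$'' is genuinely needed, not just a nicety absorbed by the $(1+o(1))$ slack. If the successor is undefined when $v_i$ occupies the last slot, that event has probability $\Theta(1/n)$ per hop and hence $\Theta(\log n/n)$ over the trajectory, which does not fit inside the $n^{-3}$ budget. Treating $\pi_i$ cyclically makes the successor exactly uniform over $V\setminus\{d,v_i\}$ and removes the issue entirely, so stick with that fix (the paper tacitly assumes it as well).
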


Recall the following invariant: when fixing some node $v$ and hop value $i$, the node $v$ receives \emph{at most} $1$ packet with such hop value. This leads to following result.

\begin{lemma}
\label{lem:hop-no-inner}
	Consider some node $v \in V$ and assume $|\Fin| = 0$. Then, if every node sends $1$ packet with destination $d$, $v$ is visited by packets
		$O(\sqrt{\log n})$
	times w.p.~at least $1 - n^{-3}$.
\end{lemma}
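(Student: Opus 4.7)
The plan is to combine the exponential shrinking of $|H_i|$ from \cref{lem:hop-shrinking-h} with a careful joint-probability estimate that exploits the geometric decay of $P[v \in S_i]$ across hops. The key observation is that although $\mathbb{E}[X_v] = O(1)$ (where $X_v$ counts the visits of packets to $v$), a naive Chernoff bound on $X_v$ viewed as a sum of Bernoullis with this mean only delivers $O(\log n/\log\log n)$ at error $n^{-3}$; the sharper $O(\sqrt{\log n})$ comes from retaining the product structure of the joint probabilities, so that the geometric sequence of parameters $\sqrt{\varepsilon}^{\,i}$ yields a tail that decays as $\varepsilon^{k(k-1)/4}$, i.e.\ quadratically in $k$.

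Concretely, I would first define the ``good'' event $G = \{|S_i| \le 2(n-1)\sqrt{\varepsilon}^{\,i} \text{ for all } 0 \le i \le C_1\}$, where $S_i$ is the set of nodes hosting a packet at hop $i$ (so $|S_i|=|H_i|$), and apply \cref{lem:hop-shrinking-h} iteratively with a union bound over the $O(\log n)$ hops to get $P[G^c] \le n^{-3}$. The tail regime $|H_i| = O(\log n)$ is handled separately: once $|H_i|$ drops below this threshold it stays there, so the probability of any visit to $v$ during those $O(\log n)$ remaining hops is $O(\log^2 n/n)$ by union bound, contributing zero to $X_v$ with the required probability.

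Next I would establish the conditional bound. Since the global permutation $\pi_i$ is uniform on $V\setminus\{d\}$ and independent of $\mathcal F_i := \sigma(S_0,\ldots,S_i)$, and the events ``successor of $u$ in $\pi_i$ equals $v$'' are pairwise disjoint across $u$, we obtain
\[
  P[v \in S_{i+1}\mid \mathcal F_i] \;=\; \frac{|S_i \cap V_B|}{n-1} \;\le\; \frac{|S_i|}{n-1}.
\]
Iterating through the filtration for any $i_1 < \cdots < i_k \le C_1$, and using that the partial good event $G_{i_j-1}$ is monotone in $j$ (so $G$ implies $G_{i_j-1}$ at every step), gives
\[
  P[v \in S_{i_1}, \ldots, v \in S_{i_k},\, G] \;\le\; (2/\sqrt{\varepsilon})^{k} \sqrt{\varepsilon}^{\,\sum_j i_j}.
\]
Since the $i_j$ are distinct nonnegative integers, $\sum_j i_j \ge k(k-1)/2$, so summing over the $\binom{C_1+1}{k}$ choices of $k$-subsets yields $P[X_v \ge k,\, G] \le \binom{C_1+1}{k}(2/\sqrt{\varepsilon})^{k} \varepsilon^{k(k-1)/4}$. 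Taking logarithms with $k = c\sqrt{\log n}$, the quadratic term $-\Theta(c^2 \log(1/\varepsilon)\, \log n)$ dominates the entropy term $O(c\sqrt{\log n}\,\log\log n)$, so choosing $c = c(\varepsilon)$ large enough drives this below $n^{-3}$.

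The main obstacle is correctly tracking the dependence between the indicators $\mathbb{1}[v \in S_i]$: they are not independent, and simply summing their marginal probabilities and applying Chernoff loses the crucial product structure, delivering only an $O(\log n/\log\log n)$ bound. The correct argument peels them off one at a time through the martingale filtration $\mathcal F_i$, which preserves the full geometric product $\prod_j \sqrt{\varepsilon}^{\,i_j-1}$. It is the fact that the $k$ smallest hop indices necessarily sum to at least $\binom{k}{2}$ that produces the quadratic-in-$k$ exponent, and thus enables $k = \Theta(\sqrt{\log n})$ to suffice at probability $1-n^{-3}$.
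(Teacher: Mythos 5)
Your proposal takes essentially the same approach as the paper: both proofs exploit the geometric decay of $|H_i|$ from \cref{lem:hop-shrinking-h} together with the observation that any $k$ distinct hop indices sum to at least $\binom{k}{2}$, which yields a tail bound of the form $\binom{O(\log n)}{k}\,\varepsilon^{\Theta(k^2)}$ that drops below $n^{-3}$ once $k = \Theta(\sqrt{\log n})$. Your filtration/peeling argument is a somewhat more careful justification of the product structure that the paper invokes by writing $P[Y=k]$ via the exact formula for independent Bernoullis; your version makes explicit where the independence of the fresh per-hop global permutations enters (namely, $P[v\in S_{i+1}\mid\mathcal F_i]\le |S_i|/(n-1)$ because $\pi_i$ is fresh).

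One small slip in the tail regime $|H_i| = O(\log n)$: you argue that the probability of \emph{any} visit to $v$ during the remaining $O(\log n)$ hops is $O(\log^2 n/n)$ by a union bound and conclude this ``contributes zero to $X_v$ with the required probability.'' But $O(\log^2 n / n)$ is far larger than $n^{-3}$, so at confidence $1-n^{-3}$ you cannot rule out tail visits entirely. What holds instead, and suffices, is that the tail contributes only $O(1)$ visits w.p.\ $1-n^{-3}$: the number of tail visits is stochastically dominated by $B\bigl(O(\log n),\,O(\log n/n)\bigr)$, and $P\bigl[B(O(\log n),O(\log n/n)) \ge 4\bigr] = O\bigl((\log^2 n/n)^4\bigr) \ll n^{-3}$. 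Since $O(1)=o(\sqrt{\log n})$, the overall $O(\sqrt{\log n})$ bound still follows. (For what it is worth, the paper's own proof is equally terse about this regime.)
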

\begin{proof}
	Let $i^*$ be the first time such that $H_{i^*}$ reaches size $O(\log n)$. We start by showing that throughout hops $1 \leq i < i^*$ the node $v$ is visited by at most $O(\sqrt{\log n})$ packets in total. For this range of $i$, we know according to \cref{lem:hop-shrinking-h} that the size of $H_{i}$ decreases exponentially fast, i.e $|H_{i+1}| < |H_i| \cdot \beta$ for some constant $0 < \beta < 1$ that depends on $\varepsilon$. Fix now some node $v$ and let $Y_i =1$ iff $v$ receives a packet with hop value $i$ at some point, and $0$ otherwise. Similar as in the proof of \cref{lem:hop-shrinking-h}, we argue that the packets with hop value $i$ are distributed uniformly -- and independently of any earlier hops --  among the nodes of the network. Hence $P[Y_i =1] = H_{i} / (n-1)$ and we are interested in $Y = \sum_{i=1}^{i^*} Y_i$.
	Note that $P[Y_i =1] = H_{i} / n < \varepsilon \cdot \beta^i$ , where we wrote $n$ instead of $n-1$ for ease of readability. Then
	\begin{align*}
		P[Y = k] = \sum_{ \substack{S \subseteq \{1, ... O(\log n)\} \\ \text{with } |S| = k} } \big( \prod_{j \in S} P[Y_j = 1]  \cdot \\
		\prod_{\ell \in \{1 ... O(\log n)\} \setminus S} 1 - P[Y_\ell = 1] \big).
	\end{align*}
 The second product can be crudely bounded above by $1$. The first product reaches its maximum value for $S=\{1,...,k\}$ and is bounded by $ ( \varepsilon\beta \cdot \varepsilon\beta^2  \cdot ...  \cdot \varepsilon \beta^k)$. Therefore we get
\begin{align*}
		P[Y = k] 
		&< \left( \frac{e \cdot C \log n}{k} \right)^{k} \varepsilon^{k} \cdot \beta^{k(k-1)/2} .
	\end{align*}
	Now assume $k=C' \cdot \sqrt{\log n}$ for some sufficiently large constant $C'$. In this case
	\[
		P[Y = k] < O \left( \sqrt{\log n} \right) ^{C' \sqrt{\log n}}  \cdot \left( \frac{1}{n} \right)^5 .
	\]
	Clearly the first term lies in $o(n)$. When increasing $k$ further, the probability  only gets smaller. Applying the union bound over the  remaining $O(\log n) - C' \sqrt{\log n}$ larger values of $k$, we get  $P[Y \geq C' \sqrt{\log n}] < n^{-3}$.
Adding $1$ for the packet that was initialized on $v$ yields the result. 
\end{proof}

	Clearly this implies that both, the maximum node load and the edge load are $O(\sqrt{\log n})$ in the case of $|\Fin| = 0$.

\paragraph{Accounting for Inner Edge Failures}
We consider two copies, $S^{(out)}$ and $S$ of our initial graph in which the adversary failed at most $\alpha n$ edges according to its strategy. In $S^{(out)}$ we repair all failures $\Fin$, which results in ignoring inner edge failures just as described above. Again we consider the equivalent point-of-view of each node sending a single packet to $d$ instead of a consecutive flow. The idea in the following is to consider only $S^{(out)}$ and each time an inner-edge $(u,v)$ is chosen for communication that is failed in the original graph, the packet is copied and  placed with hop count $E_2$ on $u$ in $S$. This way $S$  contains  packets with hop count of at least $E_2$. The packets in $S^{(out)}$  however continue as if the edge was intact. 
Note that, by \cref{lem:hop-no-inner-max-hops}, $S$ w.h.p.\ only consist of packets that are redirected because of inner edge failures.
The idea behind the analysis is the following: Let $S^{(out)}$  run until all packets reached the destination $d$ and determine the number of packets starting in $S$. We then let the system $S$ run and it is easy to see that we can majorize the load some node $v$ receives in the original process by adding up the loads of $v$ in $S^{(out)}$ and $S$ respectively. This is because in $S^{(out)}$ we do not remove packets but copy them to $S$ instead.

In \cref{lem:hop-no-inner} we already established the load some node $v$ receives in $S^{(out)}$.
We start by determining the number of packets that are initialized in the system $S$.

\begin{lemma}
\label{lem:hop-S-packets}
Consider the number of packets $p$ that reach a load of $E_2$ at some point. Then, at most $O(\log n \cdot \sqrt{n})$ of them exist w.p.~at least $1 - 2 \cdot n^{-3}$.
\end{lemma}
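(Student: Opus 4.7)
The statement asks for an upper bound, with probability at least $1 - 2/n^3$, on the number of packets that are ever copied from $S^{(out)}$ into $S$; equivalently, on the number of hop--$i$ edge attempts in $S^{(out)}$ that happen to land on an edge in $\Fin$. The natural plan is to bound this hop by hop, exploiting the shrinking of $|H_i|$ established in \cref{lem:hop-shrinking-h} together with the fact that the global permutations $\pi_0,\ldots,\pi_{E_2-1}$ are drawn uniformly at random and independently of $\mathcal{F}$.

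First I would condition on the high-probability events from \cref{lem:hop-shrinking-h} and \cref{lem:hop-no-inner-max-hops}: every packet reaches $d$ within $C_1 \leq 5\log_{1/\varepsilon} n$ hops in $S^{(out)}$, and for each $i \leq C_1$ with $|H_i| = \Omega(\log n)$ we have $|H_{i+1}| \leq \sqrt{\varepsilon}\,|H_i|$. In particular $\sum_{i=0}^{C_1} |H_i| = O(n)$ and, following the proof of \cref{lem:hop-shrinking-h}, the set $S_i$ of nodes hosting the surviving packets at hop $i$ is (conditionally on $|H_i|$) uniformly distributed among the size-$|H_i|$ subsets of $V\setminus\{d\}$, because the permutations $\pi_0,\ldots,\pi_{i-1}$ that transported the packets to $S_i$ are independent of $\pi_i$ and of the failures.

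The main estimate is then: at hop $i$, the number of packets whose next edge lies in $\Fin$ equals $Y_i := \sum_{v \in S_i}\mathbb{1}\{(v,\pi_i(v)) \in \Fin\}$. Since $\sum_v f_v = 2|\Fin| \leq 2\gamma n$ and $\pi_i(v)$ is (up to the wrap-around) a uniformly chosen element of $V\setminus\{v,d\}$, one gets
\[
\mathbb{E}[Y_i \mid |H_i|] \;\le\; |H_i|\cdot\frac{2|\Fin|}{(n-1)(n-2)}\cdot (n-1) \;=\; O\!\left(\tfrac{|H_i|}{n}\right).
\]
Summing over $i \le C_1$ and using $\sum_i |H_i| = O(n)$ gives $\mathbb{E}[\sum_i Y_i] = O(1)$. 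To pass from expectation to a high-probability statement, I would observe that, conditional on $S_i$, the indicators $\mathbb{1}\{(v,\pi_i(v))\in\Fin\}$ for $v\in S_i$ are negatively associated (they are determined by a uniform random permutation, which is a standard source of NA variables), so Chernoff-type bounds apply per hop. Applying the Chernoff tail $\Pr[Y_i \ge t] \le (e\mu_i/t)^t$ with $t = \Theta(\sqrt{n})$ for each hop, and union-bounding over the $O(\log n)$ hops, yields the claimed $O(\sqrt{n}\log n)$ bound with failure probability at most $n^{-3}$, which combines with the conditioning probability $n^{-3}$ from \cref{lem:hop-shrinking-h}/\cref{lem:hop-no-inner-max-hops} to give the stated $2n^{-3}$.

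\textbf{Main obstacle.} The bound $O(\sqrt{n}\log n)$ is quite loose relative to the expectation $O(1)$, so the delicate issue is not the size of the bound but justifying the tail inequality for $\sum_i Y_i$. The $Y_i$'s across different hops are \emph{not} independent: $S_{i+1}$ is determined by $S_i$ and $\pi_i$, so $\pi_i$ influences both $Y_i$ and the distribution of $S_{i+1}$. The cleanest workaround is to condition on the entire trajectory $(S_0,S_1,\ldots)$ of packet positions—which is independent of whether individual attempted edges belong to $\Fin$ once we fix the positions—and then apply a concentration inequality to the failure-indicator sum at each hop separately, taking a union bound. Handling the hops in which $|H_i| = O(\log n)$ (where \cref{lem:hop-shrinking-h} no longer forces decay) requires a separate, trivial bound: the contribution is at most $O(\log n)$ per hop, which is absorbed in the final $O(\sqrt{n}\log n)$.
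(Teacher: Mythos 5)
Your plan takes a genuinely different route from the paper, but it has a real gap in the concentration step. The paper does not condition on packet positions at all. For each of the $O(\log n)$ global permutations $\pi_j$, it directly bounds the \emph{total} number of failed consecutive pairs $|\{i : (\pi_j(i), \pi_j(i+1)) \in \Fin\}|$, a quantity that upper bounds your $Y_j$ regardless of where the packets happen to sit. It does so by revealing $\pi_j$ position by position, bounding each conditional probability by $p_i := \min\{f_{\pi_j(i)} / r(i),\, 1\}$ with $r(i) = (n-1) - i$, showing $\sum_i p_i = O(\sqrt{n})$ via a careful interval decomposition of the range of $r(i)$ (this decomposition is essential because $p_i$ can be $\Theta(1)$ near the end of the permutation), and then majorizing the dependent count by a sum of independent Bernoullis via Lemma 1.19 of \cite{AD11}. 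A union bound over the $O(\log n)$ permutations then gives $O(\sqrt{n}\log n)$.

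Your argument instead conditions on $S_i$, computes $\mathbb{E}[Y_i]$, and then invokes a per-hop Chernoff bound that rests on the claim that the indicators $\mathbb{1}\{(v, \pi_i(v)) \in \Fin\}$ for $v \in S_i$ are negatively associated. That claim is where the plan fails. The standard NA result for a uniform random permutation treats $(\pi(1), \ldots, \pi(m))$ as an NA vector, and NA transfers only through \emph{monotone} functionals of disjoint coordinates; your indicators (that the successor of $v$, or the image of $v$, lands in a node-specific set $F_v$) are not monotone, so the preservation theorem does not apply. Worse, these indicators can in fact be \emph{positively} correlated. For example, over the $24$ permutations of $\{1,2,3,4\}$ with $F_1 = \{2\}$ and $F_3 = \{4\}$, one has
\[
\Pr[\,\text{succ}(1)=2 \wedge \text{succ}(3)=4\,] \;=\; \tfrac{1}{12} \;>\; \tfrac{1}{16} \;=\; \Pr[\text{succ}(1)=2]\cdot\Pr[\text{succ}(3)=4],
\]
and the same positive correlation appears if one uses $\pi(1), \pi(3)$ in place of the successors. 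Since NA implies pairwise negative correlation, NA does not hold here, and the Chernoff tail you invoke is unjustified.

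The plan can be repaired in two ways: adopt the paper's sequential-revelation plus stochastic-domination argument (which also shows your conditioning on $S_i$ is unnecessary for the stated bound), or replace the NA step with a permutation concentration inequality in the style of Maurey/McDiarmid via bounded differences (swapping two positions changes $Y_i$ by $O(1)$), which gives a per-hop bound of $O(\sqrt{n \log n})$ and hence a comparable, slightly weaker, final bound. Your computation that $\sum_i \mathbb{E}[Y_i] = O(1)$ is correct but does not help, since you apply the tail at $t = \Theta(\sqrt{n})$ anyway; the only missing piece is a valid concentration argument, which is precisely what the paper supplies.
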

\begin{proof}
First, consider some randomly selected permutation $\pi$ of the nodes $V \setminus \{d\}$. Define $X_i$ to be the r.v.~indicating whether the edge $(\pi(i), \pi(i+1)) \in \Fin$ and let $f_i$ denote the number of failed inner edges at $\pi(i)$, all for $1 \leq i \leq n-2$. 
One can see the construction of $\pi$ as follows. First, $\pi(1)$ is chosen at random, then we sequentially sample nodes without replacement to continue to the permutation and alongside determine the value of the $X_i$. When following this approach  $P[X_i = 1 | X_0, ... ,X_{i-1}] \leq \min \{ f_i / r(j) ,1 \} =: p_i$, where $r(j) := (n-1) - i$ denotes the number of remaining nodes that may be sampled by $\pi(i)$. Note that $p_i$ bounds $X_i$ independently of  any $X_j$ with $j < i$ as we crudely assume that all failed edges reach into nodes that are still open for sampling, i.e. nodes that are not already in the set $\{\pi(j) | 1 \leq j < i\}$. Let now $I_k := ( n/ \log^k n , n/\log^{k+1} n]$ and $L := \log n / \log \log n$. Observe, for $0 \leq k \leq L -1$ it holds that $\sum_{r(i) \in I_k} p_i \leq \min \{\log^{k+1} n , n/ \log^k n\}$. Then, 
 \[ 
	\sum_{i=1}^{n-2} p_i = \sum_{k=0}^{L-1} \sum_{r(i) \in I_k} p_i \leq \sum_{k=0}^{(1/2) L - 1} \log^{k+1} n + \sum_{k=(1/2)L}^{L-1} \frac{n}{\log^k n} .
 \]
 Both these sums can be represented by a geometric series of the form $ \sqrt{n} \cdot \sum_{k=0} \log^{-k} n$ when shifting the indices accordingly. It is easy too see that  their value can be bounded by $2 \sqrt{n} (1 + o(1))$ in total. If we now define the independent Bernoulli trials $X_i^*$ with $P[X_i^* = 1] = p_i$, then for $X^* := \sum_i X_i^*$ it holds that $E[X^*] < 2 \sqrt{n} (1 + o(1))$. According to Lemma 1.19 of \cite{AD11} $X = \sum_i X_i$ is majorized by $X^*$, which can be bounded using Chernoff bounds. The result follows by applying the union bound and considering that according to \cref{lem:hop-no-inner-max-hops} only inner edge failures can lead to a hop count of $E_2$ w.p.~$(1-n^{-3})$.
\end{proof}

As all packets in $S$ have hop count at least $E_2$, only the \emph{local} permutations $\pi_{v,j}$ are used as part of our failover strategy. While this leads to nodes possibly receiving multiple packets of the same hop value, the number of initial packets lies in $O(\log n \sqrt{n})$ only. We give a detailed proof to the following statement in \cref{appendix2}.

\begin{lemma}
\label{lem:hop-S-maxhops}
	\begin{enumerate}
		\item Fix some arbitrary packet in $S$. Then, it reaches the destination after at most $C_2$ hops for $C_2 < 3 \log_{1/\alpha} n$ w.p.\ at least $1-O(n^{-3})$.
		\item Each node in $S$ is reached by at most $O(\sqrt{\log n})$ packets in total and w.p.~at least $1 - O(n^{-3})$.
	\end{enumerate}	 
\end{lemma}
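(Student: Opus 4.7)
The plan is to prove the two statements in order, paralleling the strategies used in \cref{lem:interval-paths} for part~1 and in \cref{lem:hop-no-inner} for part~2, but adapted to the local-permutation phase that governs packets with hop count $\ge E_2$.

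For part~1, consider a fixed packet $p \in S$ currently located at node $v$ with hop count $h \ge E_2$. It is forwarded to the first reachable entry of $\pi_{v,h}$, a uniformly random permutation of $V\setminus\{d\}$ independent of everything the adversary knows. By the same calculation as in \cref{lem:interval-paths}, the probability that this next node lies in $V_B$ is at most $|V_B|/(n-1-f_v)$; since every node has at most $f_v \le |\Fin| + 1 \le \gamma n + 1$ failed incident edges and $\varepsilon + \gamma \le \alpha$, this quantity is bounded above by $\varepsilon/(1-\gamma) \le \alpha$. Because the hop counter strictly increases, no $(\text{node},\text{hop})$ pair is ever consulted twice, so the random choices on consecutive hops use \emph{independent} local permutations. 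Thus the probability of visiting $C_2$ bad nodes in a row is at most $\alpha^{C_2}$, and setting $C_2 = 3\log_{1/\alpha} n$ makes this at most $n^{-3}$; once the packet reaches a good node, the destination edge delivers it to $d$ on the next hop.

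For part~2, fix an arbitrary node $v$ and let $L(v)$ denote the total number of visits of $S$-packets to $v$. By \cref{lem:hop-S-packets} there are $M = O(\sqrt{n}\log n)$ packets in $S$, and by part~1 each follows a trajectory of length at most $C_2 = O(\log n)$ w.h.p. For a fixed packet $i$ and a fixed hop level $h$, the probability that $i$ is at $v$ at hop $h$ is $O(1/n)$: conditioning on the predecessor $u'$ at hop $h-1$, the next node equals $v$ with probability at most $1/(n-1-f_{u'})=O(1/n)$, since $\pi_{u',h-1}$ is uniform and independent of the packet's prior history. Writing $L(v) = \sum_{i,h} X_{i,h}$ for the indicators $X_{i,h}$ that packet $i$ visits $v$ at hop $h$, the expected load is $\mu := E[L(v)] = O(M C_2 / n) = O(\log^2 n/\sqrt n) = o(1)$. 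The Chernoff-style tail $\Pr[L(v) \ge k] \le (e\mu/k)^k$ with $k = C\sqrt{\log n}$ for a sufficiently large constant $C$ then yields $\Pr[L(v) \ge C\sqrt{\log n}] \le n^{-(C/2)\sqrt{\log n}} \le n^{-3}$; a union bound over the $n$ nodes finishes part~2 with probability $1-O(n^{-2})$.

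The main obstacle is the limited dependence between packet trajectories: two packets that ever coincide at the same $(\text{node},\text{hop})$ pair thereafter consult the same local permutations and merge for the rest of their path, breaking naive independence of the $X_{i,h}$. I would handle this either by showing negative association of the indicators, in the spirit of the hypergeometric argument used in \cref{lem:hop-shrinking-h}, or by conditioning on the multiset of $S$-starting positions and observing that with only $M = O(\sqrt n \log n)$ packets distributed among $n$ possible host nodes per hop level, coincidences are rare enough that a direct moment-based argument recovers the independent-case tail.
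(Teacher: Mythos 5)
Part 1 of your argument is essentially the paper's: on each hop past $E_2$ a fresh independent local permutation is consulted, the probability of landing on a bad node is $\le \alpha(1+o(1))$, and $C_2 = 3\log_{1/\alpha}n$ consecutive bad landings has probability $\le n^{-3}$. That part is fine.

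For part 2 there is a genuine gap, and it is precisely the issue you flag at the end but do not resolve. The indicators $X_{i,h}$ are \emph{positively} correlated, not negatively: once two packets coincide at the same $(\text{node},\text{hop})$ pair, they follow identical trajectories thereafter, so if one of them visits $v$ at a later hop, the other certainly does too. This rules out the negative-association route (your suggestion (a)) — the hypergeometric argument used in the proof of \cref{lem:hop-shrinking-h} concerns a single snapshot of positions, whereas here the correlation is across hops and is in the wrong direction. Consequently the naive Chernoff bound $\Pr[L(v)\ge k]\le(e\mu/k)^k$ with $\mu=o(1)$ is not valid as written; if it were, it would in fact give load $O(1)$ rather than $O(\sqrt{\log n})$, which already signals that the calculation is ignoring the merging.

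The paper resolves this by introducing \emph{packet bundles}: groups of packets co-located with the same hop counter, which travel as a unit from then on. It tracks (i) the initial bundle sizes ($O(\sqrt{\log n})$, from \cref{lem:hop-no-inner}), (ii) the number of initial bundles ($O(\sqrt n \log n)$, from \cref{lem:hop-S-packets}), (iii) how often bundles merge (each bundle merges only $O(1)$ times w.h.p., so final bundle size remains $O(\sqrt{\log n})$), and (iv) how many bundles visit a fixed node (majorized by $O(1)\cdot B(3\log_{1/\alpha}n,\,O(\log n/\sqrt n))$, giving $O(1)$ bundles per node w.h.p.). The load at $v$ is then (number of bundles at $v$) $\times$ (maximum bundle size) $= O(1)\cdot O(\sqrt{\log n})$. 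Your option (b) — conditioning on starting positions and exploiting that coincidences are rare — is the right intuition, but without the bundle-merging accounting it does not yield a valid tail bound; in particular, you still need to argue that no bundle merges $\omega(1)$ times and that no single hop causes $\omega(1)$ simultaneous merges, which are the technical steps the paper carries out.
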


We established that in both systems, $S^{(in)}$ and $S$, each node has a load of $O(\sqrt{\log n})$. \cref{thm:hop} follows accordingly.

\newcommand{\bibd}{$\mathtt{A\text{-}CASA}$\xspace}
\newcommand{\squareone}{$\mathtt{Square1}$\xspace}

\newcommand{\deter}{$\mathtt{A\text{-}Det}$\xspace}
\newcommand{\rand}{$\mathtt{A\text{-}PRNB}$\xspace}

\newcommand{\intervald}{$\mathtt{Interval\text{-}D}$\xspace}
\newcommand{\permd}{$\mathtt{ThreeP\text{-}D}$\xspace}

\newcommand{\intervalsid}{$\mathtt{Interval\text{-}ID}$\xspace}
\newcommand{\permsid}{$\mathtt{ThreeP\text{-}ID}$\xspace}

\section{Simulations}
\label{sec:simul}

To complement our theoretical analysis, we compare an adapted version of our protocols against other state-of-the-art local  failover strategies \cite{DBLP:journals/ton/ChiesaNMGMSS17,infocom19casa} in the widely deployed Clos datacenter topology \cite{clos,singh2015jupiter}. More precisely, we consider the special case of the Clos topology with $3$ layers, sometimes simply referred to as fat-tree. All source code used to derive the results in this section can be found on GitHub \cite{simsource}.

For our experiments we considered 8 different protocols in total, which can be described as follows.

\noindent{\textbf{Our own protocols:}} Abbreviated by \permd, and \intervald we consider variants of our \emph{3-Permutations} and \emph{Intervals} protocol from \cref{sec:destination,sec:interval}, adapted to the Clos topology. Additionally, we consider two further variants of these protocols denoted by \permsid and \intervalsid. The ending -$\mathtt{ID}$ indicates that for these protocols, we select the permutations which are used for forwarding not only depending on the destination of the packet but also the inport from which the packet arrives. We employ these additional variants of our protocol as the reference protocols require support for destination and inport-based forwarding. A detailed description of these protocols is given in  \cref{sec:clos-adaptation}.

\noindent{\textbf{Related Approaches:}} We consider the state-of-the art local failover protocols \emph{DetCirc}, \emph{PRNB},  \emph{CASA} and \emph{SquareOne} \cite{infocom19casa}. Throughout our experiments, we refer to them as \deter, \rand, \bibd and \squareone, respectively. The first two protocols are slightly modified versions of the \emph{HDR-Log-K-Bits} and \emph{Bounced-Rand-Algo} originally presented in  \cite{DBLP:journals/ton/ChiesaNMGMSS17}, and versions of all of these protocols have also been evaluated using simulations in~\cite{infocom19casa}.
The first three protocols have in common that they are so-called \emph{arborescence-based} routing protocols. To route a packet toward some destination $d$ in a topology which is $\ell$-connected, the protocols use a set of pre-computed sub-trees $\{T_0, T_1, ... , T_{\ell-1}\}$ called arborescences. Each such tree has $d$ as its root, consists of all nodes in the topology and does not share any directed edges with other trees in the set. Packets then start on some arborescence $T_i$ and are routed along the edges in the tree until they hit the destination. In case the packet arrives at a failed edge, another arborescence $T_j$ with $j\neq i$ is selected along which the packet may continue its path towards $d$. This procedure is repeated until the packet arrives at $d$. The aforementioned protocols differ in the way this alternate arborescence $T_j$ is selected in case a failed edge is encountered. In \deter, $j$ is selected deterministically and set to $(i+1) \mod \ell$. In \rand, $j$ is selected uniformly at random out of $\{0,1,..., \ell-1\} \setminus \{i\}$. Finally,  \bibd uses a sophisticated pre-computed matrix which is constructed via so-called balanced incomplete block designs (BIBD) to deterministically select $j$. 

The \squareone protocol operates differently. In this protocol, packets with destination $d$ and source $s$ are routed over one of the $\ell$ shortest edge-disjoint paths from $s$ to $d$. At first, the packet attempts to follow the shortest such path to reach $d$.
However, in case a failed edge is encountered, the packet needs to traverse back to $s$ and follow the next-shortest path instead. This is repeated until the destination $d$ is reached.

\subsection{Engineering Protocols for the Clos Topology}
\label{sec:clos-adaptation}

In this section we discuss the required adaptation of our protocols to be employed in the Clos topology as well as the computation of the arborescences and edge-disjoint shortest-paths required by the related protocols. We start with a short definition of the Clos topology, which is required to explain the required modifications to our own protocols.

The Clos topology with 3 layers consists of $k/2  \cdot k/2 + k\cdot k = \Theta(k^2)$ nodes (or routers), each having at least $k$ ports. These nodes are partitioned into $k/2$ \emph{blocks} and $k$ \emph{pods}, which we assume to be numbered from $1$ to $k/2$ and $1$ to $k$, respectively. Each block contains exactly $k/2$ many nodes, which we again assume to be numbered from $1$ to $k/2$. Each pod consists of two sets of $k/2$ nodes each. The first set we call the $\emph{top}$ nodes while the second set we call $\emph{bottom}$ nodes. Bidirectional links are inserted such that in each pod, the top and bottom nodes form a complete bipartite graph. Additionally, the $i$-th top node in each pod is connected to all nodes in the $i$-th block and vice versa.
Endpoints using this communication infrastructure are connected at the remaining $k/2$ open links at each bottom node. Throughout our experiments, we focus on forwarding flows which have bottom nodes as source and destination.

\noindent{\textbf{Adapting our Own Protocols:}} We start with an explanation of the \intervald protocol.  Note that this adapted protocol has been analyzed theoretically after we performed our first experiments in a follow-up paper \cite{BES21}. First we need to partition the Clos topology into more fine-grained pieces. The top as well as the bottom nodes are split into $K:= \lfloor\log(k) \rfloor$ consecutive partitions, deviating in size by at most $\pm 1$. Similarly, the $k/2$ block nodes in each block are also split into $K$ many intervals. Furthermore, consider nodes in the $b$-th block. Each such node is connected to the $b$-th node in the top layer of every pod. We also assume that, for each block, the set of such top nodes is partitioned into $K$ what we call $\emph{vertical intervals}$.
In the remaining description, when we say that some node $v$ forwards a packet with destination $d$ to a random node in an interval, then we assume that the random selection follows the approach described in \cref{sec:interval}. That is, the node $v$ consults a random permutation of all nodes in this interval (one such permutation is precomputed for each destination $d$) and then forwards the packet to the first node $u$ in this permutation, such that the link $(v,u)$ is not failed.
When following the \intervald protocol, the forwarding rules for a packet with destination $d$ arriving at a node $v$ then depend on whether $v$ is a block, top or bottom node. First, assume that $v$ is a bottom node in the $p$-th pod. Then, if $v=d$ nothing needs to be done. Otherwise, let $j$ denote the interval of $v$. In such case, $v$ forwards the packet to a random top node of pod $p$ that also lies in the $j$-th interval. Second, if $v$ is a top node in some pod $p$ (in more detail, let $v$ be the $i$-th top node in $p$). Then, if the destination also lies in $p$, the node $v$ attempts to forward the packet directly to $d$. In case this link is failed, it instead forwards the packet to a random bottom node of $p$ in interval $(j+1) \mod K$. This way, as soon as the packet lies on some node in the pod of the destination, it will ping-pong between bottom and top nodes until it reaches a top node whose link to $d$ is not failed. However, if $v$ is not in the pod of the destination, then it forwards the packet to a random node in the $j$-th interval of the $i$-th block. In the latter case $v$ is a node in some interval $j$ of a block (let it be the $b$-th block). Each node in block $b$ is connected to the $b$-th top node in the pod of the destination. In case the link to this top node is not failed, $v$ forwards the packet over this link. Otherwise, $v$ forwards the packet to a random top node in the vertical interval $(j+1) \mod K$.

Following above description, the \permd protocol, which can be seen as an adaptation of the \emph{3-Permutation} protocol, is now easy to define. This definition is very similar to the \intervald protocol except for two differences: First, we set $K=1$, i.e., we don't split the block, top, or bottom nodes into further intervals. Second, we assume that each node $v$ does not only store one permutation but $6$ permutations per destination $d$. Depending on the hop count of the arriving packet one of these permutations is selected. More precisely, the $i$-th permutation with $i \in \{1,2,...,5\}$ is consulted for packets with hop count $[(i-1) \cdot \lfloor \log k \rfloor~,~i\cdot \lfloor \log k \rfloor]$. The $6$-th permutation is used for packets with any larger hop count. The reason that we use $6$ permutations instead of the $3$ as defined in \cref{sec:destination} is that empirical results indicated that it is beneficial to switch permutations frequently. This, however, requires the employment of additional permutations to avoid the creation of permanent forwarding loops. 

Finally, the two additional variants called \permsid and \intervalsid. They follow the exact same definition as their protocols of similar name with only one exception. We now assume that nodes store additional sets of these randomly generated permutations: one per combination of possible destination address \emph{and} inport. In contrast to the basic \permd and \intervald protocols, which select the permutations used for forwarding solely depending on the destination address, we assume this selection to be performed randomly for each pair of destination and inport.

\noindent{\textbf{Employing Related Approaches:}}
While the related protocols we consider are applicable to general graphs, they require the pre-computation of sets of arborescences as well as edge-disjoint shortest paths for every possible destination $d$. In order to compute the set of arborescences required by the first group of protocols, we employed the \emph{round-robin} approach with swaps presented in \cite{dsn19}. Even with this efficient approach, the calculation of these arborescences for a single destination node $d$ of the Clos topology with $k=80$ required more then 20 hours on the MACH-2 supercomputer (\url{https://www3.risc.jku.at/projects/mach2/}). Some preliminary tests showed that the computation time is roughly proportionate to $k^6$, which prevented us from using larger topologies in our tests. Similarly, the edge-disjoint shortest paths required for the \squareone protocol took more than $15$ minutes to compute for a single destination $d$. To avoid costly recomputation of these structures for multiple destination nodes, we only computed them once for some fixed destination $d$ on the bottom layer. We then applied an isomorphism to map these structures towards the remaining possible destination nodes on the bottom layer.

\subsection{Experiments}

We conducted two different types of experiments, both in the Clos topology with $k=80$ (consisting of 8000 nodes).

\noindent\textbf{Experiment 1: Performance under all-to-one model.} In the first experiment we examine the performance of the fast rerouting protocols under the all-to-one traffic pattern. Results are given in \cref{fig:all-to-one}. Each simulation was started by first failing a $p$ fraction of random edges. We then select a random node on the bottom layer and let each other bottom layer node send one unit of flow towards this destination. After the routing procedure is complete, we measure the maximum edge load as well as the average amount of hops required by any flows to reach the destination. To obtain the results for the plots in  \cref{fig:all-to-one} we perform these simulations for increasing values of $p$ (ranging from $p=0.0$ to $p=0.2$ in steps of $0.02$), repeat the simulation for each $p$ value $40$ times and report the average of the resulting maximum edge load and hop values. 

\begin{figure}[tb]
    \centering
    \includegraphics[scale=0.6]{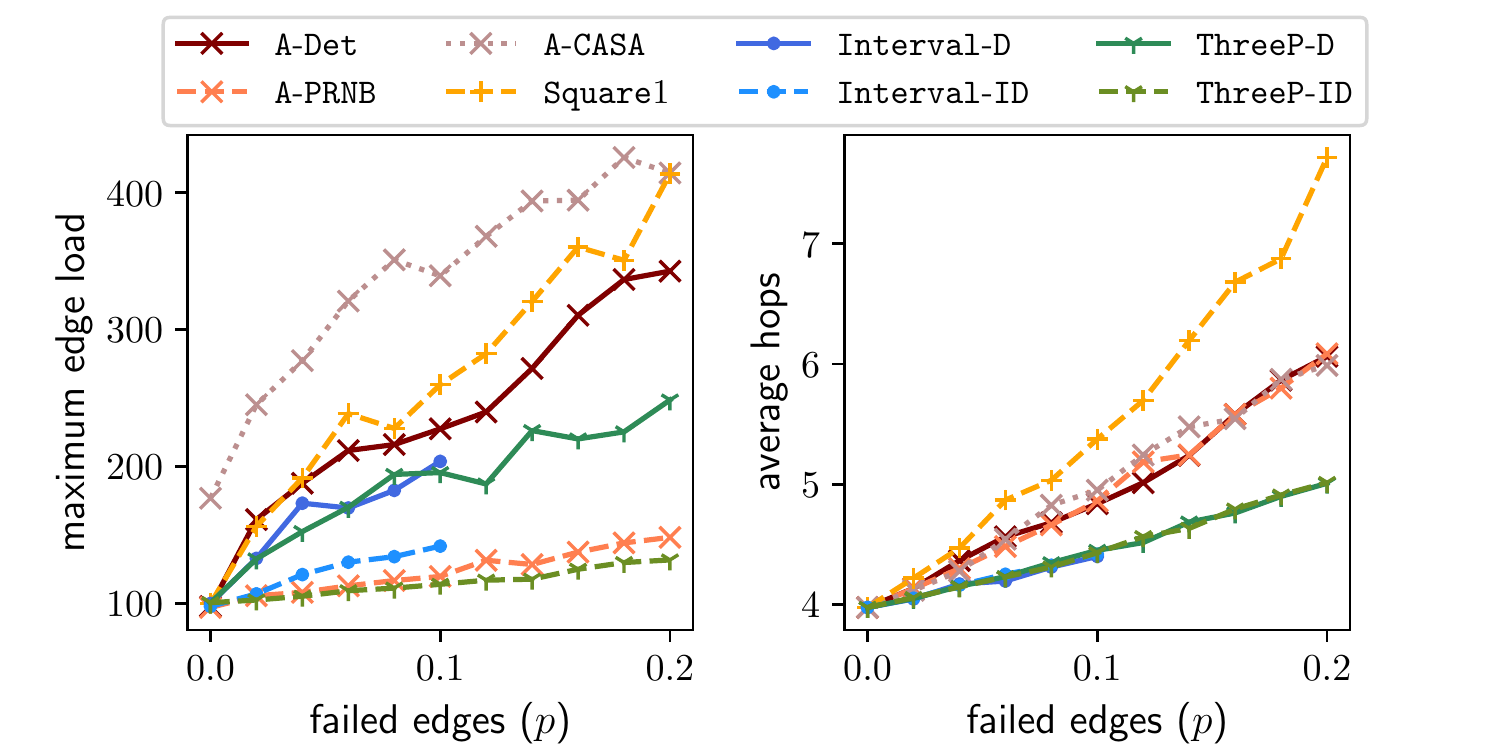}
    \caption{Average maximum edge load and average amount of hops required to reach the destination when performing \emph{all-to-one} routing.}
    \label{fig:sim-all-to-one}
\end{figure}

As we can see in the plot on the left-hand side of \cref{fig:sim-gravity}, all protocols besides \bibd accumulate similar loads in case no edges are failed. We suspect that the worse behavior of \bibd stems from the fact that the Clos network we consider is $40$-connected. As explained in \cite{infocom19casa} this protocol works best when this value is a prime power. In our setting, this led to multiple arborescences being (almost) completely unused, which causes the load to be distributed unevenly. It is also important to note that the results for our \intervald and \intervalsid protocols are reported till $p=0.1$. We do this because a higher amount of edge failures causes forwarding loops to be created or prevents the routing strategy from working (nodes get disconnected from all nodes in the adjacent interval). We emphasize that this is related to the small value of $k=80$ we consider in these experiments, which leads to intervals of size only $6$ (see the description of the adaptation of our protocol in \cref{sec:clos-adaptation}).   For values of $k>200$ we could not observe this behavior even when failing a $p>0.25$ fraction of all edges. When increasing the amount of failed edges in the system, only the randomized \rand protocol is able to compete with our \permsid protocol, which further illustrates the strength of randomized approaches when dealing with edge failures.

When looking at the right-hand side of \cref{fig:sim-all-to-one} we can see the average number of hops required for packets to reach the destination. There we can observe three regimes. First, we have the \squareone protocol which performs the worst. While it starts from a near-optimal average hop count of roughly $4$ (note that almost all source nodes in our all-to-one routing approach are $4$ hops away from the destination), it increases more rapidly than the other approaches. This is an inherent weakness of this protocol, as each time a packet with source $s$ encounters an edge failure on the way to $d$, it goes all the way back to $s$ and attempts another route. In the second regime, we can observe all the aborescence based approaches. We think the reason that these protocols perform worse than our protocol is the following: assume that a packet traversing some arborescence $T_i$ encounters a failed edge while being at node $v$. It will now continue from node $v$ in another arborescence $T_j$. Now, it is possible that the position of $v$ in $T_i$ is much closer to the destination than in $T_j$ and hence, this packet possibly needs to traverse a long path inside $T_j$ even though it was close to the destination before switching to this tree. In contrast, our protocols avoid sending packets on a long detour. If a packet resides at distance $x$ from the destination and cannot proceed closer due to a failed edge, it is forwarded between nodes in distance $x$ and $x+1$ until it is able to move to a node in distance $x-1$.

\begin{figure}[tb]
    \centering
    \includegraphics[scale=0.6]{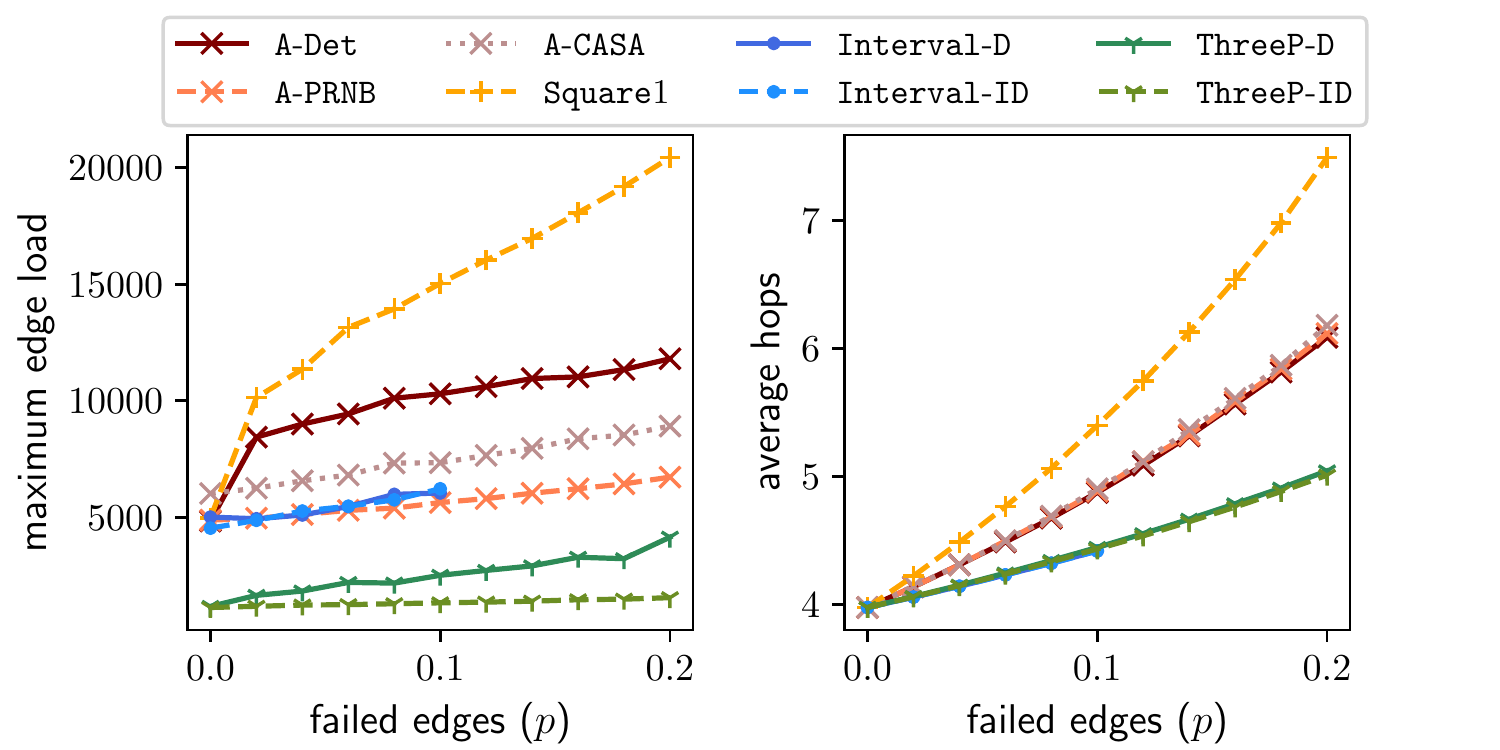}
    \caption{Average maximum edge load and average amount of hops required to reach the destination when performing \emph{gravity} routing.}
    \label{fig:sim-gravity}
\end{figure}

\noindent\textbf{Experiment 2: Performance under gravity model.} We next consider the performance of the protocols under a \emph{gravity model} \cite{gravitymodel}, describing the demands between any pair of two nodes on the bottom layer of the Clos topology. We set the parameters of this model such that, in expectation, the demand between each such pair of nodes is $1$. Throughout the routing process we then send a flow from each bottom node to every other bottom node (all-to-all). This flow is assigned a weight corresponding to the demand. For this experiment we also generalize our notion of node and edge load: load is now defined as the sum of all weights of flows that cross the node or edge.  Besides these changes, the steps for generating the results of our second experiment are the same as in the first.

As we can see on the left-hand side of \cref{fig:sim-gravity} our \permd and \permsid protocols achieve lower maximum load than all other approaches (including our \intervald and \intervalsid approaches). We suspect that this is because in the \permd and \permsid protocols, packets are forwarded according to random permutations that span over a larger amount of nodes. In the \intervald and \intervalsid the selection of possible forwarding partners in any hop is more constrained. This advantage is further emphasized on in the \permsid protocol as in this protocol nodes are only guaranteed to make the same forwarding decisions for flows that arrive from the same inport and have the same destination. This makes it less likely for multiple flows to follow the same path.
When it comes to the protocols of related work, we suspect the higher load to stem from the selection of the set of arborescences and also the selection of the set of shortest-paths for \squareone. There exists some discussion around the efficient construction of a good set of aborescences for a fixed destination \cite{dsn19}. However, it seems to be an open question how to select such sets for multiple destinations with the goal of optimizing load in many-to-many traffic patterns.
We found that these structures need to be sufficiently edge-disjoint from those used for other destinations. Otherwise, load imbalances occurred when performing gravity routing. In particular, we encountered a ``bad'' set of arborescences, which lead to an maximum node load of more than 200,000 in all arborescence-based protocols even if no edge is failed. In contrast, a better set of arborescences lead to a maximum node load of at most 40,000.
For all of our experiments, we employed the best sets of arborescences we could create in order to minimize the load values w.r.t. all the approaches we considered.

On the right-hand side of \cref{fig:sim-gravity} we can see the average number of hops required by the flows to reach their destination. From the point of view of any bottom node, the Clos topology looks exactly the same. Because of this inherent symmetry, the average hop values are very similar as in our all-to-one routing experiments. The reasoning for the three regimes of average hop values that can again be observed in this setting is the same as in the all-to-one experiments.

\noindent{\textbf{Takeaway:}}
We observe in our experiments that the \permsid protocols guarantee, both, the lowest maximum load as well as average hop count in both experiments. In case it is possible to match the source address, destination address, input and hop count in the packet header it is advisable to use this protocol. In case a slim set of forwarding decisions is required it makes sense to consider the \intervald protocol. It still outperforms all approaches of related work considering the average hop count and only gets outperformed by \rand when it comes to ensuring limited load. However, if resilience against a large amount of failures is required, then it is only advisable to utilize this strategy in topologies of sufficiently large size.

\section{Conclusion and Future Work}
\label{sec:conclusion}
\label{sec:failures}

In our work we considered three different local failover protocols.
Starting with the  \emph{3-Permutations} protocol, we presented a protocol which guarantees a load of at most $O(\log^2 n \log \log n)$ w.h.p.~even if $\alpha \cdot n$ edges are failed for some constant $0 < \alpha < 1$.
Next, we presented the \emph{Intervals} protocol. While this protocol comes with slightly lower theoretical resilience of $O(n / \log n)$, it can also be used in settings where the hop count in the packet header cannot be matched.
It achieves a maximum load of $O(\log n \log \log n)$ w.h.p. 
Finally, we presented a third approach, the \emph{Shared-Permutations} protocol, which is mostly of theoretical interest. In case the nodes have access to some shared permutation, we show that the maximum load can be reduced to at most $O(\sqrt{\log n})$ w.h.p.

We also adapted two of the above approaches to the Clos topology with 3 layers and performed emperical tests. These tests indicate that our protocols ensure a low edge load in this more practical setting as well.
The \permsid variant of the \emph{3-Permutations} protocol even outperforms all related approaches when it comes to the maximum edge load. Additionally, all our adapted protocols ensure that packets reach their desired destination in less hops than in related approaches in case multiple edge failures occur.
It remains an open question whether the above protocols can also be adopted to more general topologies.

Throughout all our theoretical results we assumed an \emph{oblivious} adversary which selects the set of failed edges. However, some of our algorithms can easily be extended
to deal also with \emph{more adaptive adversaries}: to defeat adversaries
who aim to infer network-internal loads (e.g., leveraging physical access or
using tomographic techniques), we can simply regenerate 
random permutations periodically.
That is, the \emph{3-Permutations} and \emph{Intervals} algorithms have the attractive property 
that they allow to regenerate
such permutations quickly, locally, and without coordination:
each node can independently regenerate random numbers
over time to enhance security. Note, this also allows our algorithms to recover if the low probability event 
occurs, in which higher loads than the ones specified in our theorems emerge.

There are also slight variations of our failure model which we did not fully cover in our analysis. For example, the case of a \emph{lower amount of edge failures}.
In our analysis, we assumed that the adversary destroys
up to either linear or $O(n / \log n)$ many edges. We believe that a lower amount of edge failures affects the performance of the algorithms as follows. 
If $n^{1-\delta}$ edges are destroyed for some constant $\delta >0$ it can be shown (by a slightly adapted repetition of the existing analysis) that all three of our algorithms guarantee w.h.p.\ a maximum load of $O(1)$ on most of the nodes. 

Finally, it may also make sense to analytically consider the case of \emph{randomly selected edge failures} instead of assuming the existence of a malicious adversary. In the context of hardware failures or power outages, it might make sense to model the set of failed edges as selected u.a.r. out of all $\sim n^2 / 2$ edges.  If only a small amount of edges is failed in such a way (i.e., $O(n)$), then all but $O(\log n)$ nodes may forward their flows directly to the destination w.h.p. It can be shown that all our algorithms then induce a congestion of $O(1)$ w.h.p. The question about a higher amount of edge failures and also the expected resilience against failures of such type remains an open question.

\appendices

\section{}
\label[appendix]{appendix}

\begin{proof}[Proof of \cref{lem:level-shrinking}]
We show by induction on $i$ that with probability $p_j > (1- i \cdot  n^{-4})$ it holds for all $j\leq i$ that  $X_j < C \log n$. Clearly w.p.~$1$ it holds that $X_0 = 1  < C \log n$. Now, consider step  $(i+1)$ and observe that by \cref{lem:dest-level-binomial} 
\begin{align}
\label{eq:level-shrinking}
	E[X_{i+1}&|X_0, ... , X_{i}  < C \log n] \nonumber \\
	&< (|V_B| - \polylog n) \frac{X_i}{|V'| - \polylog n} \nonumber \\
	&< \varepsilon \cdot X_i \cdot \left(1 + \frac{\polylog n}{n} \right),
\end{align}
where we used the induction hypothesis and that $i< \log^2 n$ as well as $|V_B| \leq \varepsilon n$.
By \cref{lem:dest-level-binomial}, $X_{i+1}$ follows a binomial distribution. Therefore we apply Chernoff bounds with $\delta \approx \varepsilon^{-2} - 1$ (c.f \cref{sec:destination-notation}) and obtain that $P[X_{i+1} \geq C \log n~|~X_1, ... , X_i < C \log n] < n^{-4}$ for large enough constant $C$. Hence we established the desired property w.p.~$p_{i+1} \geq p_i \cdot (1 - n^{-4}) > (1 - (i+1)n^{-4})$ and conclude the induction. Using that $P[X_{i+1} \leq n] = 1$, for $i < \log^2 n$ we can bound $E[X_{i+1}] < E[X_{i+1}~|~X_0, ... ,X_{i} < C\log n] + n \cdot i \cdot n^{-4}$, since $P[X_i < n] = 1$. This, together with (\ref{eq:level-shrinking}), yields the first statement of the lemma.
\end{proof}
\medskip

\begin{proof}[Proof of \cref{lem:dest-with-inner-edges}]
According to \cref{lem:dest-log-inner-failures} we know that the failure of inner edges causes  $O(\log n)$ nodes to redirect their incoming packets via another path than given in $G'$. That is, for some node $v \in V_B$ the outgoing edge $(v , \pi_v(1))$  is replaced by some $(v, \pi_v(j))$ where $\pi_v(j)$ is the first non-failed link in $v$'s permutation. Note that this corresponds to relocating the whole subtree rooted in $v$ over to $\pi_v(j)$. Such a subtree will be called a \emph{relocated subtree} in the following.

There are now two major points to be checked. First,  two or more of these relocated subtrees may connect to each other, potentially causing a new cycle to be created. According to \cref{lem:tree-bound} and \cref{lem:tree-bound-cycle} the size of these subtrees may not exceed $O(\log n \cdot \log \log n)$. As the adversary can fail at most $\gamma n$ edges at $v$, a relocated subtree hits another relocated subtree w.p.~at most $O( \polylog  n / n)$. Since only $O(\log n)$ relocated trees exist, it is easy to see that each such subtree is hit by at most $O(1)$ other relocated subtrees, and at most a chain of length $3$ of subtrees may exist. Furthermore at most $O(1)$ cycles are formed this way, each consisting of a total of $O(1)$ relocated subtrees. All these statements hold w.h.p.~

The second question is whether the subtrees docking onto another component that already existed in $G'$ may increase their size substantially. Now, the adversary may only fail $\gamma n$ inner edges, hence the probability that a relocated subtree hits some arbitrary but fixed component is $O( \polylog n / n)$ as well. Again, using for example the PDF of the binomial distribution, it is easy to see that no component is hit by more than $O(1)$ such relocated subtrees directly w.h.p.~Since each of these subtrees are of size $O(\log n \cdot \log \log n)$, the size of the components does not change asymptotically.

Finally, it follows from \cref{lem:tree-height} and \cref{lem:tree-bound-cycle} that no relocated subtree can be of height larger than $C' \log n$. In the worst case a structure of size $C' \log n$ is docked on by a chain of $3$ relocated subtrees. Hence, the paths of packets not trapped in cycles is elongated to at most $4 C' \log n$.
\end{proof}
	\medskip
\begin{proof}[Proof of \cref{lem:dest-cycle-load}]
	We start with the proof of the first statement. Per assumption each component is entered by $O(\log n \cdot \log \log n)$ flows in total. These flows travel through the component until they either hit a cycle or some good node.  As every node in $G^{''(i)}$ has out-degree of either $0$ or $1$, each flow can hit a node $v$ that lies outside the cycle at most once.  Therefore, no such node receives more than $O(\log  n \log \log n)$ load in total.

	For the second statement, consider a flow that starts at a component which contains a cycle. As soon the flow reaches the cycle, it causes the accumulation of one unit of load at each node inside the cycle per turn. In the worst-case, the cycle is of length $O(1)$ and the flow accumulates $O(\log n)$ load at any node on the cycle before it exits $G^{''(i)}$. Together with the assumption of $O(\log n \log \log n)$ initial flows being present, the result follows.
\end{proof}
\medskip
\begin{proof}[Proof of \cref{lem:dest-cycle-load-assumption}]
	According to \cref{lem:dest-with-inner-edges},w.h.p., the only way for flows to enter $G^{''(i+1)}$ is if they spun in a cycle in $G^{''(i)}$. Denote now $S$ as the set of cycle nodes. Then, the flows enter $G^{''(i+1)}$ at the positions of these nodes $S$. According to \cref{lem:dest-with-inner-edges}, $|S| = O(\log^2 n)$, and we first consider where the set of nodes $S$ is located in $G^{'(i+1)}$. As the permutations are independent, the set $S$ is distributed uniformly among the nodes in $G^{'(i+1)}$. When sequentializing the placement of the set $S$, the probability to hit some fixed structure in $G^{'(i+1)}$ is $O(\polylog n / n)$ independently. As $|S| = O(\log^2 n)$ it is easy to see that at most $O(1)$ nodes of $S$ are located in the same structure w.h.p.~
	
Next, we account for inner edge failures and consider where these $S$ cycle nodes are located in $G^{''(i+1)}$. 
We already established that the inner failures cause up to $O(\log n)$ subtrees to be relocated among the structures in $G'$.
In the proof of \cref{lem:dest-with-inner-edges} we argued that $O(1)$ many subtrees relocate to the same structure and only $O(1)$ of them combine together to a new type of structure that contains a cycle. Clearly, each subtree contains only $O(1)$ cycle nodes as well. Therefore, also in $G^{''(i+1)}$ all components contain $O(1)$ nodes of $S$. Note that according to the assumption in \cref{lem:dest-cycle-load}, the structure of any $v \in S$ was entered by at most $O(\log n \log \log n)$ flows in $G^{''(i)}$. Clearly, only a flow that entered at some point can exit. Therefore each node $v \in S$ serves as entry point for at most $O(\log n \cdot \log \log n)$ flows in $G^{''(i+1)}$.	
\end{proof}

\medskip
\begin{proof}[Proof of \cref{lem:g3-enough}]
	To show this statement we follow the path a packet $p$ takes starting from some fixed $v \in V_B$. Similar to the proof of \cref{lem:dest-path-length} we look at the path starting at $v$ and uncover the edges which are traversed by packet $p$ in $G^{''(1)}$. Along the lines of the proof of \cref{lem:dest-path-length} we get that the resulting path forms a cycle with w.p.~at most $4 \log_{1/\alpha} (n) / n$. If this event indeed occurs, the packet $p$ will continue to traverse this cycle until it reaches a hop count of $C_1$. From this point on, the  $\pi_v^{(2)}$ is consulted to forward the packet $p$. By construction of the graph $G^{''(2)}$ it follows that we can now use this graph to  describes pack the packet $p$ takes in the next $C_1$ hops. A repetition of above argument again yields that $p$ is forwarded in a cycle in $G^{''(2)}$  w.p.~ at most $\polylog n / n$. The same argument can be applied a third time w.r.t $G^{''(3)}$ in case $p$ is also trapped in a cycle in $G^{''(2)}$. 
	Because $C_1$ is chosen such that the packet $p$ reaches destination $d$ unless it is stuck in such a forwarding loop (see \cref{lem:dest-with-inner-edges}), it follows that $p$ will reach the destination unless it is stuck in forwarding loops in $G^{''(1)}, G^{''(2)}$ and $G^{''(3)}$. As each of these graphs is induced by independent and randomly generated permutations, it follows that the probability for this event is at most $O(\polylog n / n^3)$.
\end{proof}


\section{}
\label[appendix]{appendix2}

\begin{proof}[Proof of \cref{lem:hop-no-inner-max-hops}]
	The exponential shrinking in  \cref{lem:hop-shrinking-h}  implies that at most $R := O(\log n)$ packets remain after $2\log_{1/\varepsilon} n = O(\log n)$ steps w.h.p.~
	Assume that at this point we are still using global permutations as part of our failover strategy. Similar as in the proof of \cref{lem:hop-shrinking-h} this set $S$ of nodes hosting these $R$ packets is uniformly distributed among the graph. Now, fixing some packet $p \in R$ we determine the the probability that it resides on a node $v \in V_B$. Again, there exist dependencies between the packets as each node $v \in V_B$ can host at most $1$ packet. At this point however only $O(\log n)$ packets $\{p_1 , ... , p_{O(\log n)}\}$ remain. Therefore any packet $p_i$ resides on a node $v \in V_B$ w.p.~at most
	\[
		\frac{|\Fout|}{n -1 - O(\log n)} < \varepsilon \left(1 + o(1) \right)
	\]  independently.
Hence, the packet $p \in R$ reaches the destination after $3 \log_{1/\varepsilon} n$ further steps w.p.~$1-n^{-3}$.
\end{proof}
\medskip
\begin{proof}[Proof of \cref{lem:hop-S-maxhops}]
	In \cref{lem:hop-S-packets} we established that at most $O(\log n \cdot \sqrt{n})$ packets start in the system $S$. All of these start with hop count $E_2$.  According to \cref{lem:hop-no-inner} at most $O(\sqrt{\log n})$ packets are initiated by the same node $v \in S^{(out)}$. In the following we call a set of packets being at the same node with the same hop counter a (packet-)\emph{bundle}. We can crudely assume that $O(\log n \cdot \sqrt{n})$ packet bundles of size less than $O(\sqrt{\log n})$ are distributed among the nodes of $S$ initially and thereby upper bound the accruing load. 
Each time, the next hop of a packet $p$ is at a node $v \in V_B$ w.p.~at most $| \Fout | / ((n-1) - |\Fin|) < \alpha (1 + o(1))$.  Hence, no packet performs more than $3 \log_{1 / \alpha} n$ hops as on each hop a new independent permutation is used.
	
	Next we analyze our process hop-by-hop, starting with hop count $E_2$.
When sequentializing the target selection of the bundles, each bundle hits at least one other bundle w.p.~$O(\log n / \sqrt{n})$ independently. Using the PDF of the binomial distribution it is easy to see that w.p.~$1-O(n^{-3})$ no more than $O(1)$ bundles combine with each other throughout the same hop. Similarly one may see that in $O(\log n)$ hops no fixed bundle merges more than $O(1)$ times. 
	
	Finally fix a node $v \in V$ and consider some hop $i \geq E_2$. Now, $v$ is hit on the $i$-th hop by some fixed bundle w.p.~less than $1 / ((n-1) - |\Fin|) = O(1/n)$. As at most $O(\log n \sqrt{n})$ bundles exist in the system, it follows that at most $O(1)$ bundles hit $v$ in the same hop. Also, the probability that $v$ is hit at least by one bundle in hop $i$ is smaller than
	\begin{equation*}
		1 - \left(1 - O \left( \frac{1}{n} \right)\right)^{O(\log n \sqrt{n})} < O \left( \frac{\log n}{\sqrt{n}}\right).
	\end{equation*}
Independent from any previous hops, $v$ receives no packets w.p.~ $1-O(\log n / \sqrt{n})$, and $O(1)$ load w.p.~$1-O(n^{-3})$. Therefore, the total load $v$ receives may be majorized by $O(1) \cdot B(3 \log_{1/\alpha} n, O(\log n / \sqrt{n}))$ with expected value $o(1)$. Looking again at the PDF yields that this load lies in $O(1)$ w.p.~$1-O(n^{-3})$, and applying the union bound we obtain that no node receives more than $O(1)$ bundles in total.
	
	Putting everything together we have that \begin{enumerate*} \item no bundle combines enough times to exceed size of $O(\sqrt{\log n})$, and \item no node is visited by more than $O(1)$ bundles in total \end{enumerate*}.
\end{proof}

\bibliographystyle{IEEEtran}
\bibliography{IEEEabrv,references}

\end{document}